\newtheorem{theorem}{Theorem}
\newtheorem{lemma}{Lemma}
\newcommand{\ket}[1]{|#1\rangle} 
\newcommand{\bra}[1]{\langle#1|} 
\newcommand{\ketpsi}{| \psi \rangle} 
\newcommand{\brapsi}{\langle \psi |} 
\newcommand{\tketpsi}{| \tilde{\psi} \rangle} 
\newcommand{\C}{\ensuremath{\mathbb{C}}}
\newcommand{\N}{\ensuremath{\mathbb{N}}}
\newcommand{\R}{\ensuremath{\mathbb{R}}}
\newcommand{\complexi}{\ensuremath{\mathrm{i}}}
\newcommand{\oneoverroottwo}{ \frac{1}{\sqrt{2}} }
\newcommand{\chshnexpression}{\frac{1}{4 {n \choose 2}} \sum_{1 \leq i < j \leq n}  \brapsi \left( A_i \otimes B_{ij} + A_i \otimes B_{ji} + A_j \otimes B_{ij} - A_j \otimes B_{ji} \right) \ketpsi } 
\newcommand{\poweroftwo}{2^{\lfloor n/2 \rfloor}}
\newcommand{\sumij}{\sum_{1 \leq i < j \leq n}}
\newcommand{\chshn}{CHSH$(n)$ }
\newcommand{\abpsi}{\ensuremath{A_i, B_{jk}, \ketpsi} }
\newcommand{\tildeabpsi}{$\tilde{A}_i, \tilde{B}_{jk}, \ket{\tilde{\psi}}$ }
\newcommand{\aobservables}{$A_i, \: i=1, \dots n$ }
\newcommand{\tildeaobservables}{$\tilde{A}_i, \: i=1, \dots n$ }
\newcommand{\bobservables}{$B_{jk}, \: j \neq k \in \{1, \dots n\}$ }
\renewcommand{\S}{\ensuremath{\mathcal{S}}}
\newcommand{\ai}{A_i \otimes I}
\newcommand{\bkl}{I \otimes B_{kl}}
\newcommand{\tai}{\tilde{A}_i \otimes I}
\newcommand{\tbjk}{I \otimes \tilde{B}_{jk}}
\newcommand{\tbkl}{I \otimes \tilde{B}_{kl}}
\newcommand{\strategyspace}{$\C^{d_A} \otimes \C^{d_B}$ }
\newcommand{\canonicalstrategyspace}{$\C^{2^{\lceil n/2 \rceil}} \otimes \C^{2^{\lceil n/2 \rceil}}$ }
\newcommand{\cda}{\C^{d_A}}
\newcommand{\cdb}{\C^{d_B}}
\newcommand{\cdacdb}{\ensuremath{\C^{d_A} \otimes \C^{d_B}}}
\newcommand{\signij}{sign(i, j_1, \dots j_n)}
\newcommand{\signjk}{sign(j_1, \dots j_n, k)}
\newcommand{\signjl}{sign(j_1, \dots j_n, l)}
\newcommand{\aj}{A_1^{j_1} \dots A_n^{j_n}}
\newcommand{\taj}{\tilde{A}_1^{j_1} \dots \tilde{A}_n^{j_n}}
\newcommand{\tajtensori}{\tilde{A}_1^{j_1} \dots \tilde{A}_n^{j_n} \otimes I}
\newcommand{\tajmodk}{\tilde{A}_1^{j_1} \dots \tilde{A}_k^{j_k \oplus 1} \dots  \tilde{A}_n^{j_n}}
\newcommand{\tajmodl}{\tilde{A}_1^{j_1} \dots \tilde{A}_l^{j_l \oplus 1} \dots  \tilde{A}_n^{j_n}}
\newcommand{\ajpsi}{ A_1^{j_1} \dots A_n^{j_n} \otimes I  \ketpsi}
\newcommand{\tildeajpsi}{ \tilde{A}_1^{j_1} \dots \tilde{A}_n^{j_n} \otimes I  \ket{\tilde{\psi}}}
\newcommand{\tildepsiajinv}{\bra{\tilde{\psi}} \left( \tilde{A}_1^{j_1} \dots \tilde{A}_n^{j_n} \otimes I \right)^{\dagger} }
\newcommand{\jinzeroonetothen}{(j_1 \dots j_n) \in \{ 0, 1 \}^n}
\newcommand{\avsumj}{\frac{1}{\sqrt{2^n}} \sum_{(j_1 \dots j_n) \in \{ 0, 1 \}^n }}
\newcommand{\intertwiningoperatorexpression}{ \avsumj \ajpsi \tildepsiajinv }
\newcommand{\ajpsimodi}{ \signij  A_1^{j_1} \dots A_i^{j_i \oplus 1} \dots  A_n^{j_n} \otimes I  \ketpsi }
\newcommand{\ajpsimodk}{ \signjk A_1^{j_1} \dots A_k^{j_k \oplus 1} \dots  A_n^{j_n} \otimes I  \ketpsi }
\newcommand{\ajpsimodl}{ \signjl  A_1^{j_1} \dots A_l^{j_l \oplus 1} \dots  A_n^{j_n} \otimes I  \ketpsi }
\newcommand{\ajbklpsi}{A_1^{j_1} \dots A_n^{j_n} \otimes B_{kl}  \ketpsi}
\begin{document}

\title{The structure of optimal and nearly-optimal quantum strategies for non-local XOR games}
\author{Dimiter Ostrev\thanks{Department of Mathematics, Massachusetts Institute of Technology.}}
\maketitle

\begin{abstract}
We study optimal and nearly-optimal quantum strategies for non-local XOR games. First, we prove the following general result: for every non-local XOR game, there exists a set of relations with the properties: (1) a quantum strategy is optimal for the game if and only if it satisfies the relations, and (2) a quantum strategy is nearly optimal for the game if and only if it approximately satisfies the relations. Next, we focus attention on a specific infinite family of XOR games: the CHSH(n) games. This family generalizes the well-known CHSH game. We describe the general form of CHSH(n) optimal strategies. Then, we adapt the concept of intertwining operator from representation theory and use that to characterize nearly-optimal CHSH(n) strategies. 
\end{abstract}

\section{Introduction}

Non-local XOR games are a framework used to study the correlations that result from measuring two parts of an entangled quantum state using two spatially separated devices, each capable of performing one of several possible measurements. 

When we think of a non-local XOR game, we imagine two people, usually called Alice and Bob, in two spatially separated laboratories, and unable to communicate with each other. Alice and Bob choose a strategy for the game by choosing a particular setup for their respective measurement devices, and a particular entangled quantum state shared between them. Alice and Bob's aim in choosing their strategy is to maximize a given linear functional acting on the space of correlations. The linear functional represents the rules of the particular XOR game Alice and Bob are playing; the higher the value of the linear functional on the correlations produced by Alice and Bob's strategy, the better Alice and Bob are doing. 

It has long been known that for certain non-local XOR games, Alice and Bob can achieve a higher value using measurements of a shared entangled state than anything Alice and Bob could do using only a classical shared random string (see, for example, the surveys \cite{werner2001bell, brunner2014bell}). This has attracted interest both from the point of view of foundations of physics, and from the point of view of applications. From the point of view of foundations of physics, the advantage of quantum strategies over classical ones has been central in the discussion about local realism (see, for example, the survey \cite{clauser1978bell}). From the point of view of applications, there have been many proposals for using quantum entanglement as a resource in information processing tasks, such as performing distributed computation with a lower communication cost (see, for example, the survey \cite{buhrman2010nonlocality}), teleportation of quantum states \cite{bennett1993teleporting} and the extension to a full scale computation by teleportation scheme \cite{gottesman1999demonstrating}, and quantum cryptography (see, for example, the survey \cite{gisin2002quantum}). 

In the study of non-local XOR games, the optimal and nearly-optimal quantum strategies are interesting objects for several reasons. First, their behavior is maximally far away from the behavior of classical strategies. Second, applications often involve setups related to the optimal strategies. Third, the optimal quantum strategies represent the boundary of the non-local correlations that are achievable in quantum mechanics, and are therefore interesting from the perspective of foundations of quantum mechanics. And finally, the optimal and nearly optimal quantum strategies for XOR games have interesting mathematical structure, with connections to semi-definite programming and representation theory.  

In this paper, we study the optimal and nearly optimal quantum strategies for non-local XOR games. First, we present the following general result: for every non-local XOR game, there exists a set of relations such that 
\begin{enumerate}
\item A strategy is optimal for the game if and only if it satisfies the relations. 
\item A strategy is nearly-optimal for the game if and only if it approximately satisfies the relations. 
\end{enumerate}
The coefficients of the relations can be computed efficiently by solving a semi-definite program and finding the eigenvalues and eigenvectors of a positive semi-definite matrix. The precise statement is in Theorem \ref{thm:equality_conditions} and the proof in Section \ref{ch:relations_for_strategies}. 

The result in Theorem \ref{thm:equality_conditions} continues the line of work in references \cite{tsirel1987quantum,cleve2004consequences,slofstra2011lower}. In \cite{tsirel1987quantum}, a correspondence was established between the quantum non-local correlations and inner products of vectors in real euclidean space. Later, in \cite{cleve2004consequences}, it was noticed that a semi-definite program can be associated to each non-local XOR game. In reference \cite{slofstra2011lower}, the dual semi-definite program was used to obtain the so-called marginal biases for a non-local XOR game. In this paper, we use the dual semi-definite program to derive the set of relations for optimal and near-optimal quantum strategies of a given XOR game.

In the second part of this paper, we focus on a specific infinite family of non-local XOR games: the \chshn games, $n \in \N, \, n \geq 2$ introduced in \cite{slofstra2011lower}. For this family, we solve the system of relations mentioned above, and precisely characterize the optimal and nearly-optimal \chshn strategies. 

The interest in precisely characterizing optimal and nearly-optimal quantum strategies for XOR games comes from recent results about information processing with untrusted black-box quantum devices. In these results, one or more parties attempt to perform an information processing task, such as quantum key distribution, randomness generation, or distributed computation, by interacting via classical inputs and outputs with quantum devices that cannot be trusted to perform according to specification. The devices may not be trusted for example for fear of malicious intent, as in quantum cryptography, or, to take another example, the manufacturing process used to make the devices may be unreliable and prone to errors. 

The task of doing information processing with untrusted black-box devices and being confident in the result may at first appear daunting. However, there have recently been proposals of protocols for quantum key distribution with untrusted devices, for randomness generation with untrusted devices, and for a protocol in which a classical verifier commands two untrusted quantum provers to perform a full-scale quantum computation. References to results of this type may be found for example as follows: for quantum key distribution, the original proposals are \cite{mayers1998quantum,mayers2003self}, a more recent result is \cite{vazirani1810fully}, and the survey \cite{brunner2014bell} lists a number of other results on p.34-35; for randomness generation, the survey \cite{brunner2014bell} lists a number of results on p.33; the protocol in which a classical verifier commands two untrusted quantum provers to perform a full-scale quantum computation is developed in reference \cite{reichardt2012classical}.

All of these protocols rely on mathematical results that have been given the name of self-testing or entanglement rigidity (see \cite{reichardt2012classical,mckague2012robust,miller2012optimal} for three examples of such results, with different proof techniques in each). These results are a characterization of optimal and nearly-optimal strategies for the CHSH game (or close cousins of the CHSH game). The CHSH game is the first member of the family \chshn, $n \geq 2$, mentioned above. 

In this paper we obtain a precise characterization of optimal and nearly-optimal strategies for all the \chshn XOR games. The techniques used in the proof differ from the self-testing results mentioned above; here we use ideas form representation theory. 

It has been noticed previously \cite{tsirel1987quantum, slofstra2011lower} that representation theory is well-suited to describing exactly optimal quantum strategies for non-local XOR games. In the case of exactly optimal \chshn strategies, the contribution of this paper is to give an explicit and direct statement and proof of a classification theorem for the \chshn exactly optimal strategies. The precise statement is in Theorem \ref{thm:chshn_optimal_strategies}, and the proof in Section \ref{sec:chshn_optimal_strategies}. 

The situation with nearly-optimal strategies is more subtle; the representation theory techniques that work so well in the exact case are difficult to generalize to nearly-optimal strategies (we will say more about the difficulty later). An attempt to use representation theory in this context has been made in \cite{slofstra2011lower}, but the error bounds obtained there depend on the dimension of the Hilbert space used for the strategy; in the context of untrusted black box devices, this dimension may be arbitrarily large. 

In this paper, we take a different approach to characterizing nearly-optimal quantum strategies. The key insights are to adapt the concept of intertwining operator from representation theory, to notice the importance of a certain subspace of the space of a given strategy and to adapt the group averaging technique from representation theory. The precise statement of the result for \chshn near-optimal strategies is in Theorem \ref{thm:chshn_near_optimal_strategies}, and the proof in Section \ref{sec:chshn_near_optimal_strategies}. 

The remainder of this paper is structured as follows: in Section \ref{ch:preliminaries}, we present notation, concepts and known facts that are necessary background for the rest of the paper. In Section \ref{ch:overview_of_results}, we give the precise statements of the results proved in this paper. Sections \ref{ch:relations_for_strategies}, \ref{sec:chshn_optimal_strategies}, \ref{sec:chshn_near_optimal_strategies} contain the proofs of the main results. In Section \ref{ch:open_problems} we discuss open problems and possible future work. 

\section{Preliminaries} \label{ch:preliminaries}

The goal of this section is to cover notation, concepts and known facts that are used throughout the rest of the paper. 

\subsection{A linear bijection between $\C^{d_A} \otimes \C^{d_B}$ and $Mat_{d_A, d_B} (\C)$}\label{subsec:a_linear_bijection}

We consider the space $\C^{d_A}$ with its standard basis denoted by $\ket{i}, \: i=1, \dots d_A$ and the space $\C^{d_B}$ with its standard basis denoted by $\ket{j}, \: j=1, \dots d_B$. 

With this notation, we can write the standard basis of $\C^{d_A} \otimes \C^{d_B}$ as \[\ket{i} \otimes \ket{j}, \:  i=1, \dots d_A, \: j=1, \dots d_B\] and we can write the standard basis of $Mat_{d_A, d_B} (\C)$ as \[\ket{i} \bra{j}, \:  i=1, \dots d_A, \: j=1, \dots d_B\]

We define a linear bijection \[ \mathcal{L}\: : \:  \C^{d_A} \otimes \C^{d_B} \longrightarrow Mat_{d_A, d_B} (\C) \] by defining the action of $\mathcal{L}$ on the standard basis as \[ \mathcal{L} \left( \ket{i} \otimes \ket{j} \right) = \ket{i} \bra{j} \] and extending to the whole space by linearity; that is, \[ \mathcal{L} \left( \sum_{ij} w_{ij} \ket{i} \otimes \ket{j} \right) = \sum_{ij} w_{ij} \ket{i} \bra{j} \]

We collect some useful properties of $\mathcal{L}$ in the following lemma. 

\begin{lemma}\label{lemma:properties_of_linear_bijection}
Let $\ket{u} \in \C^{d_A}$, $\ket{v} \in \C^{d_B}$, $\ket{w} \in \C^{d_A} \otimes \C^{d_B}$, $A \in Mat_{d_A} (\C)$, $B \in Mat_{d_B} (\C)$. Then, 
\begin{itemize}
\item $ \mathcal{L}(\ket{u} \otimes \ket{v}) = \ket{u} \bra{v^*} $ and consequently, by linearity, $ \mathcal{L}\left( \sum_{l=1}^k \ket{u_l} \otimes \ket{v_l} \right) = \sum_{l=1}^k \ket{u_l} \bra{v_l^*} $
\item $ A \mathcal{L}(\ket{w}) = \mathcal{L} ( A \otimes I \ket{w} ) $
\item $ \mathcal{L}(\ket{w}) B^T = \mathcal{L}( I \otimes B \ket{w} ) $
\item $ \| \mathcal{L}(\ket{w}) \|_F = \| \ket{w} \| $
\end{itemize}
\end{lemma}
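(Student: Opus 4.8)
The plan is to verify each of the four properties directly from the definition of $\mathcal{L}$ on the standard basis, exploiting linearity to reduce everything to the action on simple tensors.

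For the first property, I would write $\ket{u} = \sum_i u_i \ket{i}$ and $\ket{v} = \sum_j v_j \ket{j}$, so that $\ket{u} \otimes \ket{v} = \sum_{ij} u_i v_j \, \ket{i} \otimes \ket{j}$; applying $\mathcal{L}$ gives $\sum_{ij} u_i v_j \, \ket{i}\bra{j}$. The point is to recognize this as $\ket{u}\bra{v^*}$, since $\bra{v^*} = \sum_j v_j \bra{j}$ (complex conjugation turns the ket coefficients $v_j^*$ of $\ket{v^*}$ back into $v_j$ in the bra). The consequence for sums of simple tensors is then immediate by linearity of $\mathcal{L}$.

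For the second and third properties, I would again reduce to $\ket{w} = \ket{i} \otimes \ket{j}$ by linearity. For the second, $A \mathcal{L}(\ket{i}\otimes\ket{j}) = A\ket{i}\bra{j}$, while $\mathcal{L}((A \otimes I)(\ket{i}\otimes\ket{j})) = \mathcal{L}((A\ket{i}) \otimes \ket{j}) = (A\ket{i})\bra{j}$, using the first property (or directly expanding $A\ket{i}$ in the standard basis); these agree. For the third, $\mathcal{L}(\ket{i}\otimes\ket{j}) B^T = \ket{i}\bra{j} B^T$ and $\mathcal{L}((I \otimes B)(\ket{i}\otimes\ket{j})) = \mathcal{L}(\ket{i} \otimes (B\ket{j})) = \ket{i}\bra{(B\ket{j})^*}$; the identity $\bra{(B\ket{j})^*} = \bra{j} B^T$ is the small computation to check, following from $\bra{(B\ket{j})^*} = \sum_k \overline{(B\ket{j})_k^*}\,\bra{k} = \sum_k B_{kj}\bra{k} = \bra{j}B^T$. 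For the fourth property, writing $\ket{w} = \sum_{ij} w_{ij}\,\ket{i}\otimes\ket{j}$, both $\|\ket{w}\|^2$ and $\|\mathcal{L}(\ket{w})\|_F^2 = \mathrm{tr}(\mathcal{L}(\ket{w})^\dagger \mathcal{L}(\ket{w}))$ equal $\sum_{ij} |w_{ij}|^2$ since $\{\ket{i}\bra{j}\}$ is an orthonormal basis of $Mat_{d_A,d_B}(\C)$ under the Frobenius inner product.

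None of the steps presents a serious obstacle; the only point requiring minor care is tracking complex conjugation correctly in the first and third properties, i.e. distinguishing $\ket{v^*}$ from $\ket{v}$ and verifying $\bra{(B\ket{j})^*} = \bra{j}B^T$ rather than some conjugate-transpose variant. Everything else is bookkeeping with the standard bases and linearity.
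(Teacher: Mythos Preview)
Your proposal is correct and follows exactly the approach the paper indicates: expand everything in the standard basis and verify the identities coefficient by coefficient. The paper itself does not spell out the details beyond this instruction, so your write-up is simply a fuller version of the same argument.
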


All of these properties can be proved by expanding the relevant vectors and matrices with respect to the standard basis and checking that the appropriate identity in the coefficients holds. 

The notation $\| \; \|_F$ used above denotes the Frobenius norm of a matrix: for an $m \times n$ matrix $A$, 
\[ \|A\|_F = \sqrt{\sum_{i=1}^m \sum_{j=1}^n |a_{ij}|^2} = \sqrt{Tr A^\dagger A} \]

\subsection{Non-local XOR games and their quantum strategies}

In a non-local XOR game two players, traditionally called Alice and Bob, are separated in space and play cooperatively without communicating with each other. A third party, called a Referee or sometimes a Verifier, runs the game and decides whether Alice and Bob win or lose. 

Formally, a non-local game consists of two finite sets $S$ and $T$, a probability distribution $\pi$ on $S \times T$, and a function $V \: : \: S \times T \rightarrow \{-1, 1\}$. The game proceeds as follows: \begin{enumerate}
\item The referee selects a pair $(s,t) \in S \times T$ according to the probability distribution $\pi$. 
\item The referee sends $s$ as a question to Alice and $t$ as a question to Bob. 
\item Alice replies to the referee with $a \in \{-1, 1\}$ and Bob replies to the referee with $b \in \{-1, 1\}$
\item The referee looks at $V(s,t)ab$. If $V(s,t)ab=1$, then Alice and Bob win, and if $V(s,t)ab=-1$ then Alice and Bob lose. Notice that $V(s,t)=1$ means that Alice and Bob must give matching answers to win and $V(s,t)=-1$ means Alice and Bob must give opposite answers to win.\footnote{The name "XOR game" is related to the following: if we write $a=(-1)^{a'}, \, b=(-1)^{b'}$ for $a' , b' \in \{0,1\}$, then $V(s,t)ab=V(s,t) (-1)^{a' \oplus b'}$ so that whether Alice and Bob win or lose depends on the XOR of the bits $a'$ and $b'$} 
\end{enumerate}

It is convenient to summarize all the information for an XOR game into a $|S| \times |T|$ matrix $G$ such that $G_{st} = \pi (s,t) V(s,t)$. The matrix $G$ contains all the information about the game: the set $S$ is the set of row indices of $G$, the set $T$ is the set of column indices of $G$, the probability distribution $\pi$ can be recovered by $\pi(s,t) = |G_{st}|$, the function $V$ can be recovered by $V(s,t) = sign ( G_{st})$. Thus, we can identify non-local XOR games with matrices $G$ normalized so that $\sum_{st} |G_{st}| = 1$. 

A quantum strategy $\mathcal{S}$ for an XOR game consists of a state space $\cdacdb$, a state $\ketpsi \in \cdacdb$, and $\pm 1$ observables $\{A_s \, : \, s \in S \}$ on $\cda$ and $\{B_t \, : \, t \in T \}$ on $\cdb$. The interpretation of this strategy is the following: Alice and Bob share a bipartite quantum system with state space $\cdacdb$. Prior to the beginning of the game, the system has been prepared in the state $\ketpsi \in \cdacdb$. On receiving question $s$, Alice measures observable $A_s$ and uses the outcome, $1$ or $-1$, as her answer to the referee. Similarly, on receiving question $t$, Bob measures observable $B_t$ and uses the outcome, $1$ or $-1$, as his answer to the referee.

We would like to have a way to evaluate how well a given strategy $\mathcal{S}$ does for a given XOR game $G$. We do so using the success bias $\beta(G, \S)$ defined by: \[\beta(G, \mathcal{S}) = \sum_{s \in S} \sum_{t \in T} G_{st} \brapsi A_s \otimes B_t \ketpsi \] The success bias is linearly related to the probability $\omega(G, \S)$ of winning $G$ using strategy \S: \[ \beta(G, \S) = 2 \omega(G, \S) -1\]

We define the quantum success bias $\beta(G)$ for an XOR game $G$ to be the supremum of the success bias over all quantum strategies: \[\beta(G) = \sup_{\S} \beta(G, \S) \] 

We define an optimal strategy for the XOR game $G$ to be a strategy $\S$ such that \[ \beta(G, \S) = \beta(G) \] and we define an $\epsilon$-optimal strategy to be a strategy $\S$ such that \[ (1-\epsilon) \beta(G) \leq \beta(G, \S) \leq \beta(G) \]

\subsection{The \chshn XOR games}\label{subsec:definition_of_chshn}

Here, we look at the infinite family of XOR games \chshn, $n \in \N, \, n \geq 2$ introduced in \cite{slofstra2011lower}. 

For the \chshn game, the set $S$ of possible questions for Alice is $\{1, \dots, n\}$ and the set $T$ of possible questions for Bob is the set of ordered pairs $\{ij \: : \: i,j \in \{1, \dots, n\}, \: i \neq j\}$. 

The referee selects questions according to the following probability distribution $\pi(s,t)$: 
\begin{enumerate}
\item The referee selects a pair $i,j$ uniformly at random among all ${n \choose 2}$ pairs such that $1 \leq i < j \leq n$. 
\item The referee selects either $i$ or $j$ as question for Alice, and either $ij$ or $ji$ as question for Bob; the four possibilities are equally likely. 
\end{enumerate}

The rule for winning or losing $V(s,t)$ is determined like this: to win, Alice and Bob must give matching answers on questions $(i, ij)$, $(i, ji)$ and $(j, ij)$, and give opposite answers on questions $(j, ji)$. 

As in the previous subsection, it is convenient to summarize all information about the \chshn game in a matrix $G$. The matrix $G$ for the \chshn game has $n$ rows and $n(n-1)$ columns. It is most convenient to write the matrix $G$ using Dirac's bra-ket notation. Let $\ket{1}, \dots \ket{n}$ be an orthonormal basis of $\R^n$, and let $\ket{ij}, \: i \neq j \in \{1, \dots n\}$ be an orthonormal basis of $\R^{n(n-1)}$. Then, we can write:
\[ G = \frac{1}{4{n \choose 2}} \sum_{1 \leq i < j \leq n} \Big( \ket{i} \bra{ij} + \ket{j} \bra{ij} + \ket{i} \bra{ji} - \ket{j} \bra{ji} \Big)\]

It was shown in reference \cite{slofstra2011lower} that the quantum success bias for all the \chshn games is $\oneoverroottwo$; that is, \[ \sup_{\abpsi} \chshnexpression = \oneoverroottwo \]

Finally, we note that the first element of the family, CHSH(2), is the usual CHSH game, based on reference \cite{clauser1969proposed}. Thus, the family \chshn is a generalization of the CHSH game.

\subsection{Semi-definite programs}

In this section we cover some terminology and facts about semi-definite programs that will be used later on. We use an abbreviated discussion on semi-definite programs that is sufficient for the purposes of this paper; for a more detailed exposition see, for example, \cite{vandenberghe1996semidefinite}, or the lecture notes \cite{lovazsSDPnotes}. 

Look at the space of real symmetric matrices of a given size. For two such matrices $A$, $B$, we define their inner product \[A \cdot B = Tr \: AB = \sum_{ij} A_{ij} B_{ij} \]

Within the space of real symmetric matrices, we look at the positive semi-definite matrices. We use the notation  $A \succeq 0$ to mean that $A$ is positive semi-definite, and the notation $A \succ 0$ to mean that $A$ is strictly positive definite. This notation also extends in the following way: $A \succeq B$ means that $(A-B)$ is positive semi-definite and  $A \succ B$ means that $(A-B)$ is strictly positive definite. 

A semi-definite program is a constraint optimization problem of the form \[ \sup_{Z \succeq 0, \: F_i \cdot Z = c_i, \, i=1, \dots m} G \cdot Z \] Here $G$, $F_i, \, i=1, \dots m$ are symmetric matrices, and $c_i, \, i=1, \dots m$ are real numbers. We call this semi-definite program the primal. We denote the value of the supremum by $v_{primal}$. 

The dual semi-definite program is \[ \inf_{\sum_{i=1}^m y_i F_i \succeq G} \vec{c} \cdot \vec{y}\] We denote the value of the infimum by $v_{dual}$. 

Next, we introduce some terminology: 
\begin{itemize}
\item A primal/dual feasible solution is one that satisfies the constraints.
\item A primal/dual strictly feasible solution is one that satisfies the constraints, and satisfies the positive semi-definite constraint strictly. 
\item A primal/dual optimal solution is a feasible solution $Z$, respectively $\vec{y}$, such that $G \cdot Z = v_{primal}$, respectively $\vec{c} \cdot \vec{y} = v_{dual}$
\item A primal/dual $\epsilon$-optimal solution is a feasible solution $Z$, respectively $\vec{y}$, such that $G \cdot Z \geq (1-\epsilon) v_{primal}$, respectively $\vec{c} \cdot \vec{y} \leq (1+\epsilon) v_{dual}$.
\item For a primal feasible $Z$ and a dual feasible $\vec{y}$, the quantity \[ \left( \sum_{i=1}^m y_i F_i - G \right) \cdot Z\] is called the duality gap. 
\end{itemize}

We summarize some known facts about semi-definite programs in the following theorem:

\begin{theorem}\label{thm:properties_of_sdp}
Assume throughout that both the primal and the dual have feasible solutions. The following statements hold
\begin{itemize}
\item For a primal feasible $Z$ and a dual feasible $\vec{y}$, the duality gap is non-negative: \[ \left( \sum_{i=1}^m y_i F_i - G \right) \cdot Z \geq 0\] 
\item $\left( \sum_{i=1}^m y_i F_i - G \right) \cdot Z = 0$ if and only if $v_{primal}=v_{dual}$, $Z$ is optimal for the primal and $\vec{y}$ optimal for the dual. This statement is sometimes called "complementary slackness condition". 
\item $v_{primal} \leq v_{dual}$. This statement is sometimes called "weak duality". 
\item If the primal has a strictly feasible solution, then the dual infimum is attained; if the dual has a strictly feasible solution, then the primal supremum is attained.  
\item If at least one of the primal and dual has a strictly feasible solution, then $v_{primal} = v_{dual}$. This statement is sometimes called "strong duality". 
\end{itemize}
\end{theorem}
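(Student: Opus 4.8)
The proof splits naturally into the three elementary duality facts (items~1--3) and the two items that genuinely use the Slater hypothesis (items~4--5). I would dispatch the first block at once. For any primal feasible $Z$ and dual feasible $\vec{y}$, the constraints $F_i \cdot Z = c_i$ give
\[ \left( \sum_{i=1}^m y_i F_i - G \right) \cdot Z = \sum_{i=1}^m y_i (F_i \cdot Z) - G \cdot Z = \vec{c} \cdot \vec{y} - G \cdot Z , \]
so the duality gap equals $\vec{c} \cdot \vec{y} - G \cdot Z$, and it is non-negative by the single fact that $Tr(PQ) \geq 0$ whenever $P = \sum_i y_i F_i - G \succeq 0$ and $Q = Z \succeq 0$ (write $Tr(PQ) = Tr(P^{1/2} Q P^{1/2})$). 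Weak duality is then immediate: $G \cdot Z \leq \vec{c} \cdot \vec{y}$ for every feasible pair, so $v_{primal} \leq v_{dual}$ (both a supremum and an infimum over nonempty sets). For complementary slackness, the identity shows the gap vanishes iff $G \cdot Z = \vec{c} \cdot \vec{y}$; combined with the chain $G \cdot Z \leq v_{primal} \leq v_{dual} \leq \vec{c} \cdot \vec{y}$ this forces all four quantities to coincide, which is exactly the assertion, and the converse direction is immediate from the identity.

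For attainment I would argue by compactness of an optimizing sequence after normalization, using the companion fact that $P \cdot Q > 0$ whenever $P \succ 0$ and $Q \succeq 0$ is nonzero. Take first the claim that a strictly feasible dual forces the primal supremum to be attained. Since both programs are feasible, $v_{primal}$ is finite (bounded below by any feasible value, above by $v_{dual}$), so choose $Z^{(k)} \succeq 0$ with $F_i \cdot Z^{(k)} = c_i$ and $G \cdot Z^{(k)} \to v_{primal}$. If $\| Z^{(k)} \|_F \to \infty$, pass to a subsequence along which $W^{(k)} = Z^{(k)} / \| Z^{(k)} \|_F \to W$; then $\| W \|_F = 1$, $W \succeq 0$, $F_i \cdot W = 0$ for all $i$, and $G \cdot W = 0$, so for a strictly feasible dual point $\vec{y}_0$ the identity gives $\left( \sum_i (y_0)_i F_i - G \right) \cdot W = 0$, contradicting $\sum_i (y_0)_i F_i - G \succ 0$ with $W \succeq 0$ nonzero. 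Hence the sequence is bounded, and a convergent subsequence yields an optimal $Z^{\star}$. The dual direction (strictly feasible primal forces the dual infimum to be attained) is the same argument for $\inf \vec{c} \cdot \vec{y}$ over $\{ \vec{y} : \sum_i y_i F_i \succeq G \}$, with one preliminary step: replace each iterate by its orthogonal projection onto $L^{\perp}$, where $L = \{ \vec{u} : \sum_i u_i F_i = 0 \}$. Finiteness of $v_{dual}$ forces $\vec{c} \perp L$, so the projection preserves both feasibility and objective value, and it rules out a diverging direction because a unit limit $\vec{u}$ of normalized projected iterates would satisfy $\sum_i u_i F_i = 0$ while lying in $L^{\perp}$, hence would be $0$.

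Strong duality is the step I expect to be the real obstacle; I would prove it by a separating-hyperplane argument in $\R^{m+1}$. Assume the primal is strictly feasible at some $Z_0 \succ 0$ (a strictly feasible dual is handled symmetrically). As a preliminary reduction, discard redundant constraints until $F_1, \dots, F_m$ are linearly independent: if $\sum_i \lambda_i F_i = 0$ with $\vec{\lambda} \neq 0$, then primal feasibility forces $\vec{\lambda} \cdot \vec{c} = 0$, so one constraint is implied by the others and can be dropped without changing $v_{primal}$, $v_{dual}$, or strict feasibility. Now fix $\delta > 0$; the point $(\vec{0}, v_{primal} + \delta)$ lies outside the convex set $\mathcal{C} = \{ (F_1 \cdot Z - c_1, \dots, F_m \cdot Z - c_m, \, G \cdot Z - r) : Z \succeq 0, \, r \geq 0 \}$, so there is a nonzero $(\vec{\lambda}, \mu)$ with $\mu (v_{primal} + \delta) \geq \sum_i \lambda_i (F_i \cdot Z - c_i) + \mu (G \cdot Z - r)$ for all admissible $Z, r$. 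Letting $r \to \infty$ yields $\mu \geq 0$; letting $Z$ range over the positive semi-definite cone yields $\sum_i \lambda_i F_i + \mu G \preceq 0$; and if $\mu = 0$ then $\sum_i \lambda_i F_i \preceq 0$ and $\vec{\lambda} \cdot \vec{c} \geq 0$, while pairing with $Z_0 \succ 0$ gives $\vec{\lambda} \cdot \vec{c} = \left( \sum_i \lambda_i F_i \right) \cdot Z_0 \leq 0$ with equality forcing $\sum_i \lambda_i F_i = 0$, i.e.\ $\vec{\lambda} = 0$ by linear independence --- contradicting $(\vec{\lambda}, \mu) \neq 0$. So $\mu > 0$, and $\vec{y} = -\vec{\lambda}/\mu$ is dual feasible with $\vec{c} \cdot \vec{y} \leq v_{primal} + \delta$; letting $\delta \to 0$ gives $v_{dual} \leq v_{primal}$, which with weak duality gives $v_{primal} = v_{dual}$.

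The one genuinely delicate point is the exclusion of a vertical separating hyperplane ($\mu = 0$) in the last step: that is exactly where the Slater (strict feasibility) hypothesis is indispensable, and handling it cleanly is what forces the preliminary reduction to linearly independent constraint matrices (equivalently, working modulo the kernel of $\vec{\lambda} \mapsto \sum_i \lambda_i F_i$). Everything else reduces to the duality-gap identity together with the two positivity facts $Tr(PQ) \geq 0$ for $P, Q \succeq 0$ and $P \cdot Q > 0$ for $P \succ 0$, $0 \neq Q \succeq 0$.
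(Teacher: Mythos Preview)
The paper does not prove this theorem. It is stated in the preliminaries as a summary of ``known facts about semi-definite programs'' and the reader is referred to \cite{vandenberghe1996semidefinite} and \cite{lovazsSDPnotes} for details. So there is no in-paper proof to compare against; your write-up is simply a standard textbook proof supplied where the paper chose to cite one.

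On the substance: items~1--3 are correct and cleanly done via the duality-gap identity. The separating-hyperplane argument for strong duality is the standard one and is fine; your reduction to linearly independent $F_i$ to kill the $\mu=0$ case is exactly the right move. One small elision worth noting is in the dual-attainment half of item~4: you assert that a unit limit $\vec{u}$ of normalized projected iterates ``would satisfy $\sum_i u_i F_i = 0$'', but dividing the dual constraint by the norm only gives $\sum_i u_i F_i \succeq 0$, together with $\vec{c}\cdot\vec{u}=0$ from boundedness of the objective. To close the gap you must pair with the strictly feasible primal point $Z_0 \succ 0$, getting $(\sum_i u_i F_i)\cdot Z_0 = \vec{c}\cdot\vec{u} = 0$ and hence $\sum_i u_i F_i = 0$ by your positivity fact. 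You do exactly this in the primal-attainment half and in the $\mu=0$ exclusion, so the fix is immediate, but as written the sentence skips the step that actually invokes the Slater hypothesis. A second minor point: you dismiss the ``strictly feasible dual'' case of strong duality as symmetric, but the primal and dual here are not in self-dual form (equality constraints versus an LMI), so the symmetry requires a line rewriting the dual as a primal-type SDP before the same argument applies.
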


\subsection{Some facts from representation theory}

In this section we cover a few facts and concepts from representation theory that will be used later on. These facts include properties of anti-commuting $\pm 1$ observables, invariant subspaces and Schur's lemma, and the notion of an intertwining operator. For a more detailed exposition of representation theory, see for example \cite{fulton1991representation} or the lecture notes \cite{Etingof2011RepTheoryLectureNotes}. 

\subsubsection{$2k+1$ anti-commuting $\pm 1$ observables on $\C^{2^k}$}

We give an explicit construction of $2k+1$ anti-commuting $\pm 1$ observables on $\C^{2^k}$ using the isomorphism $\C^{2^k} \cong \underbrace{\C^2 \otimes \C^2 \otimes \dots \otimes \C^2}_{k \, terms}$ and the Pauli matrices. 

Consider the following $2 k +1$ operators on $\C^{2^k} \cong \underbrace{\C^2 \otimes \C^2 \otimes \dots \otimes \C^2}_{k \, terms}$:
\begin{equation}\label{eq:sigma_anticommuting_observables}
\begin{aligned}
 \sigma_{k,1} &= \sigma_x \otimes I \otimes I \otimes I \otimes \dots \otimes I \otimes I \\
 \sigma_{k,2} &= \sigma_z \otimes I \otimes I \otimes I \otimes \dots \otimes I \otimes I \\
 \sigma_{k,3} &= \sigma_y \otimes \sigma_x \otimes I \otimes I \otimes \dots \otimes I \otimes I \\
 \sigma_{k,4} &= \sigma_y \otimes \sigma_z \otimes I \otimes I \otimes \dots \otimes I \otimes I \\
 \sigma_{k,5} &= \sigma_y \otimes \sigma_y \otimes \sigma_x \otimes I \otimes \dots \otimes I \otimes I \\
&\cdots \\
\sigma_{k,2k-1} &= \sigma_y \otimes \sigma_y \otimes \sigma_y \otimes \sigma_y \otimes \dots \otimes \sigma_y \otimes \sigma_x \\
\sigma_{k,2k} &= \sigma_y \otimes \sigma_y \otimes \sigma_y \otimes \sigma_y \otimes \dots \otimes \sigma_y \otimes \sigma_z \\
\sigma_{k,2k+1} &= \sigma_y \otimes \sigma_y \otimes \sigma_y \otimes \sigma_y \otimes \dots \otimes \sigma_y \otimes \sigma_y 
\end{aligned}
\end{equation}
These operators are self-adjoint, unitary, and anti-commute. 

It is known from the representation theory of the Clifford algebra that any collection of $2k$ anti-commuting $\pm1$ observables on $\C^{2^k}$ is equivalent (by conjugation by unitary) to the collection $\sigma_{k,1}, \dots \sigma_{k,2k}$, and any collection of $2k+1$ anti-commuting $\pm1$ observables on $\C^{2^k}$ is equivalent to either $\sigma_{k,1}, \dots \sigma_{k,2k}, \sigma_{k,2k+1}$ or $\sigma_{k,1}, \dots \sigma_{k,2k}, -\sigma_{k,2k+1}$ (the two options are not equivalent because the product of the observables in the first collection is $(-\complexi)^k I$ and the product in the second collection is $-(-\complexi)^k I$). 

\subsubsection{The general form of $n$ anti-commuting $\pm1$ observables on $\C^d$}

It follows from the representation theory of the Clifford algebra that the following holds for $n$ anti-commuting $\pm1$ observables on $\C^d$:

\begin{theorem}\label{thm:the_general_form_of_n_anti_commuting_observables}
Let $A_1, \dots A_n$ be $\pm1$ observables on $\C^d$ such that $A_k A_l + A_l A_k = 0$ for $k \neq l$. Then $d = s \poweroftwo$ for some $s \in \N$, and there is an orthonormal basis of $\C^d$ with respect to which $A_1, \dots A_n$ have block-diagonal form with $\poweroftwo \times \poweroftwo$ blocks and such that 
\begin{itemize}
\item For $n=2 k$, $i=1, \dots 2k$, the diagonal blocks of $A_i$ are all equal to $\sigma_{k,i}$. 
\item For $n=2 k + 1$, $i=1, \dots 2k$, the diagonal blocks of $A_i$ are all equal to $\sigma_{k,i}$, and for $i=2k+1$ some number $s' , \, 0 \leq s' \leq s$ of the diagonal blocks of $A_{2k+1}$ are $ \sigma_{k,2k+1} $ and the other $s-s'$ diagonal blocks are  $ - \sigma_{k,2k+1} $
\end{itemize}
\end{theorem}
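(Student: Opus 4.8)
The plan is to reduce the statement to the known classification facts about collections of $2k$ or $2k+1$ anti-commuting $\pm1$ observables on $\C^{2^k}$ that were recalled just before the theorem, by producing an explicit orthogonal decomposition of $\C^d$ into invariant subspaces each of dimension $\poweroftwo$. First I would treat the case $n = 2k$. The $A_i$ generate a representation of the even Clifford algebra (equivalently, the full matrix algebra $\mathrm{Mat}_{2^k}(\C)$, which is simple); the key algebraic input is that this algebra has a unique irreducible module up to isomorphism, of dimension $2^k = \poweroftwo$, so any module decomposes as a direct sum of copies of it. Concretely, I would build the decomposition by hand: define the mutually orthogonal Hermitian projections $P_\epsilon$ for $\epsilon \in \{\pm1\}^k$ onto the simultaneous eigenspaces of the $k$ commuting Hermitian unitaries $\complexi\, A_1 A_2, \complexi\, A_3 A_4, \dots, \complexi\, A_{2k-1} A_{2k}$ (these pairwise commute because the $A_i$ pairwise anti-commute, and their squares are $I$). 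A dimension count using the action of the $A_i$, which permute these eigenspaces transitively in the same way the $\sigma_{k,i}$ permute the corresponding eigenspaces on a single $\C^{2^k}$ block, shows every $P_\epsilon$ has the same rank $s := d/2^k$; picking an orthonormal basis of the range of one chosen $P_{\epsilon_0}$ and propagating it through suitable words in the $A_i$ gives an orthonormal basis of $\C^d$ with respect to which each $A_i$ is block-diagonal with $s$ identical blocks, and by the $k=1$ base structure of the $\sigma$'s those blocks must be exactly $\sigma_{k,i}$ (up to the unitary freedom, which is exactly the conjugation freedom allowed in the statement). This also forces $d = s\poweroftwo$, and in particular forces $s \in \N$, proving the divisibility claim.

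For $n = 2k+1$ I would first apply the $n = 2k$ analysis to $A_1, \dots, A_{2k}$, obtaining the block decomposition with all diagonal blocks of $A_i$ equal to $\sigma_{k,i}$ for $i \le 2k$. It then remains to analyze $A_{2k+1}$: since it anti-commutes with each of $A_1,\dots,A_{2k}$, in the chosen basis $A_{2k+1}$ must, block by block, anti-commute with all of $\sigma_{k,1},\dots,\sigma_{k,2k}$, and also satisfy $A_{2k+1}^2 = I$. The centralizer-type computation here — which $2^k \times 2^k$ matrices anti-commute with all of $\sigma_{k,1}, \dots, \sigma_{k,2k}$ — shows that on each block $A_{2k+1}$ must be a scalar multiple of $\sigma_{k,2k+1}$ (because $\sigma_{k,2k+1}$ spans the space of such matrices, as it generates the one-dimensional "odd part" complementary to the full matrix algebra generated by the first $2k$), and the scalar must be $\pm1$ by the involution condition; moreover a priori the off-diagonal blocks of $A_{2k+1}$ need not vanish, so I would argue that after a further block-permutation of the basis (grouping together the blocks where the scalar is $+1$ and those where it is $-1$) and possibly conjugating by a block-diagonal unitary commuting with all $\sigma_{k,i}$, $i\le 2k$, one can simultaneously put $A_{2k+1}$ into block-diagonal form with $s'$ blocks equal to $\sigma_{k,2k+1}$ and $s - s'$ blocks equal to $-\sigma_{k,2k+1}$, without disturbing the already-achieved form of $A_1, \dots, A_{2k}$. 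That off-diagonal cleanup is the step I expect to be the main obstacle: one must check carefully that the residual unitary freedom left after fixing $A_1,\dots,A_{2k}$ (which is precisely $U(s)$ acting on the multiplicity space, tensored with the identity on each $\C^{2^k}$) is exactly enough to diagonalize the Hermitian-unitary multiplicity-space operator that $A_{2k+1}$ induces, and that "diagonalizing" it respects the tensor structure so that the blocks come out as genuine copies of $\pm\sigma_{k,2k+1}$ rather than something conjugated.

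Finally I would note that the invariant $s'$ (or rather the pair $(s', s - s'))$ is the only remaining degree of freedom: it is detected by the operator $\complexi^k A_1 \cdots A_{2k+1}$, whose trace is $(s' - (s-s'))2^k$ up to a fixed sign, matching the remark preceding the theorem that the product of the full collection distinguishes the two inequivalent irreducibles of the odd Clifford algebra. I would present the write-up in the order: (1) set up the commuting unitaries $\complexi A_{2i-1}A_{2i}$ and their joint eigenspace projections; (2) prove all joint eigenspaces have equal dimension $s$, giving $d = s\poweroftwo$; (3) construct the adapted orthonormal basis and read off the block form of $A_1,\dots,A_{2k}$; (4) in the odd case, compute the anticommutant of $\{\sigma_{k,i}\}_{i\le 2k}$ and use the $U(s)$ freedom to bring $A_{2k+1}$ to the stated $\pm$ block-diagonal form; (5) record that the number $s'$ is the complete invariant. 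Steps (1)–(3) are essentially bookkeeping once the anticommutation relations are exploited; the genuine content, and where I would be most careful, is step (4).
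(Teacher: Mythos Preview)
The paper does not actually prove this theorem: it is stated in the preliminaries as a consequence of ``the representation theory of the Clifford algebra,'' with no argument given. So there is nothing in the paper to compare your proposal against; you are supplying a proof the paper omits.

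Your plan is correct and is essentially the standard argument. A few remarks on places where your sketch could be tightened. In step~(3), the phrase ``by the $k=1$ base structure of the $\sigma$'s those blocks must be exactly $\sigma_{k,i}$'' is vague; the clean way to finish is exactly the fact recalled just before the theorem: once you have exhibited $\C^d$ as a direct sum of $s$ invariant $2^k$-dimensional subspaces, the restriction of $A_1,\dots,A_{2k}$ to each block is a collection of $2k$ anticommuting $\pm1$ observables on $\C^{2^k}$, hence unitarily equivalent to $\sigma_{k,1},\dots,\sigma_{k,2k}$, and you simply choose the basis of each block to realize that equivalence. In step~(4), your worry about off-diagonal blocks of $A_{2k+1}$ is handled more directly than you suggest: once $A_i = \sigma_{k,i}\otimes I_s$ for $i\le 2k$ on $\C^{2^k}\otimes\C^s$, the operator $(\sigma_{k,2k+1}\otimes I_s)A_{2k+1}$ commutes with every $\sigma_{k,i}\otimes I_s$, hence lies in $I\otimes\mathrm{Mat}_s(\C)$; this forces $A_{2k+1}=\sigma_{k,2k+1}\otimes M$ for some Hermitian unitary $M$ on $\C^s$, and a single $U(s)$ conjugation on the multiplicity factor diagonalizes $M$ without touching $A_1,\dots,A_{2k}$. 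So the ``off-diagonal cleanup'' is not an obstacle at all once you pass to the tensor-product description.
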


\subsubsection{Anti-commuting $\pm1$ observables and inner products}\label{subsec:anti_commuting_observables_and_inner_products}

Here we present a property relating $n$ anti-commuting $\pm1$ observables and inner products of vectors in $\R^n$. We introduce a piece of notation and then state the property. 

Let $A_1, \dots A_n$ be some matrices on $\C^d$, and let $u=\begin{bmatrix}
u_1 & \dots & u_n
\end{bmatrix}^T$ be a vector in $\R^n$. By $u \cdot \vec{A}$ we mean a linear combination of $A_1, \dots A_n$ with the coefficients $u_1, \dots u_n$; that is, \[u \cdot \vec{A} = u_1 A_1 + \dots + u_n A_n\] 

With this notation, we can state the following lemma:

\begin{lemma}\label{lemma:anti_commuting_observables_and_inner_products}
Let $A_1, \dots A_n$ be anti-commuting $\pm1$ observables on $\C^d$, let $\ketpsi \in \C^d \otimes \C^d$ be the maximally entangled state $\ketpsi = \frac{1}{\sqrt{d}} \sum_{i=1}^{d} \ket{ii}$ and let $u, v\in \R^n$ be two vectors. Then 
\begin{enumerate}
\item $(u \cdot \vec{A})(v \cdot \vec{A}) + (v \cdot \vec{A})(u \cdot \vec{A}) = 2 \left( \sum_{i=1}^n u_i v_i \right) I = 2 (u^T v) I$
\item $ \brapsi (u \cdot \vec{A}) \otimes (v \cdot \vec{A})^T \ketpsi = u^T v $
\end{enumerate}
\end{lemma}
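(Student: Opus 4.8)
For part 1, the plan is a direct expansion in the basis $A_1,\dots,A_n$. Write
\[ (u\cdot\vec A)(v\cdot\vec A) + (v\cdot\vec A)(u\cdot\vec A) = \sum_{i,j} u_i v_j A_i A_j + \sum_{i,j} v_i u_j A_i A_j = \sum_{i,j} u_i v_j \bigl( A_i A_j + A_j A_i \bigr), \]
where in the second sum I have just relabelled the summation indices. Now split off the diagonal: for $i \neq j$ the anti-commutation hypothesis gives $A_i A_j + A_j A_i = 0$, and for $i = j$ the fact that each $A_i$ is a $\pm 1$ observable gives $A_i^2 = I$, so $A_i A_i + A_i A_i = 2I$. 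Hence only the diagonal terms survive and the sum collapses to $\sum_i u_i v_i \cdot 2I = 2(u^T v)\, I$.

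For part 2, the plan is to reduce the inner product to a trace and then invoke part 1. Using the maximally entangled state $\ketpsi = \frac{1}{\sqrt d}\sum_i \ket{ii}$, a short computation (either directly in the standard basis, or via the identities $\mathcal{L}(\ketpsi) = \frac{1}{\sqrt d} I$ and $\mathcal{L}(M \otimes N^T \ketpsi) = M\,\mathcal{L}(\ketpsi)\,N$ from Lemma \ref{lemma:properties_of_linear_bijection} together with $\braket{w}{w'} = \langle \mathcal{L}(w), \mathcal{L}(w') \rangle_F$) shows that for any matrices $M, N$ on $\C^d$,
\[ \brapsi M \otimes N^T \ketpsi = \tfrac{1}{d}\, Tr(MN). \]
Applying this with $M = u\cdot\vec A$ and $N = v\cdot\vec A$ gives $\brapsi (u\cdot\vec A)\otimes(v\cdot\vec A)^T \ketpsi = \tfrac1d\, Tr\bigl((u\cdot\vec A)(v\cdot\vec A)\bigr)$.

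It remains to evaluate $Tr\bigl((u\cdot\vec A)(v\cdot\vec A)\bigr)$. Take the trace of the identity from part 1: since $Tr\,I = d$ and, by cyclicity, $Tr\bigl((u\cdot\vec A)(v\cdot\vec A)\bigr) = Tr\bigl((v\cdot\vec A)(u\cdot\vec A)\bigr)$, we get $2\,Tr\bigl((u\cdot\vec A)(v\cdot\vec A)\bigr) = 2(u^T v)\,d$, hence $Tr\bigl((u\cdot\vec A)(v\cdot\vec A)\bigr) = (u^T v)\,d$. Substituting back yields $\brapsi (u\cdot\vec A)\otimes(v\cdot\vec A)^T \ketpsi = u^T v$, as claimed. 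This argument is essentially routine computation; the only point that needs care is that $(u\cdot\vec A)(v\cdot\vec A)$ and $(v\cdot\vec A)(u\cdot\vec A)$ are \emph{not} equal as operators in general (they only sum to a scalar multiple of the identity), so one must pass to the trace before symmetrizing — that step, together with getting the transpose in the $M \otimes N^T$ formula right, is the only place an error could creep in.
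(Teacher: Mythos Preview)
Your proof is correct. Part 1 is identical to the paper's argument. For part 2 you take a slightly different but equally valid route: the paper uses the transpose trick for the maximally entangled state on both the bra and the ket,
\[
\brapsi (u\cdot\vec A)\otimes (v\cdot\vec A)^T\ketpsi
= \brapsi (u\cdot\vec A)(v\cdot\vec A)\otimes I\ketpsi
= \brapsi (v\cdot\vec A)(u\cdot\vec A)\otimes I\ketpsi,
\]
averages the two expressions, and then applies part 1 directly as an operator identity. You instead convert the expectation to $\tfrac1d\,Tr\bigl((u\cdot\vec A)(v\cdot\vec A)\bigr)$ and symmetrize via cyclicity of the trace before invoking part 1. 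The two arguments are essentially dual: your cyclicity step plays exactly the role of the paper's double application of $M\otimes I\ketpsi = I\otimes M^T\ketpsi$. The paper's version has the minor advantage of never leaving the state-vector picture and not needing to divide through by $d$; your version has the minor advantage of making the connection to the linear bijection $\mathcal{L}$ of Section~\ref{subsec:a_linear_bijection} explicit.
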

\begin{proof}
For part 1: expand the left-hand-side: \[  (u \cdot \vec{A})(v \cdot \vec{A}) + (v \cdot \vec{A})(u \cdot \vec{A}) =  \sum_{i=1}^n \sum_{j=1}^n u_i v_j (A_i A_j + A_j A_i) = 2 \left( \sum_{i=1}^n u_i v_i \right) I\]

For part 2: the maximally entangled state $\ketpsi$ has the property $M \otimes I \ketpsi = I \otimes M^T \ketpsi$ (and consequently also $\brapsi M \otimes I = \brapsi I \otimes M^T$) for any matrix $M$ on $\C^d$. Then, 
\begin{multline*}
\brapsi (u \cdot \vec{A}) \otimes (v \cdot \vec{A})^T \ketpsi = \brapsi \frac{(u \cdot \vec{A})(v \cdot \vec{A}) + (v \cdot \vec{A})(u \cdot \vec{A})}{2} \otimes I \ketpsi \\ 
= \brapsi (u^T v) I \otimes I \ketpsi = u^T v 
\end{multline*}
\end{proof}

\subsubsection{Invariant subspaces and Schur's lemma}

Here we present some facts about invariant subspaces. These facts are commonly called Schur's lemma in expositions of representation theory. We introduce the notion of invariant subspace and then state Schur's lemma. 

Let $A$ be a matrix on $\C^d$ and let $V$ be a subspace of $\C^d$. We say that $V$ is invariant under $A$ if \[ \ket{v} \in V \; \Rightarrow \; (A \ket{v} ) \in V\] This also generalizes to a collection of matrices: let $\mathcal{I}$ be some index set and let $\{A_i \, : \, i \in \mathcal{I} \}$
be a collection of matrices. We say that $V$ is invariant under the collection $\{A_i \, : \, i \in \mathcal{I} \}$ if it is invariant under each individual $A_i$. In the context of representation theory, the index set $\mathcal{I}$ has the extra structure of being a group or an algebra, and the mapping $i \mapsto A_i$ has the extra structure of being a group or algebra homomorphism. However, this extra structure is not used in the proof of Schur's lemma, and the lemma holds for general index sets $\mathcal{I}$. 

Now we are ready to state Schur's lemma: 

\begin{lemma}\label{lemma:Schurs_lemma}
\begin{enumerate}
\item Let  $\{A_i \, : \, i \in \mathcal{I} \}$ be a collection of linear operators on $V$, let $\{B_i \, : \, i \in \mathcal{I} \}$ be a collection of linear operators on $W$, and let $T$ be a linear operator $V \rightarrow W$. Suppose $T A_i = B_i T$ for all $i \in \mathcal{I}$. Then $Im T$ is invariant under the collection $\{B_i \, : \, i \in \mathcal{I} \}$ and $Ker T$ is invariant under the collection $\{A_i \, : \, i \in \mathcal{I} \}$. 
\item Let $\{A_i \, : \, i \in \mathcal{I} \}$ be a collection of linear operators on $V$ and $T$ be a linear operator on $V$. Suppose $A_i T = T A_i$ for all $i \in \mathcal{I}$. Then all eigenspaces of $T$ are invariant under the collection $\{A_i \, : \, i \in \mathcal{I} \}$. 
\end{enumerate}
\end{lemma}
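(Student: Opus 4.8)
The plan is to prove both parts by direct, elementary verification from the definition of invariant subspace; no representation-theoretic machinery beyond linear algebra is needed, which matches the remark in the text that the group/algebra structure plays no role.

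For part 1, I would argue as follows. To show $Im\, T$ is invariant under $\{B_i\}$, take an arbitrary vector $\ket{y} \in Im\, T$, so $\ket{y} = T\ket{x}$ for some $\ket{x} \in V$. Then for each $i \in \mathcal{I}$ we compute $B_i \ket{y} = B_i T \ket{x} = T A_i \ket{x}$ using the hypothesis $B_i T = T A_i$; since $A_i \ket{x} \in V$, the right-hand side is $T(A_i \ket{x}) \in Im\, T$. Hence $Im\, T$ is invariant under each $B_i$, and therefore under the collection. To show $Ker\, T$ is invariant under $\{A_i\}$, take $\ket{x} \in Ker\, T$, so $T \ket{x} = 0$. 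Then $T(A_i \ket{x}) = (T A_i)\ket{x} = (B_i T)\ket{x} = B_i (T \ket{x}) = B_i \, 0 = 0$, so $A_i \ket{x} \in Ker\, T$. This finishes part 1.

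For part 2, I would deduce it from part 1, or equally well argue directly. Let $\lambda$ be an eigenvalue of $T$ with eigenspace $E_\lambda = Ker(T - \lambda I)$. The hypothesis $A_i T = T A_i$ gives $A_i (T - \lambda I) = A_i T - \lambda A_i = T A_i - \lambda A_i = (T - \lambda I) A_i$, so $A_i$ commutes with $T - \lambda I$. Now for $\ket{v} \in E_\lambda$ we have $(T - \lambda I)(A_i \ket{v}) = A_i (T - \lambda I)\ket{v} = A_i \, 0 = 0$, so $A_i \ket{v} \in E_\lambda$. Thus every eigenspace is invariant under each $A_i$, hence under the collection. (Alternatively, apply part 1 with $W = V$, $B_i = A_i$, and $T$ replaced by $T - \lambda I$, using that $Ker(T-\lambda I)$ is invariant under $\{A_i\}$.)

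There is no real obstacle here; the only thing to be careful about is bookkeeping — in part 1 the two collections $\{A_i\}$ and $\{B_i\}$ act on possibly different spaces $V$ and $W$, so one must track which operator is being applied where, and use the intertwining relation $T A_i = B_i T$ in the correct direction for each of the two claims. The computations are one-liners once the right vector is chosen (a generic element of $Im\, T$ for the image claim, a generic element of $Ker\, T$ for the kernel claim).
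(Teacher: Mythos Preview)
Your proof is correct and matches the paper's approach: the paper does not give a detailed argument but simply remarks that ``these statements can be proved directly from the definitions,'' which is exactly the elementary verification you carry out. Your treatment of part~2 via $E_\lambda = Ker(T-\lambda I)$ and the commutation $A_i(T-\lambda I) = (T-\lambda I)A_i$ is precisely the kind of one-line check the paper has in mind.
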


These statements can be proved directly from the definitions. 

\subsubsection{Intertwining operators}

Here we look at the concept of intertwining operator that is implicitly present in the statement of Schur's lemma. 

Let $\{A_i \, : \, i \in \mathcal{I} \}$ be a collection of linear operators on $V$, $\{B_i \, : \, i \in \mathcal{I} \}$ be a collection of linear operators on $W$, and $T$ a linear operator $V \rightarrow W$. We say that $T$ is an intertwining operator for the collections $\{A_i \, : \, i \in \mathcal{I} \}$, $\{B_i \, : \, i \in \mathcal{I} \}$ if $T A_i = B_i T$ for all $i \in \mathcal{I}$. 

In the context of representation theory, the index set $\mathcal{I}$ has the extra structure of being a group or an algebra, and the mappings $i \mapsto A_i$, $i \mapsto B_i$ have the extra structure of being group or algebra homomorphisms. Here, we will want the slightly more general definition that allows an arbitrary index set $\mathcal{I}$. 

\section{Overview of Results}\label{ch:overview_of_results}

\subsection{Relations for strategies}

First, we look at the question: given a non-local XOR game, what can we say about optimal and nearly optimal strategies for the game? We prove the following: 

\begin{theorem} \label{thm:equality_conditions}
Consider a non-local XOR game specified by an $n \times m$ matrix $G$ and with quantum success bias $\beta(G)$. Then, there exist vectors $u_1, \dots u_r \in \R^n$ and $v_1, \dots v_r \in \R^m$ with the property: $\pm 1$ observables $A_1, \dots A_n$, $B_1, \dots B_m$ and bipartite state $\ketpsi$ are an $\epsilon$-optimal strategy for the game, i.e, 
\[ (1-\epsilon) \beta(G)  \leq \sum_{i=1}^{n} \sum_{j=1}^{m} G_{ij} \brapsi A_i \otimes B_j \ketpsi \leq \beta(G) \] if and only if 
\[ \sum_{k=1}^{r} \left\| u_k \cdot \vec{A} \otimes I \ketpsi - I \otimes v_k \cdot \vec{B} \ketpsi \right\|^2 \leq \beta(G) \epsilon \] 
By taking $\epsilon=0$, it follows that a strategy is optimal if and only if \[\forall k=1, \dots r \quad u_k \cdot \vec{A} \otimes I \ketpsi = I \otimes v_k \cdot \vec{B} \ketpsi\]
\end{theorem}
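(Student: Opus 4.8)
The plan is to realize $\beta(G)$ as the value of a semidefinite program and to extract the vectors $u_k,v_k$ from a fixed optimal solution of its dual. Associate to the game the symmetric $(n+m)\times(n+m)$ matrix $\tilde G=\tfrac12\begin{pmatrix}0&G\\ G^T&0\end{pmatrix}$ and consider the primal SDP $\sup\{\tilde G\cdot Z:Z\succeq 0,\ \ket{i}\bra{i}\cdot Z=1,\ i=1,\dots,n+m\}$. The first task is to note that its value equals $\beta(G)$: one inequality is supplied by the construction in the third paragraph (every strategy yields a feasible $Z$ with the same objective value), and the other is the easy direction of Tsirelson's correspondence \cite{tsirel1987quantum,cleve2004consequences} — any feasible $Z$ is the Gram matrix of unit vectors $x_1,\dots,x_n,y_1,\dots,y_m\in\R^N$, and taking $\ketpsi$ maximally entangled together with $A_s=x_s\cdot\vec R$, $B_t=(y_t\cdot\vec R)^T$ for a tuple $\vec R$ of anti-commuting $\pm1$ observables (as in \eqref{eq:sigma_anticommuting_observables}) gives, by Lemma~\ref{lemma:anti_commuting_observables_and_inner_products}, a strategy with $\brapsi A_s\otimes B_t\ketpsi=x_s^T y_t$ and hence success bias $\tilde G\cdot Z$.

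Next, since $Z=I$ is strictly primal feasible, Theorem~\ref{thm:properties_of_sdp} (strong duality together with attainment of the dual infimum) furnishes a diagonal matrix $D^\ast=\mathrm{diag}(y^\ast_1,\dots,y^\ast_{n+m})$ with $D^\ast\succeq\tilde G$ and $\mathrm{Tr}\,D^\ast=\beta(G)$. Set $M=D^\ast-\tilde G\succeq 0$ and take a spectral decomposition $M=\sum_{k=1}^{r}w_k w_k^T$ with $r=\mathrm{rank}\,M$; writing each $w_k=\begin{pmatrix}u_k\\ -v_k\end{pmatrix}$ with $u_k\in\R^n$, $v_k\in\R^m$ produces the vectors claimed by the theorem. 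They depend only on $G$.

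Now take an arbitrary strategy $(A_1,\dots,A_n,B_1,\dots,B_m,\ketpsi)$ on $\C^{d_A}\otimes\C^{d_B}$, put $X=\mathcal L(\ketpsi)$, and recall from Lemma~\ref{lemma:properties_of_linear_bijection} that $\mathcal L(A_s\otimes I\ketpsi)=A_sX$ and $\mathcal L(I\otimes B_t\ketpsi)=XB_t^T$. Let $Z$ be the Gram matrix of the tuple $(A_1X,\dots,A_nX,\,XB_1^T,\dots,XB_m^T)$ in $Mat_{d_A,d_B}(\C)$ equipped with the real inner product $\langle P,Q\rangle=\mathrm{Re}\,\mathrm{Tr}(P^\dagger Q)$. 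Since each $A_s,B_t$ is a $\pm1$ observable and $\|\ketpsi\|=1$, every diagonal entry of $Z$ equals $1$, and its $A$--$B$ entries are $\mathrm{Re}\,\mathrm{Tr}(X^\dagger A_sXB_t^T)=\brapsi A_s\otimes B_t\ketpsi$; hence $Z$ is primal feasible and $\tilde G\cdot Z=\sum_{s,t}G_{st}\brapsi A_s\otimes B_t\ketpsi=\beta(G,\S)$. Because $D^\ast$ is diagonal and $Z$ has unit diagonal, $D^\ast\cdot Z=\mathrm{Tr}\,D^\ast=\beta(G)$, so
\[ \beta(G)-\beta(G,\S)=(D^\ast-\tilde G)\cdot Z=M\cdot Z=\sum_{k=1}^{r}w_k^T Z w_k=\sum_{k=1}^{r}\big\|(u_k\cdot\vec A)X-X(v_k\cdot\vec B)^T\big\|_F^2, \]
which, applying Lemma~\ref{lemma:properties_of_linear_bijection} once more, equals $\sum_{k=1}^{r}\big\|u_k\cdot\vec A\otimes I\ketpsi-I\otimes v_k\cdot\vec B\ketpsi\big\|^2$. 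Therefore $\beta(G,\S)\ge(1-\epsilon)\beta(G)$ holds exactly when this sum is at most $\epsilon\beta(G)$, while $\beta(G,\S)\le\beta(G)$ is automatic; taking $\epsilon=0$ and using that each summand is a nonnegative square gives the stated characterization of optimal strategies.

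The individual computations — the identities for $\mathcal L$, the unit diagonal of $Z$, the trace manipulations — are all short. The real content, and the step I expect to be the crux, is the structural observation that complementary slackness should be applied against one fixed optimal dual solution, so that the optimality deficit $\beta(G)-\beta(G,\S)$ becomes exactly $M\cdot Z$ for the positive semidefinite dual slack matrix $M$, and that any factorization $M=\sum_k w_k w_k^T$ then exhibits this deficit as an explicit sum of squared norms of the ``Tsirelson vectors'' $A_sX$, $XB_t^T$; translating that back through $\mathcal L$ is immediate. A secondary point requiring care is the interface between the complex state space of a strategy and the real SDP: one must work with the real-part inner product on matrices and check it is consistent with Lemma~\ref{lemma:properties_of_linear_bijection}, which goes through precisely because the Frobenius norm appearing there is already real-valued.
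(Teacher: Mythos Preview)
Your argument is correct and follows the same architecture as the paper: associate the primal SDP, invoke strong duality to fix an optimal dual $D^\ast$, decompose the dual slack $M=D^\ast-\tilde G\succeq 0$ into rank-one pieces $w_kw_k^T$ with $w_k=\begin{pmatrix}u_k\\-v_k\end{pmatrix}$, and establish the exact identity $\beta(G)-\beta(G,\S)=\sum_k\|u_k\cdot\vec A\otimes I\ketpsi-I\otimes v_k\cdot\vec B\ketpsi\|^2$. The only difference is presentational: the paper derives the identity by expanding the squares directly and using the block relations $\sum_k u_ku_k^T=\mathrm{Diag}(y_1,\dots,y_n)$, $\sum_k v_kv_k^T=\mathrm{Diag}(y_{n+1},\dots,y_{n+m})$, $\sum_k u_kv_k^T=G/2$ (Lemma~\ref{lemma:useful_identity}), whereas you obtain the same identity as $M\cdot Z=\sum_k w_k^TZw_k$ for the real Gram matrix $Z$ of the Tsirelson vectors $A_sX,\,XB_t^T$ and then translate back through the isometry~$\mathcal L$.
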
 

The proof of Theorem \ref{thm:equality_conditions} is in Section \ref{ch:relations_for_strategies}. The proof relies on the semi-definite program that can be associated to an XOR game, and on an argument that is related to the complementary slackness condition. From the proof, one can see that the vectors $u_1, \dots u_r \in \R^n$ and $v_1, \dots v_r \in \R^m$ from the statement of Theorem \ref{thm:equality_conditions} can be computed efficiently by solving a semi-definite program and finding the eigenvalues and eigenvectors of a positive semi-definite matrix. 

Next, we focus attention on the CHSH($n$) XOR games. By specializing the methods form the proof of Theorem \ref{thm:equality_conditions} to the case of CHSH$(n)$, we obtain the following theorem:

\begin{theorem} \label{thm:chshn_equality_conditions}
The following three statements for $\pm 1$ observables $A_i, B_{jk}$ and bipartite state $\ketpsi$ are equivalent:
\begin{itemize}
\item $A_i, B_{jk}, \ketpsi$ is an $\epsilon$-optimal CHSH($n$) strategy, i.e. 
\begin{multline*} 
\frac{1}{\sqrt{2}} (1-\epsilon) \\ \leq \chshnexpression \leq \frac{1}{\sqrt{2}}
\end{multline*}
\item The observables and state satisfy
\begin{multline*} 
\sum_{1 \leq i < j \leq n} \Bigg( \left\| \frac{A_i + A_j}{\sqrt{2}} \otimes I \ketpsi - I \otimes B_{ij} \ketpsi \right\|^2  \\ 
+ \left\| \frac{A_i - A_j}{\sqrt{2}} \otimes I \ketpsi - I \otimes B_{ji} \ketpsi \right\|^2 \Bigg) \leq 2 n (n-1) \epsilon 
\end{multline*}
\item The observables and state satisfy
\begin{multline*}
\sum_{1 \leq i < j \leq n} \Bigg( \left\| A_i \otimes I \ketpsi - I \otimes \frac{B_{ij} + B_{ji}}{\sqrt{2}} \ketpsi \right\|^2  \\
+ \left\| A_j \otimes I \ketpsi - I \otimes \frac{B_{ij} - B_{ji}}{\sqrt{2}} \ketpsi \right\|^2 \Bigg) \leq 2 n (n-1) \epsilon 
\end{multline*}
\end{itemize}
\end{theorem}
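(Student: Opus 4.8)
The plan is to establish the equivalence by direct computation: expand each squared norm appearing in the second and third statements and relate the result to the value of the CHSH$(n)$ expression for the given strategy. Although the theorem is asserted to follow by specializing the proof of Theorem~\ref{thm:equality_conditions}, the structure of CHSH$(n)$ is simple enough that a self-contained expansion is cleaner, and it makes transparent why no anti-commutation hypothesis on the observables is needed.

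First I would expand the two norms attached to a fixed pair $i<j$ in the second statement using $\|x-y\|^2 = \|x\|^2 + \|y\|^2 - 2\,\mathrm{Re}\,\langle x,y\rangle$. Since $A_i, A_j, B_{ij}, B_{ji}$ are $\pm 1$ observables, $A_i^2 = A_j^2 = B_{ij}^2 = B_{ji}^2 = I$, so $\left\| \frac{A_i \pm A_j}{\sqrt 2} \otimes I \ketpsi \right\|^2 = 1 \pm \frac12 \brapsi (A_i A_j + A_j A_i) \otimes I \ketpsi$ and $\left\| I \otimes B_{ij} \ketpsi \right\|^2 = \left\| I \otimes B_{ji} \ketpsi \right\|^2 = 1$, while the inner-product terms are $\brapsi A_i \otimes B_{ij} \ketpsi$ and the like, which are automatically real because $A_i \otimes B_{jk}$ is Hermitian. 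Adding the two norms for the pair $(i,j)$, the anti-commutator contribution $\frac12 \brapsi (A_i A_j + A_j A_i) \otimes I \ketpsi$ cancels -- this is exactly where the opposite signs in $\frac{A_i + A_j}{\sqrt 2}$ versus $\frac{A_i - A_j}{\sqrt 2}$ are used -- and what remains is $4 - \sqrt 2\, T_{ij}$, where $T_{ij} = \brapsi ( A_i \otimes B_{ij} + A_i \otimes B_{ji} + A_j \otimes B_{ij} - A_j \otimes B_{ji} ) \ketpsi$ is precisely the $(i,j)$ summand of $4 {n \choose 2}$ times the CHSH$(n)$ expression.

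Summing over all $1 \le i < j \le n$ and using $4 {n \choose 2} = 2n(n-1)$, the left side of the second statement equals $2n(n-1) - \sqrt 2 \cdot 4 {n \choose 2} \cdot \beta = 2n(n-1)\,(1 - \sqrt 2\, \beta)$, where $\beta$ denotes the value of the CHSH$(n)$ expression for the given strategy. Hence the second statement is equivalent to $1 - \sqrt 2\, \beta \le \epsilon$, i.e.\ $\beta \ge \frac{1}{\sqrt 2}(1 - \epsilon)$; and since the left side is a sum of squared norms, it is nonnegative, which forces $\beta \le \frac{1}{\sqrt 2}$ for every strategy (equivalently, one may quote the known value $\frac{1}{\sqrt 2}$ of the quantum success bias of CHSH$(n)$). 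Combining the two inequalities yields the equivalence of the first and second statements.

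Finally I would run the identical expansion on the third statement: for a fixed pair $i<j$, expand $\left\| A_i \otimes I \ketpsi - I \otimes \frac{B_{ij} + B_{ji}}{\sqrt 2} \ketpsi \right\|^2 + \left\| A_j \otimes I \ketpsi - I \otimes \frac{B_{ij} - B_{ji}}{\sqrt 2} \ketpsi \right\|^2$, using $A_i^2 = A_j^2 = I$ on the $A$-side and $(B_{ij} \pm B_{ji})^2 = 2I \pm (B_{ij} B_{ji} + B_{ji} B_{ij})$ on the $B$-side; the anti-commutator $\frac12 \brapsi I \otimes (B_{ij} B_{ji} + B_{ji} B_{ij}) \ketpsi$ again cancels between the two terms, leaving once more $4 - \sqrt 2\, T_{ij}$. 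So the left side of the third statement equals that of the second term by term, and its equivalence with the first follows by the same argument. I expect the only real obstacle to be recognizing and checking these two cancellations: once one sees that the specific $\pm$ pairing of $\frac{A_i \pm A_j}{\sqrt 2}$ with $B_{ij}, B_{ji}$ (respectively, of $A_i, A_j$ with $\frac{B_{ij} \pm B_{ji}}{\sqrt 2}$) is designed to annihilate the noncommutative terms, everything else is bookkeeping.
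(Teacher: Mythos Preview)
Your proof is correct. At the computational core it coincides with the paper's: both arguments expand the squared norms and use $A_i^2 = B_{jk}^2 = I$ to reduce everything to the identity
\[
\sum_{1 \le i < j \le n} \Big( \text{two squared norms} \Big) \;=\; 2n(n-1)\bigl(1 - \sqrt{2}\,\beta\bigr),
\]
from which the equivalences are immediate. The paper packages this same computation inside the general SDP framework of Theorem~\ref{thm:equality_conditions}: it first exhibits an explicit dual optimal solution $y_1 = \dots = y_n = \tfrac{1}{2\sqrt{2}\,n}$, $y_{n+1} = \dots = y_{n^2} = \tfrac{1}{2\sqrt{2}\,n(n-1)}$, then writes two rank-one decompositions of the slack matrix $\sum_i y_i E_{ii} - G_{\mathrm{sym}}$, and finally invokes Lemma~\ref{lemma:useful_identity} (whose proof is precisely ``open the squares''). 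Your route skips this scaffolding and expands directly.

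What each buys: the paper's detour makes explicit that Theorem~\ref{thm:chshn_equality_conditions} is an instance of the general recipe, and in particular shows where the two sets of relations come from (two different decompositions of the same PSD matrix). Your direct approach is shorter and self-contained, and as you note it has the pleasant side effect of re-deriving the Tsirelson bound $\beta \le \tfrac{1}{\sqrt{2}}$ from nonnegativity of the sum of squares rather than quoting it. Your observation that the $\pm$ pairing is exactly what kills the anticommutator terms is the right way to see why no commutation hypothesis is needed; in the paper's language this is the fact that $\sum_k u_k u_k^T$ and $\sum_k v_k v_k^T$ are diagonal.
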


Again, taking $\epsilon=0$ we can obtain the relations for exactly optimal \chshn strategies. The proof of Theorem \ref{thm:chshn_equality_conditions} is in Section \ref{sec:chshn_equality_conditions}. 

\subsection{Classification of \chshn optimal strategies}

For the case of \chshn optimal strategies, we obtain the following classification theorem:

\begin{theorem} \label{thm:chshn_optimal_strategies}
\abpsi is an optimal \chshn strategy on the space $\C^{d_A} \otimes \C^{d_B}$ if and only if there exist an orthonormal basis $\ket{u_1}, \dots \ket{u_{d_A}}$ of $\cda$ and an orthonormal basis $\ket{v_1}, \dots \ket{v_{d_B}}$ of $\cdb$ such that all of the following statements hold
\begin{itemize}
\item The non-zero terms in the Schmidt decomposition of $\ketpsi$ are
\[ \sum_{i=1}^{s 2^{\lfloor n/2 \rfloor}} \sqrt{\lambda_i} \ket{u_i} \otimes \ket{v_i}  \]
with the Schmidt coefficients equal in blocks of length $2^{\lfloor n/2 \rfloor}$, i.e. 
\begin{align*}
\lambda_1 = &\dots = \lambda_{\poweroftwo} \\
\lambda_{\poweroftwo +1} = &\dots = \lambda_{2 \cdot \poweroftwo} \\
&\dots \\
\lambda_{(s-1) \poweroftwo +1} = &\dots = \lambda_{s \poweroftwo}
\end{align*}
\item With respect to the basis $\ket{u_1}, \dots \ket{u_{d_A}}$ of $\cda$, the observables \aobservables have the block diagonal form:
\[ A_i = \begin{bmatrix}
A_i^{(1)} &&& \\
& \ddots && \\
&& A_i^{(s)} \\
&&& C_i
\end{bmatrix} \]
where each $A_i^{(j)}$ is $\poweroftwo \times \poweroftwo$ and acts on $span(\ket{u_{(j-1) \poweroftwo +1}}, \dots \ket{u_{j \poweroftwo}})$, and, for each $i = 1, \dots n$, for each $j=1, \dots s$, $A_i^{(j)} = \sigma_{\lfloor n/2 \rfloor, i}$ \footnote{Here, the observables $\sigma_{k,i}$ are the ones defined in the relations \eqref{eq:sigma_anticommuting_observables}. } except for the case $n =2k+1$, and $i=2k+1$, in which case the blocks $A_i^{(j)}$ are either $\sigma_{k,2k+1}$ or $-\sigma_{k,2k+1}$. The block $C_i$ is an arbitrary $\pm 1$ observable on the orthogonal complement of $span(\ket{u_{1}}, \dots \ket{u_{s \poweroftwo}})$. 
\item With respect to the basis $\ket{v_1}, \dots \ket{v_{d_B}}$ of $\cdb$, the observables \bobservables have the block diagonal form:
\[ B_{jk} = \begin{bmatrix}
B_{jk}^{(1)} &&& \\
& \ddots && \\
&& B_{jk}^{(s)} \\
&&& D_{jk}
\end{bmatrix} \]
where each $B_{jk}^{(l)}$ is $\poweroftwo \times \poweroftwo$ and acts on \\ $span(\ket{v_{(j-1) \poweroftwo +1}}, \dots \ket{v_{j \poweroftwo}})$, and, for $1 \leq j <k \leq n$, 
\[ B_{jk}^{(l)} = \left(\frac{A_j^{(l)} + A_k^{(l)}}{\sqrt{2}}\right)^T \quad  B_{kj}^{(l)} = \left(\frac{A_j^{(l)} - A_k^{(l)}}{\sqrt{2}}\right)^T \]
The block $D_{jk}$ is an arbitrary $\pm 1$ observable on the orthogonal complement of  $span(\ket{v_{1}}, \dots \ket{v_{s \poweroftwo}})$. 
\end{itemize}
\end{theorem}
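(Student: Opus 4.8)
The plan is to take as a black box the $\epsilon=0$ case of Theorem~\ref{thm:chshn_equality_conditions} and work in the matrix picture of Section~\ref{subsec:a_linear_bijection}: put $M=\mathcal{L}(\ketpsi)\in Mat_{d_A,d_B}(\C)$, so that by Lemma~\ref{lemma:properties_of_linear_bijection} the (second-form) optimality relations become
\[ \tfrac{A_i+A_j}{\sqrt 2}\,M = M\,B_{ij}^{T},\qquad \tfrac{A_i-A_j}{\sqrt 2}\,M = M\,B_{ji}^{T}\qquad(1\le i<j\le n). \]
The ``if'' direction is the easy one: if $\ketpsi$, the $A_i$ and the $B_{jk}$ have the stated normal form, then $\ketpsi$ is a direct sum over $l=1,\dots,s$ of rescaled maximally entangled states on the $\poweroftwo$-dimensional subspaces $\operatorname{span}(\ket{u_{(l-1)\poweroftwo+1}},\dots)\otimes\operatorname{span}(\ket{v_{(l-1)\poweroftwo+1}},\dots)$, and applying $X\otimes I\ket{\omega}=I\otimes X^{T}\ket{\omega}$ block by block, together with the prescribed $B_{jk}^{(l)}=(\tfrac{A_j^{(l)}+A_k^{(l)}}{\sqrt 2})^{T}$, verifies the displayed relations; the arbitrary blocks $C_i,D_{jk}$ never enter because $\ketpsi$ vanishes outside the Schmidt subspaces. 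By Theorem~\ref{thm:chshn_equality_conditions} the strategy is then optimal.

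For the ``only if'' direction I would first extract structure on Alice's side. Squaring the first displayed relation gives $(\tfrac{A_i+A_j}{\sqrt 2})^{2}M = M(B_{ij}^{2})^{T}=M$, and since $(\tfrac{A_i+A_j}{\sqrt 2})^{2}=I+\tfrac12(A_iA_j+A_jA_i)$ this yields $(A_iA_j+A_jA_i)M=0$; adding the two relations gives $A_iM=MX_{ij}^{T}$ with $X_{ij}:=\tfrac{B_{ij}+B_{ji}}{\sqrt 2}$ self-adjoint. The first fact shows $V_A:=\operatorname{Im}M$ is invariant under every $A_i$ and that the restrictions $A_i|_{V_A}$ are $n$ anticommuting $\pm1$ observables, so by Theorem~\ref{thm:the_general_form_of_n_anti_commuting_observables}, $\dim V_A=s\poweroftwo$ and $V_A$ has an orthonormal basis in which the $A_i|_{V_A}$ are in $\poweroftwo$-block Clifford form. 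The second fact, together with its adjoint $M^{\dagger}A_i=X_{ij}^{T}M^{\dagger}$, gives $A_i\,MM^{\dagger}=MX_{ij}^{T}M^{\dagger}=MM^{\dagger}\,A_i$, i.e.\ each $A_i$ commutes with the reduced density matrix $\rho_A:=MM^{\dagger}$. By part~2 of Lemma~\ref{lemma:Schurs_lemma} every eigenspace of $\rho_A$ is invariant under the collection $\{A_i\}$; applying Theorem~\ref{thm:the_general_form_of_n_anti_commuting_observables} inside each eigenspace and concatenating the bases produces one orthonormal basis $\ket{u_1},\dots,\ket{u_{d_A}}$ of $\cda$ in which the $A_i$ are simultaneously in $\poweroftwo$-block form and $\rho_A$ is constant on each $\poweroftwo$-block (since it is constant on the ambient eigenspace). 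Such a basis is automatically a Schmidt basis for $\ketpsi$, the blockwise constancy of $\rho_A$ is exactly the claimed blockwise equality of Schmidt coefficients, and on $V_A^{\perp}$ each $A_i$ restricts to an unconstrained $\pm1$ observable $C_i$.

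Next I would transport this across $M$ to Bob's side. Let $\ket{v_1},\dots,\ket{v_{d_B}}$ be an orthonormal basis of $\cdb$ extending the Bob-side Schmidt vectors, so $\ketpsi=\sum_{i\le s\poweroftwo}\sqrt{\lambda_i}\ket{u_i}\otimes\ket{v_i}$, and write $\ket{w_i}:=\overline{\ket{v_i}}$, so that $M=\sum_i\sqrt{\lambda_i}\ket{u_i}\bra{w_i}$ and $M^{\dagger}M=\sum_i\lambda_i\ket{w_i}\bra{w_i}$. Taking adjoints in the first displayed relation (using that $B_{ij}$, hence $B_{ij}^{T}$, is self-adjoint) gives $M^{\dagger}\tfrac{A_i+A_j}{\sqrt 2}=B_{ij}^{T}M^{\dagger}$, hence $B_{ij}^{T}M^{\dagger}M=M^{\dagger}\tfrac{A_i+A_j}{\sqrt 2}M$; applying this to $\ket{w_l}$ and solving for $B_{ij}^{T}\ket{w_l}$ (legitimate since $\lambda_l\neq0$ for $l\le s\poweroftwo$) shows $B_{ij}^{T}$ preserves $\operatorname{span}(\ket{w_1},\dots,\ket{w_{s\poweroftwo}})$ with matrix entries $\bra{w_k}B_{ij}^{T}\ket{w_l}=\sqrt{\lambda_k/\lambda_l}\,\bra{u_k}\tfrac{A_i+A_j}{\sqrt 2}\ket{u_l}$. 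Because $\tfrac{A_i+A_j}{\sqrt 2}$ is block-diagonal with respect to the $\poweroftwo$-blocks and $\lambda_k=\lambda_l$ whenever $\ket{u_k},\ket{u_l}$ lie in the same block, these factors collapse to $1$ and $B_{ij}^{T}$ has, in the $\ket{w_i}$ basis, exactly the $\poweroftwo$-block matrix that $\tfrac{A_i+A_j}{\sqrt 2}$ has in the $\ket{u_i}$ basis. Finally, the matrix of $B_{ij}$ in the $\ket{v_i}$ basis is the complex conjugate of that of $B_{ij}^{T}$ in the $\ket{w_i}$ basis (since $\ket{v_i}=\overline{\ket{w_i}}$ and $B_{ij}=\overline{B_{ij}^{T}}$), and conjugation turns each self-adjoint block into its transpose, giving blocks $(\tfrac{A_i^{(l)}+A_j^{(l)}}{\sqrt 2})^{T}$; the second displayed relation gives the blocks $(\tfrac{A_i^{(l)}-A_j^{(l)}}{\sqrt 2})^{T}$ for $B_{ji}$, and on the orthogonal complement of $\operatorname{span}(\ket{v_1},\dots,\ket{v_{s\poweroftwo}})$ each $B_{jk}$ restricts to an unconstrained $\pm1$ observable $D_{jk}$.

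I expect the main obstacle to be the bookkeeping in the last step: keeping the transposes coming from $\mathcal{L}(I\otimes B\ketpsi)=\mathcal{L}(\ketpsi)B^{T}$ straight from the complex conjugations relating the right singular vectors of $M$ to the Bob-side Schmidt vectors, and verifying that the $\poweroftwo$-block decomposition forced by the Clifford structure on $V_A$ can genuinely be chosen compatibly with the spectral decomposition of $\rho_A$. The identity $[A_i,\rho_A]=0$ is the conceptual key: it is what makes that compatibility possible and what forces the $\sqrt{\lambda_k/\lambda_l}$ factors to equal $1$ exactly where they must.
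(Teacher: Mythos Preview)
Your proposal is correct and follows essentially the same approach as the paper: pass to the matrix picture $M=\mathcal{L}(\ketpsi)$, derive anticommutation of the $A_i$ on $\operatorname{Im}M$, use commutation of the $A_i$ with $MM^\dagger$ and Schur's lemma to get the blockwise constancy of the Schmidt coefficients, invoke the Clifford classification (Theorem~\ref{thm:the_general_form_of_n_anti_commuting_observables}), and then transport the structure to Bob's side. The only cosmetic differences are that the paper obtains anticommutation from $\{A_i,A_j\}\Psi=\Psi\bigl((B_{ij}^T)^2-(B_{ji}^T)^2\bigr)=0$ rather than by squaring, handles Bob's side via the block-scalar form of $\Psi^T$ rather than your explicit matrix-entry computation, and proves the ``if'' direction by computing the CHSH$(n)$ value with Lemma~\ref{lemma:anti_commuting_observables_and_inner_products} rather than verifying the relations of Theorem~\ref{thm:chshn_equality_conditions}.
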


The proof of Theorem \ref{thm:chshn_optimal_strategies} is in Section \ref{sec:chshn_optimal_strategies}. The proof uses the relations from Theorem \ref{thm:chshn_equality_conditions} and the linear bijection $\mathcal{L}$ between $\C^{d_A} \otimes \C^{d_B}$ and $Mat_{d_A, d_B} (\C)$ from subsection \ref{subsec:a_linear_bijection}. The bipartite state $\ketpsi$ from an optimal \chshn strategy is shown to be such that $\Psi = \mathcal{L}(\ketpsi)$ is an intertwining operator between certain linear combinations of Alice's observables and certain linear combinations of the transpose of Bob's observables. Given the special structure of the relations for the \chshn game, this is enough to imply the conclusions of Theorem \ref{thm:chshn_optimal_strategies}.

One way to interpret Theorem \ref{thm:chshn_optimal_strategies} is that any optimal \chshn strategy must be a direct sum of elementary optimal strategies on $\C^{\poweroftwo} \otimes \C^{\poweroftwo}
$, possibly with some additional dimensions on each side that are orthogonal to the support of the state. Another interpretation is that the space $supp_A \ketpsi \otimes supp_B \ketpsi \subseteq \C^{d_A} \otimes \C^{d_B}$\footnote{$supp_A \ketpsi $ is the span of the $A$-side Schmidt vectors of $\ketpsi$ with non-zero Schmidt coefficients, and $supp_B \ketpsi $ is the span of the $B$-side Schmidt vectors of $\ketpsi$ with non-zero Schmidt coefficients.} is a "good subspace" on which the observables from the strategy are "well-behaved": the \aobservables leave the space $supp_A \ketpsi$ invariant and satisfy the canonical anti-commutation relations on that space, and the \bobservables leave the space $supp_B \ketpsi$ invariant and are determined there by $B_{jk} = (A_j^T \pm A_k^T) / \sqrt{2}$. 

\subsection{\chshn nearly-optimal strategies}

We now turn attention to $\epsilon$-optimal \chshn strategies. One may at first hope that an approximate version of Theorem \ref{thm:chshn_optimal_strategies} holds, in the sense that \aobservables nearly satisfy the canonical anti-commutation relations on $supp_A \ketpsi$, and with $B_{ij} \approx \left( (A_i \pm A_j)/ \sqrt{2} \right)^T$ on $supp_B \ketpsi$. Unfortunately, that turns out not to be the case; the obstacle is that one can take one of the optimal strategies described in Theorem \ref{thm:chshn_optimal_strategies} where some blocks of the Schmidt coefficients for $\ketpsi$ are arbitrarily small, and then one can change the corresponding blocks of the observables $A_i, B_{jk}$ to something arbitrary. The result is that one gets an $\epsilon$-optimal \chshn strategy such that the observables \aobservables are not well-behaved on all of $supp_A \ketpsi$ and the observables \bobservables are not well-behaved on all of $supp_B \ketpsi$. 

The next best thing one could hope for is that the observables \aobservables, \bobservables, are well-behaved on some subspace of $supp_A \ketpsi \otimes supp_B \ketpsi$. One approach to finding such a subspace is to take a subspace of $supp_A \ketpsi$ on the $A$ side, and a subspace of $supp_B \ketpsi$ on the $B$ side. This approach has been pursued in reference \cite{slofstra2011lower}. The difficulty with this approach is that it gives error bounds that depend on the dimensions $d_A, d_B$ of the strategy. We have seen in Theorem \ref{thm:chshn_optimal_strategies} that $d_A, d_B$ can be arbitrarily large even for optimal strategies. 

In this paper, we take a different approach. We start with a strategy \abpsi on $\C^{d_A} \otimes \C^{d_B}$ that is $\epsilon$-optimal for \chshn. We introduce a new strategy \tildeabpsi on $\C^{2^{\lceil n/2 \rceil}} \otimes \C^{2^{\lceil n/2 \rceil}}$ that we call the canonical optimal strategy for \chshn. Then we construct a non-zero linear operator $T: \C^{2^{\lceil n/2 \rceil}} \otimes \C^{2^{\lceil n/2 \rceil}} \longrightarrow \C^{d_A} \otimes \C^{d_B}$ that approximately satisfies the intertwining operator property from representation theory. Formally, we prove the following:

\begin{theorem} \label{thm:chshn_near_optimal_strategies}
Let \abpsi be an $\epsilon$-optimal \chshn strategy on $\C^{d_A} \otimes \C^{d_B}$. Let \tildeabpsi be the canonical optimal strategy on $\C^{2^{\lceil n/2 \rceil}} \otimes \C^{2^{\lceil n/2 \rceil}}$. Then, there exists a non-zero linear operator 
\[T: \C^{2^{\lceil n/2 \rceil}} \otimes \C^{2^{\lceil n/2 \rceil}} \longrightarrow \C^{d_A} \otimes \C^{d_B}\] with the properties 
\begin{align*}
\forall i \quad \|(A_i \otimes I) T - T (\tilde{A}_i \otimes I) \|_F &< 12 n^2 \sqrt{\epsilon} \|T\|_F \\
\forall j \neq k \quad  \| (I \otimes B_{jk}) T - T (\tbjk) \|_F &< 17 n^2 \sqrt{\epsilon} \|T\|_F
\end{align*}
\end{theorem}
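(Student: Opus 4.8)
The plan is to build $T$ by the group-averaging construction adapted from representation theory, using the intertwining-operator idea sketched in the introduction. Concretely, recall that the canonical optimal strategy on $\C^{2^{\lceil n/2 \rceil}} \otimes \C^{2^{\lceil n/2 \rceil}}$ uses observables built from the $\sigma_{k,i}$ of \eqref{eq:sigma_anticommuting_observables}, so the $\tilde A_i$ generate a copy of the Clifford-type group $\S$ whose elements are (signed) products $\tilde A_1^{j_1} \cdots \tilde A_n^{j_n}$, and likewise on the $A$ side the products $\aj$ appear. Following the macro \verb|\intertwiningoperatorexpression|, I would define
\[
T \;=\; \avsumj \big(\ajtensori\big)\,|\psi\rangle\, \langle\tilde\psi|\,\big(\tajtensori\big)^\dagger,
\]
interpreting this via the bijection $\mathcal L$ of subsection \ref{subsec:a_linear_bijection} as an operator on the matrix space, i.e. $T = \avsumj \mathcal L\big((\ajtensori)|\psi\rangle\big)\,\mathcal L\big((\tajtensori)|\tilde\psi\rangle\big)^\dagger$ — a sum of rank-one pieces, each the product of an Alice-side vector with a Bob-side covector.

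First I would show $T$ is approximately intertwining for the $A_i$. The key point is that left-multiplying the summand indexed by $(j_1,\dots,j_n)$ by $\ai$ turns it (up to the sign $\signij$) into the summand indexed by $(j_1,\dots,j_i\oplus 1,\dots,j_n)$ — \emph{provided} the $A_k$ exactly anticommute and square to $I$ in the relevant inner products. Under the $\epsilon$-optimality hypothesis, Theorem \ref{thm:chshn_equality_conditions} gives that the relations hold up to total error $2n(n-1)\epsilon$; I would convert these into approximate anticommutation statements of the form $\|(A_iA_k + A_kA_i)\otimes I\,|\psi\rangle\| = O(\sqrt\epsilon)$ for the vectors actually appearing, by expanding squared norms and using $A_i^2 = I$ together with the two equivalent forms of the relations (the one with $\frac{A_i\pm A_j}{\sqrt2}$ and the one with $\frac{B_{ij}\pm B_{ji}}{\sqrt2}$). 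Telescoping the reindexing over all $2^n$ terms — each step costing $O(\sqrt\epsilon)$ relative to $\|T\|_F$, with the number of steps and the number of pairs contributing the polynomial-in-$n$ factor — yields $\|(A_i\otimes I)T - T(\tilde A_i\otimes I)\|_F < 12 n^2 \sqrt\epsilon\,\|T\|_F$. The $B_{jk}$ bound follows the same scheme but right-multiplying by $I\otimes B_{jk}$; since $B_{jk}\approx (A_j\pm A_k)^T/\sqrt2$ only approximately, one picks up one extra layer of error, explaining the slightly larger constant $17$ versus $12$.

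The genuinely delicate step — and the main obstacle — is proving $T \neq 0$, since everything is vacuous otherwise and the bound $\|\cdot\| < Cn^2\sqrt\epsilon\|T\|_F$ is only meaningful with $\|T\|_F$ bounded away from $0$ in a controlled way. Here I would estimate $\langle \tilde\psi| \,(\text{some fixed operator})\, T \,(\text{some fixed operator})|\tilde\psi\rangle$, or equivalently $\mathrm{Tr}(\Psi^\dagger \mathcal L(\cdots))$ type quantities, picking the $(j_1,\dots,j_n)=(0,\dots,0)$ diagonal contribution as the main term: it equals $\langle\tilde\psi|\psi\rangle$-like inner products that are $\Theta(1)$, while the off-diagonal terms are controlled because the canonical $\tilde A_i$ are genuine anticommuting observables, so $\langle\tilde\psi|(\tajtensori)^\dagger(\text{other word})|\tilde\psi\rangle$ vanishes by the exact group structure on the tilde side. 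In other words, orthogonality on the clean canonical side does the work that approximate orthogonality cannot do on the messy $\epsilon$-optimal side, and this asymmetry is exactly why one averages against the canonical strategy rather than trying to symmetrize the given one. I would also need $\|T\|_F \le 1$ (or some explicit bound) for the inequalities to be stated cleanly; that follows from the triangle inequality and $\|\mathcal L(|w\rangle)\|_F = \||w\rangle\|$ together with the observables being $\pm1$ (hence norm-preserving) and $|\psi\rangle,|\tilde\psi\rangle$ being unit vectors, so each of the $2^n$ summands has Frobenius norm $2^{-n}$ after the $\frac1{\sqrt{2^n}}$ normalization on both sides — giving $\|T\|_F \le 1$ directly.
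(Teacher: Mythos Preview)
Your construction of $T$ matches the paper's, and you correctly identify that orthonormality of the canonical-side vectors $\tajtensori\,|\tilde\psi\rangle$ is what makes the construction work. But the approximate-intertwining step has a genuine gap. You propose to use $\|(A_iA_k+A_kA_i)\otimes I\,|\psi\rangle\|=O(\sqrt\epsilon)$ and then ``telescope''; however this anticommutation bound holds only when the anticommutator acts directly on $|\psi\rangle$. To insert $A_i$ into $A_1^{j_1}\cdots A_n^{j_n}\otimes I\,|\psi\rangle$ you must swap $A_i$ past each $A_\ell$ in turn, and at the moment of each swap there are still other $A$-factors sitting between the anticommutator and $|\psi\rangle$. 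The paper handles this with a separate mechanism, the ``AB-switch'' (Lemma \ref{lemma:best_AB_switch} together with the normalization Lemma \ref{lemma:the_business_with_normalizing}): each obstructing $A_\ell\otimes I$ is temporarily replaced by a unit-norm operator $I\otimes(\pm B_{\ell m}+B_{m\ell})/|\pm B_{\ell m}+B_{m\ell}|$ on the $B$ side, which commutes with everything on the $A$ side; one performs the swap on $|\psi\rangle$ and then switches back. Counting at most $n$ anticommutation swaps and $2n$ AB-switches per term is precisely what produces the constants $12$ and $17$. Your proposal does not supply this mechanism, and without it the ``each step costing $O(\sqrt\epsilon)$'' claim is unjustified.

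A smaller error: your claim that $\|T\|_F\le 1$ by the triangle inequality is wrong --- there is a single prefactor $2^{-n/2}$, so each of the $2^n$ rank-one summands has Frobenius norm $2^{-n/2}$, and the triangle inequality yields only $\|T\|_F\le 2^{n/2}$. The paper instead uses the tilde-side orthonormality (which you already invoked for $T\neq 0$) to compute $\|T\|_F=1$ \emph{exactly}, via $\|T\|_F^2=2^{-n}\sum_j\|\ajtensori\,|\psi\rangle\|^2=1$; the same orthonormality, applied through Lemma \ref{lemma:calculating_frobenius_norm}, is also what prevents the per-term error bounds from accumulating a factor of $2^n$ when they are combined. (Two further minor points: the bijection $\mathcal L$ plays no role here --- it is used in Section \ref{sec:chshn_optimal_strategies} for the exact case --- and ``$\langle\tilde\psi|\psi\rangle$-like inner products'' does not type-check, since the two states live in Hilbert spaces of different dimensions.)
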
 

We now define the canonical optimal strategies that are used in the statement of Theorem \ref{thm:chshn_near_optimal_strategies}. The canonical strategy is defined differently for the cases $n=2k$ and $n=2k+1$:
\begin{enumerate}
\item For the case $n=2k$ we define the canonical strategy on the space $\C^{2^k} \otimes \C^{2^k}$ to be as follows
\begin{align*}
\tilde{A}_i &= \sigma_{k,i}, \quad i = 1, \dots 2k \\
\tilde{B}_{jl} = \oneoverroottwo (\tilde{A}_j^T + \tilde{A}_l^T), &\quad \tilde{B}_{lj} = \oneoverroottwo (\tilde{A}_j^T - \tilde{A}_l^T), \quad 1 \leq j < l \leq 2k \\
\tketpsi &= \frac{1}{\sqrt{2^k}} \sum_{i=1}^{2^k} \ket{i} \otimes \ket{i}
\end{align*}
\item For the case $n=2k+1$ we define the canonical strategy on the space $\C^{2^{k+1}} \otimes \C^{2^{k+1}}$ to be as follows
\begin{align*}
\tilde{A}_i &= \begin{bmatrix}
\sigma_{k,i} & 0 \\
0 & \sigma_{k,i}
\end{bmatrix}, \quad i = 1, \dots 2k, \quad \tilde{A}_{2k+1} = \begin{bmatrix}
\sigma_{k,2k+1} & 0 \\
0 & -\sigma_{k,2k+1}
\end{bmatrix} \\
\tilde{B}_{jl} &= \oneoverroottwo (\tilde{A}_j^T + \tilde{A}_l^T), \quad \tilde{B}_{lj} = \oneoverroottwo (\tilde{A}_j^T - \tilde{A}_l^T), \quad 1 \leq j < l \leq 2k+1 \\
\tketpsi &= \frac{1}{\sqrt{2^{k+1}}} \sum_{i=1}^{2^{k+1}} \ket{i} \otimes \ket{i}
\end{align*}
\end{enumerate}
The motivation for defining the canonical strategies in this way is that the observables $A_1, \dots A_n$ generate an algebra that is isomorphic to the Clifford algebra with $n$ generators. 

Next, we say a few words about the motivation for proving a result of the form of Theorem \ref{thm:chshn_near_optimal_strategies}. We look at it from two different points of view: the point of view of the concept of homomorphism in algebra, and the point of view of identifying a "good subspace" on which the observables from a strategy are "well-behaved". 

Consider the concept of homomorphism in algebra. When we talk of a homomorphism, we have two sets with certain operations on each, and the homomorphism is a map from one set to the other that preserves all the operations. In the context of Theorem \ref{thm:chshn_near_optimal_strategies}, the two sets are  $\C^{2^{\lceil n/2 \rceil}} \otimes \C^{2^{\lceil n/2 \rceil}}$ and $\C^{d_A} \otimes \C^{d_B}$. The operations on $\C^{2^{\lceil n/2 \rceil}} \otimes \C^{2^{\lceil n/2 \rceil}}$ are addition, scalar multiplication, and the action of the operators $\tilde{A}_i \otimes I$, $I \otimes \tilde{B}_{jk}$. The operations on $\C^{d_A} \otimes \C^{d_B}$ are addition, scalar multiplication, and the action of the operators $A_i \otimes I$, $I \otimes B_{jk}$. The operator $T$ that we construct in Theorem \ref{thm:chshn_near_optimal_strategies} is linear, so it preserves addition and scalar multiplication, and it satisfies the approximate intertwining property, so it approximately maps the action of the operators $\tilde{A}_i \otimes I$, $I \otimes \tilde{B}_{jk}$ to the action of the operators $A_i \otimes I$, $I \otimes B_{jk}$. 

Next we look at Theorem \ref{thm:chshn_near_optimal_strategies} from the point of view of identifying a "good subspace" on which the observables from a strategy are "well-behaved". We mentioned above that we can think about the classification theorem for optimal \chshn strategies as saying that $supp_A \ketpsi \otimes supp_B \ketpsi \subseteq \C^{d_A} \otimes \C^{d_B}$ is a "good subspace" on which the observables from the strategy are "well-behaved". We also saw that trying to generalize this to nearly-optimal strategies encounters difficulties if we look for a good subspace of the form $V \otimes W$ with $V \subseteq supp_A \ketpsi$ and $W \subseteq supp_B \ketpsi$. 

At this point, we take a step back to the optimal \chshn strategies. We notice that for an optimal strategy, inside the space $supp_A \ketpsi \otimes supp_B \ketpsi$ there is another space: \[ span \left\{ \aj \otimes I \ketpsi \: : \: \jinzeroonetothen \right\} \] and that this space is invariant under $A_i \otimes I$, $I \otimes B_{jk}$. The motivation for looking at this space comes from the well-known relations that connect the Bell states on two qubits and the canonical optimal CHSH(2) strategy: 
\begin{align*} I \otimes I \frac{\ket{00} + \ket{11}}{\sqrt{2}} &=  \frac{\ket{00} + \ket{11}}{\sqrt{2}} \quad \quad  \sigma_x \otimes I \frac{\ket{00} + \ket{11}}{\sqrt{2}} =  \frac{\ket{10} + \ket{01}}{\sqrt{2}} \\ \sigma_z \otimes I \frac{\ket{00} + \ket{11}}{\sqrt{2}} &=  \frac{\ket{00} - \ket{11}}{\sqrt{2}} \quad \quad \sigma_x \sigma_z \otimes I \frac{\ket{00} + \ket{11}}{\sqrt{2}} =  \frac{\ket{10} - \ket{01}}{\sqrt{2}} 
\end{align*} 

When we go to the nearly-optimal \chshn strategies, it is the space \[ span \left\{ \aj \otimes I \ketpsi \: : \: \jinzeroonetothen \right\} \] that we can identify as approximately a "good subspace". It will be clear from the proof of Theorem \ref{thm:chshn_near_optimal_strategies} that for the approximate intertwining operator $T$ we construct, \[ ImT = span \left\{ \aj \otimes I \ketpsi \: : \: \jinzeroonetothen \right\} \] It is also the case that for many optimal \chshn strategies, the space \[ span \left\{ \aj \otimes I \ketpsi \: : \: \jinzeroonetothen \right\} \] cannot be written in the form $V \otimes W$\footnote{The simplest example when $span \left\{ \aj \otimes I \ketpsi \: : \: \jinzeroonetothen \right\} $ cannot be written in the form $V \otimes W$ is when Alice and Bob share two EPR pairs and use the first one for an optimal CHSH strategy. } and this is why this subspace cannot be found by methods looking for the "good subspace" of the form $V \otimes W$ with $V \subseteq supp_A \ketpsi$ and $W \subseteq supp_B \ketpsi$.

The proof of Theorem \ref{thm:chshn_near_optimal_strategies} is in Section \ref{sec:chshn_near_optimal_strategies}. The proof gives an explicit construction of the approximately intertwining operator $T$. The construction is motivated by the above insight about the importance of the space \[ span \left\{ \aj \otimes I \ketpsi \: : \: \jinzeroonetothen \right\} \] and by the group averaging technique--a common technique of constructing intertwining operators in representation theory. 

\section{Relations for optimal and nearly-optimal quantum strategies}\label{ch:relations_for_strategies}

The goal of this section is to prove Theorems \ref{thm:equality_conditions} and \ref{thm:chshn_equality_conditions}. In subsection \ref{sec:Tsirelson_inequalities_and_SDP} we explain the relationship between non-local XOR games and semi-definite programs. This relationship has been noted previously in \cite{tsirel1987quantum,cleve2004consequences}. In subsection \ref{sec:proof_idea_for_equality_conditions} we give the main idea of the proof of Theorem \ref{thm:equality_conditions}. In subsection \ref{sec:decompositions_of_dual_solution} we show how to obtain the vectors $u_1, \dots u_r, v_1, \dots v_r$ for the statement of Theorem \ref{thm:equality_conditions} from the solution to the dual semi-definite program, and we show some properties of these vectors. In subsection \ref{sec:useful_identity_and_proof} we prove a useful identity, and obtain Theorem \ref{thm:equality_conditions} as a corollary. In subsection \ref{sec:chshn_equality_conditions}, we specialize the methods from the general case to the case of the \chshn games, and we prove Theorem \ref{thm:chshn_equality_conditions}. 

\subsection{Non-local XOR games and semi-definite programs} \label{sec:Tsirelson_inequalities_and_SDP} 

Consider the maximization problem: 
\begin{equation} \label{eq:optimizing_strategy}
 \sup_{A_i, B_j, \ketpsi} \quad \sum_{i=1}^{n} \sum_{j=1}^{m} G_{ij} \brapsi A_i \otimes B_j \ketpsi 
\end{equation}
This maximization problem expresses the search for the optimal strategy for the non-local XOR game given by the $n \times m$ matrix $G$.  The supremum is taken over all valid quantum strategies for $G$. The value of the supremum, $\beta(G)$, is the quantum success bias for the game. 

We now introduce a semi-definite program: 
\begin{equation} \label{eq:primal_sdp} 
\sup_{Z \succeq 0, \: Z \cdot E_{ii} = 1, \, i=1, \dots (n+m)} \quad G_{sym} \cdot Z 
\end{equation} 
Here, $E_{ii}$ is the $(n+m) \times (n+m)$ matrix with 1 in the $i$-th diagonal entry and 0 everywhere else, and $G_{sym}$ is the $(n+m) \times (n+m)$ matrix with block form 
\[ G_{sym} = \frac{1}{2} \begin{bmatrix} 0 & G \\ G^T & 0 \end{bmatrix} \] We can think of $G_{sym}$ as the symmetric version of the game matrix $G$. 

The two maximization problems \eqref{eq:optimizing_strategy} and \eqref{eq:primal_sdp} are related as follows: for each feasible solution of one of them, there is a feasible solution of the other that achieves the same value. Formally: 

\begin{theorem} \label{thm:strategies_and_primal_sdp}
\begin{enumerate}
\item For each quantum strategy $A_i, B_j, \ketpsi$, there is an $(n+m) \times (n+m)$ matrix $Z$ that is feasible for the semi-definite program \eqref{eq:primal_sdp} and such that 
\[ G_{sym} \cdot Z = \sum_{i=1}^{n} \sum_{j=1}^{m} G_{ij} \brapsi A_i \otimes B_j \ketpsi \]
\item For each $(n+m) \times (n+m)$ matrix $Z$ that is feasible for the semi-definite program \eqref{eq:primal_sdp} there is a quantum strategy $A_i, B_j, \ketpsi$ on $\C^{2^{\lceil (n+m)/2 \rceil}} \otimes \C^{2^{\lceil (n+m)/2 \rceil}}$ such that 
\[ \sum_{i=1}^{n} \sum_{j=1}^{m} G_{ij} \brapsi A_i \otimes B_j \ketpsi = G_{sym} \cdot Z \]
\end{enumerate}
\end{theorem}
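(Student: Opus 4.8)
The plan is to establish both parts via the Tsirelson-type correspondence between bipartite states-with-observables and Gram matrices of unit vectors, using Lemma~\ref{lemma:anti_commuting_observables_and_inner_products} as the main tool.

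For part 1, I would take a quantum strategy $A_i, B_j, \ketpsi$ and form the $n+m$ vectors $x_i = (A_i \otimes I)\ketpsi$ (for $i = 1, \dots, n$) and $y_j = (I \otimes B_j)\ketpsi$ (for $j = 1, \dots, m$). Since the $A_i$ and $B_j$ are $\pm 1$ observables and $\ketpsi$ has unit norm, each $x_i$ and each $y_j$ has unit norm. Let $Z$ be the real part of the Gram matrix of the tuple $(x_1, \dots, x_n, y_1, \dots, y_m)$: this is a real symmetric positive semi-definite $(n+m)\times(n+m)$ matrix whose diagonal entries are all $1$, so it is feasible for \eqref{eq:primal_sdp}. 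It then remains to check that $G_{sym}\cdot Z$ equals the success bias. The nonzero entries of $G_{sym}$ all lie in the off-diagonal blocks, and there $Z_{i,\,n+j} = \mathrm{Re}\,\langle x_i, y_j\rangle = \mathrm{Re}\,\brapsi A_i \otimes B_j \ketpsi$, which equals $\brapsi A_i \otimes B_j \ketpsi$ because $A_i \otimes B_j$ is Hermitian and hence has real expectation in any state. A short computation with the block form of $G_{sym}$, using that $Z$ is symmetric and $G$ is real, then gives $G_{sym}\cdot Z = \sum_{ij} G_{ij}\brapsi A_i \otimes B_j \ketpsi$.

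For part 2, let $Z$ be feasible for \eqref{eq:primal_sdp} and set $N = n+m$. Being real symmetric positive semi-definite with unit diagonal, $Z$ is the Gram matrix of real unit vectors $w_1, \dots, w_N \in \R^N$. On $\C^d \otimes \C^d$ with $d = 2^{\lceil N/2 \rceil}$, the construction in \eqref{eq:sigma_anticommuting_observables} supplies at least $N$ anti-commuting $\pm 1$ observables $\Gamma_1, \dots, \Gamma_N$; let $\ketpsi = \frac{1}{\sqrt d}\sum_{i=1}^d \ket{ii}$ be the maximally entangled state. I would then define $A_i = w_i \cdot \vec{\Gamma}$ for $i = 1, \dots, n$ and $B_j = (w_{n+j} \cdot \vec{\Gamma})^T$ for $j = 1, \dots, m$. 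Part 1 of Lemma~\ref{lemma:anti_commuting_observables_and_inner_products}, together with $\|w_i\| = 1$, shows that each $w_i \cdot \vec{\Gamma}$ is a self-adjoint involution, hence a $\pm 1$ observable, and the transpose of such an operator is again one, so the $B_j$ are valid observables. Part 2 of the lemma gives $\brapsi A_i \otimes B_j \ketpsi = w_i^T w_{n+j} = Z_{i,\,n+j}$, and the same block-form computation as in part 1 yields $\sum_{ij} G_{ij}\brapsi A_i \otimes B_j \ketpsi = G_{sym}\cdot Z$.

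Neither direction involves a serious obstacle; both are bookkeeping built on Lemma~\ref{lemma:anti_commuting_observables_and_inner_products}. The one point that needs care is in part 1: the Gram matrix of the $x_i, y_j$ is Hermitian rather than real symmetric, so one must pass to its real part, and one must verify that doing so changes nothing relevant to $G_{sym}\cdot Z$ — which holds precisely because $G$ is real-valued and $A_i \otimes B_j$ is Hermitian, so the off-diagonal blocks of the Gram matrix are already real.
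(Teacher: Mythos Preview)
Your proof is correct and is precisely the standard Tsirelson correspondence. The paper itself does not give a proof of Theorem~\ref{thm:strategies_and_primal_sdp}; it merely cites \cite{tsirel1987quantum} (with the SDP reformulation attributed to \cite{cleve2004consequences}). Your argument is essentially the content of that reference, rewritten using the tools the paper has already set up---in particular Lemma~\ref{lemma:anti_commuting_observables_and_inner_products} for part~2---so there is nothing to compare against in the paper proper. The one delicate point, that the Gram matrix in part~1 is a~priori complex Hermitian and must be replaced by its real part, you handle correctly: the real part of a Hermitian positive semi-definite matrix is real symmetric positive semi-definite, and the off-diagonal block entries $\brapsi A_i\otimes B_j\ketpsi$ are already real, so passing to the real part does not disturb the value of $G_{sym}\cdot Z$.
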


Theorem \ref{thm:strategies_and_primal_sdp} has been proved in reference \cite{tsirel1987quantum}. The exposition there uses different language, but can be converted to the language of semi-definite programs as in Theorem \ref{thm:strategies_and_primal_sdp}. The conversion to semi-definite program language has been noted in reference \cite{cleve2004consequences}.

Having established the relation between the optimization problem \eqref{eq:optimizing_strategy} and the semi-definite program \eqref{eq:primal_sdp}, we now turn attention to the dual semidefinite program. The dual to \eqref{eq:primal_sdp} is: 

\begin{equation} \label{eq:dual_sdp}
\inf_{\sum_{i=1}^{m+n} y_i E_{ii} \succeq G_{sym}} \quad \sum_{i=1}^{m+n} y_i
\end{equation}

Both the primal and the dual semi-definite programs have strictly feasible solutions; therefore, by Theorem \ref{thm:properties_of_sdp} the primal supremum is attained, the dual infimum is attained, and both are equal. Combining this with Theorem \ref{thm:strategies_and_primal_sdp}, we get that $\beta(G) = v_{primal} = v_{dual}$ and that there exists a quantum strategy that attains $\beta(G)$. 

\subsection{Proof idea for Theorem \ref{thm:equality_conditions} } \label{sec:proof_idea_for_equality_conditions} 
We are now in a position to show how to use the dual semi-definite program \eqref{eq:dual_sdp} to obtain relations that any optimal or nearly optimal quantum strategy must satisfy. 

The basic idea of the argument is to look at the duality gap and at an approximate version of the complementary slackness condition: if $y_1, \dots y_{m+n}$ is dual optimal and if $v_{primal} (1 -\epsilon) \leq G_{sym} \cdot Z \leq v_{primal}$, then
\[ v_{primal} \epsilon \geq \left(\sum_{i=1}^{m+n} y_i E_{ii} -G_{sym} \right) \cdot Z \geq 0 \]
so we can use the dual optimal solution to obtain relations on primal optimal and near-optimal solutions. We proceed with the details in the sections below. 

\subsection{Decompositions of the dual optimal solution} \label{sec:decompositions_of_dual_solution}
In the statement of Theorem \ref{thm:equality_conditions} we use vectors $u_1, \dots u_r \in \R^n$, $v_1, \dots v_r \in \R^m$. We now show how to obtain these vectors from the dual optimal solution; the argument is contained in the following lemma and its proof. 

\begin{lemma} \label{lemma:dual_vector_decomposition_properties}
Let $y_1, \dots y_{m+n}$ be an optimal solution for the dual semi-definite program \eqref{eq:dual_sdp}. Then, there exist vectors $u_1, \dots u_r \in \R^n$, $v_1, \dots v_r \in \R^m$ with the properties 
\begin{align} \label{eq:dual_vector_decomposition_properties}
&\sum_{i=1}^{r} u_i u_i^T = \sum_{i=1}^{n} y_i E_{ii} = Diag(y_1, \dots y_n) \nonumber \\
&\sum_{i=1}^{r} v_i v_i^T = \sum_{i=1}^{m} y_{n+i} E_{ii} = Diag(y_{n+1}, \dots y_{n+m}) \\
&\sum_{i=1}^{r} u_i v_i^T = \frac{1}{2} G \nonumber
\end{align}
\end{lemma}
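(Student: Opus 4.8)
The key is to recognize that the dual optimality condition, together with weak duality, forces the matrix $\sum_{i=1}^{m+n} y_i E_{ii} - G_{sym}$ to be positive semi-definite, and the goal is to extract from a PSD factorization of this matrix the vectors $u_k, v_k$ with the prescribed block structure. First I would set $M = \sum_{i=1}^{m+n} y_i E_{ii} - G_{sym}$ and note $M \succeq 0$ by dual feasibility. In block form, writing $D_A = Diag(y_1, \dots, y_n)$ and $D_B = Diag(y_{n+1}, \dots, y_{n+m})$, we have
\[ M = \begin{bmatrix} D_A & -\tfrac{1}{2} G \\ -\tfrac{1}{2} G^T & D_B \end{bmatrix}. \]
Since $M \succeq 0$, it admits a decomposition $M = \sum_{k=1}^{r} w_k w_k^T$ for some vectors $w_k \in \R^{n+m}$ (e.g.\ via the eigendecomposition, taking $w_k = \sqrt{\mu_k}\, e_k$ over the nonzero eigenvalues, so $r = \mathrm{rank}(M)$). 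Splitting each $w_k$ into its first $n$ coordinates $u_k \in \R^n$ and its last $m$ coordinates $-v_k \in \R^m$ (the sign chosen to match the off-diagonal sign), the block identities of $M = \sum_k w_k w_k^T$ read exactly
\[ \sum_k u_k u_k^T = D_A, \qquad \sum_k v_k v_k^T = D_B, \qquad \sum_k u_k(-v_k)^T = -\tfrac{1}{2} G, \]
which are the three claimed relations after cancelling the sign in the last one.

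The main step requiring care is simply justifying that $M$ is a \emph{real symmetric} PSD matrix and hence has a real PSD factorization into rank-one terms; this is immediate from dual feasibility ($\sum y_i E_{ii} \succeq G_{sym}$ is literally the dual constraint) and the fact that $G_{sym}$, $E_{ii}$ are real symmetric, so there is no genuine obstacle here — it is essentially bookkeeping with block matrices. One should also observe that $r$ can be taken to be $\mathrm{rank}(M) \le n+m$, which is what makes the later use of these vectors finite and effective, and that finding them amounts to an eigendecomposition of $M$, matching the remark in the paper about computing the $u_k, v_k$ by "finding the eigenvalues and eigenvectors of a positive semi-definite matrix."

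The only mild subtlety is tracking the factor of $\tfrac{1}{2}$ and the choice of sign: because $G_{sym}$ has the off-diagonal block $\tfrac{1}{2} G$ (not $G$), the off-diagonal block of $M$ is $-\tfrac{1}{2} G$, and splitting $w_k = (u_k, -v_k)^T$ produces $\sum_k u_k v_k^T = \tfrac12 G$ with the correct sign. I would present this as the one-line computation it is. In summary, the plan is: (1) invoke dual feasibility to get $M := \sum_i y_i E_{ii} - G_{sym} \succeq 0$; (2) take a rank-one PSD decomposition $M = \sum_{k=1}^r w_k w_k^T$ with $r = \mathrm{rank}\,M$; (3) split coordinates as $w_k = (u_k, -v_k)^T$; (4) read off the three block identities. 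There is no hard obstacle — the content is entirely in setting up the correspondence between the dual constraint matrix and the vectors appearing in Theorem~\ref{thm:equality_conditions}.
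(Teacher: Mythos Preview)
Your proposal is correct and follows essentially the same approach as the paper: form $M=\sum_i y_iE_{ii}-G_{sym}\succeq 0$ from dual feasibility, take a rank-one PSD decomposition $M=\sum_k w_k w_k^T$, split each $w_k=\begin{bmatrix}u_k\\-v_k\end{bmatrix}$, and read off the three block identities. The only cosmetic difference is that the paper does not insist on $r=\operatorname{rank}M$ but allows any such decomposition, noting the non-uniqueness.
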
 

\begin{proof}
We look at the $(n+m) \times (n+m)$ matrix $\sum_{i=1}^{m+n} y_i E_{ii} - G_{sym}$. It is positive semi-definite by the dual constraint. Therefore, there exist vectors $w_1, \dots w_r \in \R^{m+n}$ such that 
\[ \sum_{i=1}^{m+n} y_i E_{ii} - G_{sym} = \sum_{i=1}^{r} w_i w_i^T \]
One possible such decomposition comes from the orthonormal eigenvectors of $\sum_{i=1}^{m+n} y_i E_{ii} -G_{sym}$, each eigenvector multiplied by the square root of the corresponding eigenvalue. There is also freedom in choosing this decomposition; we make a remark about this after the end of the proof. 

Now we look at the block decomposition of the matrix $\sum_{i=1}^{m+n} y_i E_{ii} - G_{sym}$ and of the vectors $w_1, \dots w_r$. The $(n+m) \times (n+m)$ matrix $\sum_{i=1}^{m+n} y_i E_{ii} - G_{sym}$ can be written in block form as 
\[ \sum_{i=1}^{m+n} y_i E_{ii} - G_{sym} = \begin{bmatrix} Diag(y_1, \dots y_n) & -G/2 \\ -G^T/2 & Diag(y_{n+1}, \dots y_{n+m}) \end{bmatrix} \]
For the vectors $w_1, \dots w_r \in \R^{n+m}$, let $u_1, \dots u_r \in \R^n$, $v_1, \dots v_r \in \R^m$ be such that 
\[ w_i = \begin{bmatrix} u_i \\ -v_i \end{bmatrix} \quad i = 1, \dots r\] 
in block form. 

By using the block decompositions, we get
\[ \begin{bmatrix} Diag(y_1, \dots y_n) & -G/2 \\ -G^T/2 & Diag(y_{n+1}, \dots y_{n+m}) \end{bmatrix} = \sum_{i=1}^{r} \begin{bmatrix} u_i \\ -v_i \end{bmatrix} \begin{bmatrix} u_i^T & -v_i^T \end{bmatrix} \] 
and from here we get the relations \eqref{eq:dual_vector_decomposition_properties}. The lemma is proved. 
\end{proof}

We remark here that the choice of decomposition \[ \sum_{i=1}^{m+n} y_i E_{ii} -G_{sym} = \sum_{i=1}^{r} w_i w_i^T \] is not unique; see for example \cite{nielsen2010quantum}[p.~103-104]. The different decompositions give rise to equivalent sets of relations; nevertheless, it will be convenient in future arguments to be able to use more than one set of relations.

\subsection{A useful identity and the proof of Theorem \ref{thm:equality_conditions}} \label{sec:useful_identity_and_proof}

So far, we have obtained the vectors $u_1, \dots u_r \in \R^n$, $v_1, \dots v_r \in \R^m$ as in Lemma \ref{lemma:dual_vector_decomposition_properties}. To complete the proof of Theorem \ref{thm:equality_conditions}, we use the following identity:

\begin{lemma} \label{lemma:useful_identity}
Let $A_1 \dots A_n$, $B_1, \dots B_m$, $\ketpsi$ be a quantum strategy. Let $u_1, \dots u_r \in \R^n$, $v_1, \dots v_r \in \R^m$ be vectors satisfying the relations \eqref{eq:dual_vector_decomposition_properties}
Then, the following identity holds:
\begin{equation} \label{eq:useful_identity}
\sum_{k=1}^{r} \left\| u_k \cdot \vec{A} \otimes I \ketpsi - I \otimes v_k \cdot \vec{B} \ketpsi \right\|^2 = \sum_{i=1}^{m+n} y_i - \sum_{i=1}^{n} \sum_{j=1}^{m} G_{ij} \brapsi A_i \otimes B_j \ketpsi
\end{equation}
\end{lemma}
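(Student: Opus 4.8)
The plan is to prove the identity \eqref{eq:useful_identity} by expanding the left-hand side, a sum of squared norms, into inner products, and then using the three relations in \eqref{eq:dual_vector_decomposition_properties} to simplify each of the resulting pieces. Concretely, for each $k$ I would expand
\[
\left\| u_k \cdot \vec{A} \otimes I \ketpsi - I \otimes v_k \cdot \vec{B} \ketpsi \right\|^2 = \brapsi (u_k \cdot \vec{A})^2 \otimes I \ketpsi + \brapsi I \otimes (v_k \cdot \vec{B})^2 \ketpsi - 2 \mathrm{Re}\, \brapsi (u_k \cdot \vec{A}) \otimes (v_k \cdot \vec{B}) \ketpsi,
\]
using that the operators $A_i$, $B_j$ are self-adjoint so $(u_k\cdot\vec A)\otimes I$ and $I\otimes(v_k\cdot\vec B)$ are self-adjoint, and that $\ketpsi$ has unit norm.

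Next I would sum over $k$ and handle the three families of terms separately. For the first family, $\sum_k \brapsi (u_k \cdot \vec{A})^2 \otimes I \ketpsi = \sum_k \sum_{i,i'} (u_k)_i (u_k)_{i'} \brapsi A_i A_{i'} \otimes I \ketpsi$; swapping the order of summation and using $\sum_k u_k u_k^T = Diag(y_1,\dots,y_n)$ kills the off-diagonal terms and leaves $\sum_{i=1}^n y_i \brapsi A_i^2 \otimes I \ketpsi = \sum_{i=1}^n y_i$, since $A_i^2 = I$ for a $\pm1$ observable and $\|\ketpsi\|=1$. The second family similarly collapses to $\sum_{i=1}^m y_{n+i}$ using $\sum_k v_k v_k^T = Diag(y_{n+1},\dots,y_{n+m})$ and $B_j^2=I$. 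For the cross term, $\sum_k \brapsi (u_k \cdot \vec{A}) \otimes (v_k \cdot \vec{B}) \ketpsi = \sum_{i,j} \left(\sum_k (u_k)_i (v_k)_j\right) \brapsi A_i \otimes B_j \ketpsi = \sum_{i,j} \tfrac{1}{2} G_{ij} \brapsi A_i \otimes B_j \ketpsi$, using $\sum_k u_k v_k^T = \tfrac12 G$; this quantity is real (it equals $\beta(G,\S)$, a sum of expectation values of self-adjoint operators $A_i\otimes B_j$), so the $\mathrm{Re}$ is harmless and $-2$ times it gives $-\sum_{i,j} G_{ij}\brapsi A_i\otimes B_j\ketpsi$. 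Adding the three contributions yields exactly the right-hand side of \eqref{eq:useful_identity}.

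The only mild subtlety — not really an obstacle — is bookkeeping with the complex conjugates: one should note that $\brapsi A_i \otimes B_j \ketpsi$ need not individually be real if $\ketpsi$ is complex, but the full sum $\sum_{ij} G_{ij}\brapsi A_i\otimes B_j\ketpsi$ is real because it equals $\brapsi \left(\sum_{ij}G_{ij} A_i\otimes B_j\right)\ketpsi$ with $\sum_{ij}G_{ij}A_i\otimes B_j$ self-adjoint ($G$ being real); alternatively one can just observe the left-hand side of \eqref{eq:useful_identity} is manifestly real and nonnegative and work symmetrically. After establishing Lemma \ref{lemma:useful_identity}, Theorem \ref{thm:equality_conditions} follows immediately: take $\vec y$ to be the dual optimal solution, so $\sum_i y_i = v_{dual} = v_{primal} = \beta(G)$ by strong duality (Theorem \ref{thm:properties_of_sdp} and the discussion in Section \ref{sec:Tsirelson_inequalities_and_SDP}); then the right-hand side of \eqref{eq:useful_identity} is $\beta(G) - \beta(G,\S)$, which lies in $[0, \beta(G)\epsilon]$ precisely when $\S$ is $\epsilon$-optimal, giving the claimed equivalence, and setting $\epsilon=0$ forces each norm to vanish.
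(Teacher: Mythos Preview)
Your proof is correct and follows essentially the same approach as the paper's: expand the squared norms on the left-hand side, then use the three relations \eqref{eq:dual_vector_decomposition_properties} to evaluate the three resulting sums. One small remark: your caveat that $\brapsi A_i \otimes B_j \ketpsi$ ``need not individually be real'' is unnecessary, since $A_i \otimes B_j$ is a tensor product of self-adjoint operators and hence self-adjoint, so each such expectation value is already real; the $\mathrm{Re}$ can simply be dropped from the outset.
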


\begin{proof}
We open the squares on the left-hand side:
\begin{multline*}
\sum_{k=1}^{r} \left\| u_k \cdot \vec{A} \otimes I \ketpsi - I \otimes v_k \cdot \vec{B} \ketpsi \right\|^2  \\
= \sum_{i=1}^{r} \brapsi \left(u_i \cdot \vec{A} \right)^2 \otimes I \ketpsi + \sum_{i=1}^{r} \brapsi I \otimes \left(v_i \cdot \vec{B} \right)^2 \ketpsi - 2 \sum_{i=1}^{r} \brapsi \left(u_i \cdot \vec{A} \right) \otimes \left(v_i \cdot \vec{B} \right) \ketpsi 
\end{multline*}

Now, from the property
\[ \sum_{i=1}^{r} u_i u_i^T = \sum_{i=1}^{n} y_i E_{ii} \]
we obtain 
\[ \sum_{i=1}^{r} \left(u_i \cdot \vec{A} \right)^2 = \sum_{i=1}^n y_i A_i^2 + \sum_{i \neq j} 0 A_i A_j = \left( \sum_{i=1}^n y_i \right) I \]

Similarly, from the property
\[ \sum_{i=1}^{r} v_i v_i^T = \sum_{i=1}^{m} y_{n+i} E_{ii} \]
we obtain 
\[ \sum_{i=1}^{r} \left(v_i \cdot \vec{B} \right)^2 = \sum_{i=1}^m y_{n+i} B_i^2 + \sum_{i \neq j} 0 B_i B_j = \left( \sum_{i=1}^m y_{n+i} \right) I \]

Finally, from the property 
\[ \sum_{i=1}^{r} u_i v_i^T = \frac{1}{2} G \]
we obtain 
\[ 2 \sum_{i=1}^{r} \left(u_i \cdot \vec{A} \right) \otimes \left(v_i \cdot \vec{B} \right) = \sum_{i=1}^{n} \sum_{j=1}^{m} G_{ij} A_i \otimes B_j \]

The identity \eqref{eq:useful_identity} follows. 
\end{proof}

Using Lemma \ref{lemma:useful_identity}, we can complete the proof of Theorem \ref{thm:equality_conditions}. 

\begin{proof}[Proof of Theorem \ref{thm:equality_conditions}]
We have chosen $y_1, \dots y_{n+m}$ to be a dual optimal solution, so $\sum_{i=1}^{n+m} y_i = \beta(G)$. Then, by Lemma \ref{lemma:useful_identity}, 
\[ \sum_{k=1}^{r} \left\| u_k \cdot \vec{A} \otimes I \ketpsi - I \otimes v_k \cdot \vec{B} \ketpsi \right\|^2 = \beta(G) - \sum_{i=1}^{n} \sum_{j=1}^{m} G_{ij} \brapsi A_i \otimes B_j \ketpsi \] 

It follows that \[ (1-\epsilon) \beta(G)  \leq \sum_{i=1}^{n} \sum_{j=1}^{m} G_{ij} \brapsi A_i \otimes B_j \ketpsi \leq \beta(G)\] if and only if 
\[ \sum_{k=1}^{r} \left\| u_k \cdot \vec{A} \otimes I \ketpsi - I \otimes v_k \cdot \vec{B} \ketpsi \right\|^2 \leq \beta(G) \epsilon \] Theorem \ref{thm:equality_conditions} is proved. 
\end{proof}

\subsection{Relations for \chshn optimal and nearly optimal strategies} \label{sec:chshn_equality_conditions}

In this section, we prove Theorem \ref{thm:chshn_equality_conditions}. We look at the dual semi-definite program corresponding to the \chshn game, and we find two explicit decompositions of the form given in subsection \ref{sec:decompositions_of_dual_solution}. Using these decompositions, we obtain Theorem \ref{thm:chshn_equality_conditions}. 

We take the $n \times n(n-1)$ matrix $G$ that summarizes the information for the \chshn game. From subsection \ref{subsec:definition_of_chshn} we know that \[ G = \frac{1}{4{n \choose 2}} \sum_{1 \leq i < j \leq n} \left( \ket{i} \bra{ij} + \ket{j} \bra{ij} + \ket{i} \bra{ji} - \ket{j} \bra{ji} \right)\]

Next, we form the $n^2 \times n^2$ matrix $G_{sym}$ which has the block form:
\[ G_{sym} =  \frac{1}{2} \begin{bmatrix} 0 & G \\ G^T & 0 \end{bmatrix} \]
In this context, it is convenient to think of $\R^{n^2}$ as having an orthonormal basis formed by concatenating the basis $\ket{1}, \dots \ket{n}$ of $\R^n$ and the basis $\ket{ij}, \: i \neq j \in \{1, \dots n\}$ of $\R^{n(n-1)}$. So, we can write 
\begin{multline*}
G_{sym} = \frac{1}{8 {n \choose 2}}  \sum_{1 \leq i < j \leq n} \Big( \ket{i} \bra{ij} + \ket{j} \bra{ij} + \ket{i} \bra{ji} - \ket{j} \bra{ji} \\
+ \ket{ij} \bra{i} + \ket{ij} \bra{j} +\ket{ji} \bra{i} - \ket{ji} \bra{j}  \Big) 
\end{multline*}

Next, we form the dual semi-definite program corresponding to the  \chshn game; it is
\begin{align*}
\inf_{\sum_{i=1}^{n^2} y_i E_{ii} \succeq G_{sym}} \quad \sum_{i=1}^{n^2} y_i
\end{align*}

We know that the optimal value is $\oneoverroottwo$; this follows from the result in reference \cite{slofstra2011lower} about the quantum success bias of the \chshn game, and the discussion in Section \ref{sec:Tsirelson_inequalities_and_SDP}. 

Next, we claim that $y_1 = \dots = y_n = \frac{1}{2 \sqrt{2} n}$, $y_{n+1} = \dots = y_{n^2} = \frac{1}{2 \sqrt{2} n (n-1)}$ is a dual optimal solution. We can see that $\sum_{i=1}^{n^2} y_i = \oneoverroottwo$, the dual optimum, so all that is left to prove is that $y_1, \dots y_{n^2}$ is dual feasible. 

To prove that $y_1, \dots y_{n^2}$ is dual feasible, we show that 
\[ \sum_{i=1}^{n^2} y_i E_{ii} \succeq G_{sym} \]

We define the following vectors for $1 \leq i <j \leq N$
\begin{align*} 
& u_{ij} = \ket{i} \quad \quad  v_{ij} = \frac{\ket{ij} + \ket{ji}}{\sqrt{2}} \nonumber \\
& u_{ji} = \ket{j} \quad \quad  v_{ji} = \frac{\ket{ij} - \ket{ji}}{\sqrt{2}} 
\end{align*}
and observe that the following decomposition holds:
\begin{multline}\label{eq:chshn_dual_decomposition1}
\sum_{i=1}^{n^2}  y_i E_{ii} - G_{sym}  \\
= \frac{1}{2 \sqrt{2} n (n-1)} \sum_{1 \leq i < j \leq n} \left( \left( u_{ij} - v_{ij} \right) \left( u_{ij} - v_{ij} \right)^T + \left( u_{ji} - v_{ji} \right) \left( u_{ji} - v_{ji} \right)^T \right)
\end{multline}

It follows that the matrix $\sum_{i=1}^{n^2}  y_i E_{ii} - G_{sym}$ is positive semi-definite, and therefore, the given $y_1, \dots y_{n^2}$ are a dual optimal solution as claimed. 

Now, from the decomposition \eqref{eq:chshn_dual_decomposition1}, we conclude that the following two statements are equivalent: 
\begin{itemize}
\item $A_i, B_{jk}, \ketpsi$ is an $\epsilon$-optimal CHSH($n$) strategy.
\item The observables and state satisfy
\begin{multline*}
\sum_{1 \leq i < j \leq n} \Bigg( \left\| A_i \otimes I \ketpsi - I \otimes \frac{B_{ij} + B_{ji}}{\sqrt{2}} \ketpsi \right\|^2  \\
+ \left\| A_j \otimes I \ketpsi - I \otimes \frac{B_{ij} - B_{ji}}{\sqrt{2}} \ketpsi \right\|^2 \Bigg) \leq 2 n (n-1) \epsilon 
\end{multline*}
\end{itemize}
The argument is the same as the argument in subsections \ref{sec:decompositions_of_dual_solution} and \ref{sec:useful_identity_and_proof}. 

Next, we define the following vectors for $1 \leq i <j \leq N$
\begin{align*} 
& u'_{ij} = \frac{\ket{i} + \ket{j}}{\sqrt{2}} \quad \quad v'_{ij} = \ket{ij} \nonumber \\
& u'_{ji} = \frac{\ket{i} - \ket{j}}{\sqrt{2}} \quad \quad v'_{ji} = \ket{ji}
\end{align*}
and observe that the following decomposition holds:
\begin{multline*}
\sum_{i=1}^{n^2}  y_i E_{ii} - G_{sym} \\
= \frac{1}{2 \sqrt{2} n (n-1)} \sum_{1 \leq i < j \leq n} \left( \left( u'_{ij} - v'_{ij} \right) \left( u'_{ij} - v'_{ij} \right)^T + \left( u'_{ji} - v'_{ji} \right) \left( u'_{ji} - v'_{ji} \right)^T \right)
\end{multline*}

From this we conclude that the following two statements are equivalent:
\begin{itemize}
\item $A_i, B_{jk}, \ketpsi$ is an $\epsilon$-optimal CHSH($n$) strategy. 
\item The observables and state satisfy
\begin{multline*} 
\sum_{1 \leq i < j \leq n} \Bigg( \left\| \frac{A_i + A_j}{\sqrt{2}} \otimes I \ketpsi - I \otimes B_{ij} \ketpsi \right\|^2  \\ 
+ \left\| \frac{A_i - A_j}{\sqrt{2}} \otimes I \ketpsi - I \otimes B_{ji} \ketpsi \right\|^2 \Bigg) \leq 2 n (n-1) \epsilon 
\end{multline*}
\end{itemize}

This completes the proof of Theorem \ref{thm:chshn_equality_conditions}. 

\section{Classification of \chshn optimal strategies} \label{sec:chshn_optimal_strategies}

The goal of this section is to prove Theorem \ref{thm:chshn_optimal_strategies}. Theorem \ref{thm:chshn_optimal_strategies} claims the equivalence of two statements: 
\begin{itemize}
\item A strategy is optimal for the \chshn game
\item There are bases for Alice's space and for Bob's space with respect to which the strategy has a certain form. 
\end{itemize}

We prove that the first statement implies the second in subsection \ref{subsec:optimal_chshn_strategy_must_have_a_certain_form}, and we prove that the second statement implies the first in subsection \ref{subsec:a_strategy_of_a_certain_form_is_optimal_for_chshn}.

\subsection{An optimal \chshn strategy must have a certain form}\label{subsec:optimal_chshn_strategy_must_have_a_certain_form}

Let \abpsi be an arbitrary optimal \chshn strategy on $\C^{d_A} \otimes \C^{d_B}$. Our goal is to show that this strategy has the structure described in Theorem \ref{thm:chshn_optimal_strategies}. 

From Theorem \ref{thm:chshn_equality_conditions} we know that the following relations are satisfied for all $i, j \: 1 \leq i < j \leq n$
\begin{align*} 
A_i \otimes I \ketpsi &= I \otimes \frac{B_{ij} + B_{ji}}{\sqrt{2}} \ketpsi \\
A_j \otimes I \ketpsi &= I \otimes \frac{B_{ij} - B_{ji}}{\sqrt{2}} \ketpsi
\end{align*} 

Let $\Psi = \mathcal{L}(\ketpsi)$ be the $d_A \times d_B$ matrix that corresponds to $\ketpsi \in \C^{d_A} \otimes \C^{d_B}$ (subsection \ref{subsec:a_linear_bijection}). To the relations above correspond the following relations in terms of $\Psi$:
\begin{equation} \label{eq:intertwining relations}
\begin{aligned}
A_i \Psi &= \Psi \left(\frac{B_{ij} + B_{ji}}{\sqrt{2}}\right)^T \\
A_j \Psi &= \Psi \left(\frac{B_{ij} - B_{ji}}{\sqrt{2}} \right)^T
\end{aligned}
\end{equation}

It follows that the space $Im \Psi \subset \C^{d_A}$ is invariant under the observables \aobservables, by using Schur's Lemma (lemma \ref{lemma:Schurs_lemma}). 

Let the non-zero terms in the Schmidt decomposition of $\ketpsi$ be 
\[ \ketpsi = \sum_{i=1}^{r} \sqrt{\lambda_i} \ket{u_i} \otimes \ket{v_i} \]
Choose $\ket{u_1}, \dots \ket{u_r}$ as an orthonormal basis of $Im \Psi$, and complete it to an orthonormal basis of $\C^{d_A}$. With respect to this basis, the observables \aobservables have the block form 
\[A_i = \begin{bmatrix}
A'_i & 0 \\
0 & C_i
\end{bmatrix} \]
where $A'_i$ acts on $Im \Psi$, and $C_i$ acts on the orthgonal complement. 

From $A_i^\dagger = A_i, \: A_i^2 = I$, it follows that ${A'_i}^\dagger = A'_i, \: {A'_i}^2 = I$ and $C_i^\dagger = C_i, \: C_i^2 = I$. 

It is clear at this point that the blocks $C_i, \: i=1, \dots n$ may be arbitrary, and that they don't in any way influence the quantum value achieved by the strategy. From now on, we focus on the observables $A'_i$ that act on the space $Im \Psi$. 

We now claim that for all $i,j, \: 1 \leq i <j \leq n$, $\{A'_i, A'_j\}=0$. This is because 
\[ \{A_i, A_j\} \Psi = \Psi \left\{ \frac{B_{ij}^T + B_{ji}^T}{\sqrt{2}}, \frac{B_{ij}^T - B_{ji}^T}{\sqrt{2}} \right\} = \Psi \left( (B_{ij}^T)^2 - (B_{ji}^T)^2 \right) = 0 \]

It follows that $A'_1, \dots A'_n$ are anti-commuting $\pm 1$ observables on the space $Im \Psi$. We apply Theorem \ref{thm:the_general_form_of_n_anti_commuting_observables} and get that the number of non-zero Schmidt coefficients of $\ketpsi$ is an integer multiple of $\poweroftwo$. Let $r = s \poweroftwo$. 

We now consider the operator $\Psi \Psi^\dagger$, which takes the space $Im \Psi$ to itself. Form the relations \eqref{eq:intertwining relations}, it follows that 
\[ A_i \Psi \Psi^\dagger = \Psi \left(\frac{B_{ij} + B_{ji}}{\sqrt{2}}\right)^T \Psi^\dagger = \Psi \left(\left(\frac{B_{ij} + B_{ji}}{\sqrt{2}}\right)^T \right)^\dagger \Psi^\dagger = \Psi \Psi^\dagger A_i^\dagger = \Psi \Psi^\dagger A_i\]

We now apply Schur's lemma, and conclude that all eigenspaces of $\Psi \Psi^\dagger$ must be invariant spaces for the observables $A'_1, \dots A'_n$. It then follows that all eigenspaces of $\Psi \Psi^\dagger$ must have dimension an integer multiple of $\poweroftwo$. 

From this conclusion about the eigenspaces of $\Psi \Psi^\dagger$, and from the expression \[ \Psi \Psi^\dagger = \sum_{i=1}^{s \poweroftwo} \lambda_i \ket{u_i} \bra{u_i} \] we get that the non-zero Schmidt coefficients of $\ketpsi$ must come in blocks of length $\poweroftwo$ that are equal, i.e.
\begin{align*}
\lambda_1 = &\dots = \lambda_{\poweroftwo} \\
\lambda_{\poweroftwo +1} = &\dots = \lambda_{2 \cdot \poweroftwo} \\
&\dots \\
\lambda_{(s-1) \poweroftwo +1} = &\dots = \lambda_{s \poweroftwo}
\end{align*}

Returning to the observables $A'_1, \dots A'_n$, we apply Theorem \ref{thm:the_general_form_of_n_anti_commuting_observables} and get that with respect to the basis $\ket{u_1}, \dots \ket{u_{s \poweroftwo}}$, the observables $A'_1, \dots A'_n$ have the block diagonal form:
\[ A'_i = \begin{bmatrix}
A_i^{(1)} && \\
& \ddots & \\
&& A_i^{(s)} 
\end{bmatrix} \]
where each $A_i^{(j)}$ is $\poweroftwo \times \poweroftwo$ and acts on $span(\ket{u_{(j-1) \poweroftwo +1}}, \dots \ket{u_{j \poweroftwo}})$, and, for each $i = 1, \dots n$, for each $j=1, \dots s$, $A_i^{(j)} = \sigma_{\lfloor n/2 \rfloor, i}$ except for the case $n =2k+1$, and $i=n$, in which case $A_i^{(j)}$ is either $\sigma_{k,2k+1}$ or $-\sigma_{k,2k+1}$.

The proof of the forward direction of Theorem \ref{thm:chshn_optimal_strategies} is now almost complete; it remains to prove the statement about \bobservables. We take the following relations from Theorem \ref{thm:chshn_equality_conditions}: 
\begin{align*}
\frac{A_i + A_j}{\sqrt{2}} \otimes I \ketpsi &= I \otimes B_{ij} \ketpsi \\
\frac{A_i - A_j}{\sqrt{2}} \otimes I \ketpsi &= I \otimes B_{ji} \ketpsi
\end{align*} 
and we rewrite them in terms of $\Psi^T$ to get
\begin{equation}\label{eq:transpose_intertwining_relations}
\begin{aligned}
B_{ij} \Psi^T &= \Psi^T \left( \frac{A_i + A_j}{\sqrt{2}} \right)^T \\
B_{ji} \Psi^T &= \Psi^T \left( \frac{A_i - A_j}{\sqrt{2}} \right)^T
\end{aligned}
\end{equation}

It follows from Schur's lemma that $Im \Psi^T = span(\ket{v_1}, \dots \ket{v_{s \poweroftwo}} )$ is invariant under \bobservables, and so \bobservables have the block diagonal form
\[B_{jk} = \begin{bmatrix}
B'_{jk} & 0 \\
0 & D_{jk}
\end{bmatrix} \]
where the $\pm 1$ observables $B'_{jk}$ act on $Im \Psi^T$ and the $\pm 1$ observables $D_{jk}$ act on the orthogonal complement. 

The final thing that is left to show is the block-diagonal decomposition
\[B'_{jk} = \begin{bmatrix}
B_{jk}^{(1)} && \\
& \ddots & \\
&& B_{jk}^{(s)}
\end{bmatrix} \]
and the relations on the individual blocks, for $1 \leq j < k \leq n$
\[ B_{jk}^{(l)} = \left(\frac{A_j^{(l)} + A_k^{(l)}}{\sqrt{2}}\right)^T \quad  B_{kj}^{(l)} = \left(\frac{A_j^{(l)} - A_k^{(l)}}{\sqrt{2}}\right)^T \]
These follow from the relations \eqref{eq:transpose_intertwining_relations} and from the fact that with respect to the basis  $\ket{u_1^*}, \dots \ket{u_{s \poweroftwo}^*}$ of the source space and the basis $\ket{v_1}, \dots \ket{v_{s \poweroftwo}}$ of the target space, $\Psi^T$ has the block diagonal form 
\[ \Psi^T = \begin{bmatrix}
\sqrt{\lambda_{\poweroftwo}} I & & \\
& \ddots & \\
& & \sqrt{\lambda_{s \poweroftwo}} I
\end{bmatrix}  \]

The forward direction of Theorem \ref{thm:chshn_optimal_strategies} is proved. 

\subsection{Any strategy of a certain form is optimal for \chshn}\label{subsec:a_strategy_of_a_certain_form_is_optimal_for_chshn}

We assume that a strategy \abpsi on $\C^{d_A} \otimes \C^{d_B}$ has the form described in Theorem \ref{thm:chshn_optimal_strategies}. We have to show that \abpsi is an optimal \chshn strategy. 

First, we use the description of the Schmidt decomposition of $\ketpsi$ (the first bullet), to write
\[ \ketpsi = \sum_{l=1}^s \sqrt{\poweroftwo} \sqrt{\lambda_{l \poweroftwo}} \ket{\psi_l}  \]
where 
\[ \ket{\psi_l} = \frac{1}{\sqrt{\poweroftwo}} \sum_{r=(l-1) \poweroftwo +1}^{l \poweroftwo} \ket{u_r} \otimes \ket{v_r} \]

Next, we claim that for each $i,j, \: 1 \leq i < j \leq n$, the following two statements hold, the first for indvidual blocks, and the second for the whole observables:
\begin{itemize}
\item For each block number $l, \; 1 \leq l \leq s$, 
\begin{equation}\label{eq:blocks_satisfy_chsh_correlations}
\begin{aligned}
\bra{\psi_l} A_i^{(l)} \otimes B_{ij}^{(l)} \ket{\psi_l} &= \frac{1}{\sqrt{2}} \quad \quad
\bra{\psi_l} A_i^{(l)} \otimes B_{ji}^{(l)} \ket{\psi_l} = \frac{1}{\sqrt{2}} \\
\bra{\psi_l} A_j^{(l)} \otimes B_{ij}^{(l)} \ket{\psi_l} &= \frac{1}{\sqrt{2}} \quad \quad
\bra{\psi_l} A_j^{(l)} \otimes B_{ji}^{(l)} \ket{\psi_l} = - \frac{1}{\sqrt{2}} 
\end{aligned}
\end{equation}
\item For the whole observables, 
\begin{equation}\label{eq:observables_satisfy_chshn_correlations}
\begin{aligned}
\bra{\psi} A_i \otimes B_{ij} \ket{\psi} &= \frac{1}{\sqrt{2}} \quad \quad
\bra{\psi} A_i \otimes B_{ji} \ket{\psi} = \frac{1}{\sqrt{2}} \\
\bra{\psi} A_j \otimes B_{ij} \ket{\psi} &= \frac{1}{\sqrt{2}} \quad \quad
\bra{\psi} A_j \otimes B_{ji} \ket{\psi} = - \frac{1}{\sqrt{2}} 
\end{aligned}
\end{equation}
\end{itemize} 

Consider the first statement, the one for individual blocks. We know $A_i^{(l)}, \, i = 1, \dots n$ anti-commute on the space $span(\ket{u_{(l-1) \poweroftwo +1}}, \dots \ket{u_{l \poweroftwo}}) = supp_A \ket{\psi_l}$. We also know that on the space $span(\ket{v_{(l-1) \poweroftwo +1}}, \dots \ket{v_{l \poweroftwo}}) = supp_B \ket{\psi_l}$ we have, for $1 \leq j <k \leq n$
\[ B_{jk}^{(l)} = \left(\frac{A_j^{(l)} + A_k^{(l)}}{\sqrt{2}}\right)^T \quad  B_{kj}^{(l)} = \left(\frac{A_j^{(l)} - A_k^{(l)}}{\sqrt{2}}\right)^T \] And finally, we know $\ket{\psi_l}$ is maximally entangled on \[span(\ket{u_{(l-1) \poweroftwo +1}}, \dots \ket{u_{l \poweroftwo}}) \otimes span(\ket{v_{(l-1) \poweroftwo +1}}, \dots \ket{v_{l \poweroftwo}})\] We apply Lemma \ref{lemma:anti_commuting_observables_and_inner_products}and obtain the relations \eqref{eq:blocks_satisfy_chsh_correlations}. 

The statement for the whole observables follows from the statement for the individual blocks. We show this for $\brapsi A_i \otimes B_{ij} \ketpsi$:
\begin{multline*}
\brapsi A_i \otimes B_{ij} \ketpsi = \sum_{l=1}^s \poweroftwo \lambda_{l \poweroftwo} \bra{\psi_l} A_i^{(l)} \otimes B_{ij}^{(l)} \ket{\psi_l}  
= \sum_{l=1}^s \poweroftwo \lambda_{l \poweroftwo} \frac{1}{\sqrt{2}} = \frac{1}{\sqrt{2}}
\end{multline*}
The other three terms are analogous. 

Now, from the relations \eqref{eq:observables_satisfy_chshn_correlations}, we see that the \chshn value of the strategy \abpsi is 
\[ \chshnexpression = \frac{1}{\sqrt{2}}  \]
so \abpsi is an optimal \chshn strategy. The reverse direction of Theorem \ref{thm:chshn_optimal_strategies} is proved. 

\section{Approximate intertwining operator construction for \chshn near-optimal strategies} \label{sec:chshn_near_optimal_strategies}

The goal of this section is to prove Theorem \ref{thm:chshn_near_optimal_strategies}. That is, given an arbitrary $\epsilon$-optimal \chshn strategy \abpsi on \strategyspace, and the canonical optimal \chshn strategy \tildeabpsi on \canonicalstrategyspace, we want to show the existence of a non-zero linear operator \[T: \C^{2^{\lceil n/2 \rceil}} \otimes \C^{2^{\lceil n/2 \rceil}} \longrightarrow \C^{d_A} \otimes \C^{d_B}\] with the properties 
\begin{align*}
\forall i \quad \|(A_i \otimes I) T - T (\tilde{A}_i \otimes I) \|_F &< 12 n^2 \sqrt{\epsilon} \|T\|_F \\
\forall j \neq k \quad  \| (I \otimes B_{jk}) T - T (\tbjk) \|_F &< 17 n^2 \sqrt{\epsilon} \|T\|_F
\end{align*}

We construct $T$ explicitly:
\[ T = \intertwiningoperatorexpression \]
The motivation for this construction comes from the insight about the importance of the space \[ span \left\{ \aj \otimes I \ketpsi \: : \: \jinzeroonetothen \right\} \] and from the group averaging technique of constructing intertwining operators. In our context, representations of finite groups are not explicitly present. However, the relations on optimal and nearly-optimal \chshn strategies from Theorem \ref{thm:chshn_equality_conditions} are very strong and we can use them to prove the $T$ defined above behaves approximately like an intertwining operator with respect to the observables of the two strategies. 

The argument proceeds in the following steps:
\begin{enumerate}
\item We prove that the vectors 
\[ \left\{ \tildeajpsi \: : \: \jinzeroonetothen \right\} \] coming from the canonical strategy are orthonormal. 
\item From this, we derive that $\|T\|_F = 1$, and so also $T \neq 0$. 
\item Next, we show that we can write 
\begin{multline}
(\ai) T - T (\tai) = \avsumj \Big( A_i \ajpsi  \\
- \ajpsimodi \Big) \tildepsiajinv
\end{multline}
Here the $\signij$ notation has to do with the sign resulting from changing the order in a product of anti-commuting observables and will be defined in detail later. 
\item Similarly, we show we can write
\begin{multline}
(\bkl) T - T (\tbkl) = \avsumj \Bigg( \ajbklpsi  \\
 - \oneoverroottwo \Big( \pm \ajpsimodk  \\
 + \ajpsimodl \Big) \Bigg) 
 \tildepsiajinv
\end{multline}
In the place where there is $\pm$, we take $+$ if $k<l$ and we take $-$ if $k>l$.
\item Next, we show that for all $i \in \{1, \dots n\}$, for all $\jinzeroonetothen$, 
\begin{multline}
\Big\| A_i \ajpsi - \ajpsimodi \Big\|   \\
\leq (6 + 4 \sqrt{2}) n^2 \sqrt{\epsilon} < 12 n^2 \sqrt{\epsilon}
\end{multline}
\item Similarly we show that for all $k \neq l \in \{1, \dots n\}$, for all $\jinzeroonetothen$,
\begin{multline}
\Bigg\| \ajbklpsi  
 - \oneoverroottwo \Big( \pm \ajpsimodk \\
 + \ajpsimodl \Big) \Bigg\| \\
 \leq \left( \frac{17}{2} + 6 \sqrt{2} \right) n^2 \sqrt{\epsilon} < 17 n^2 \sqrt{\epsilon}
\end{multline}
\item Finally, we combine all the previous steps to show that 
\begin{align*}
\forall i \quad \|(A_i \otimes I) T - T (\tilde{A}_i \otimes I) \|_F &< 12 n^2 \sqrt{\epsilon} \|T\|_F \\
\forall j \neq k \quad  \| (I \otimes B_{jk}) T - T (\tbjk) \|_F &< 17 n^2 \sqrt{\epsilon} \|T\|_F
\end{align*}
as required for the proof of Theorem \ref{thm:chshn_near_optimal_strategies}. 
\end{enumerate}

The seven subsections below are devoted to the detailed arguments for the seven steps outlined above. 

\subsection{Orthonormal vectors}\label{subsec:orthonormal_vectors}

Here we aim to show that the vectors 
\[ \left\{ \tildeajpsi \: : \: \jinzeroonetothen \right\} \] coming from the canonical strategy are orthonormal. 

First, we reduce this to proving that $\ket{\tilde{\psi}}$ is orthogonal to $\tildeajpsi$ for each nonzero $\jinzeroonetothen$. This works because we can use the anti-commutation relations for the \tildeaobservables to show that given $(k_1, \dots k_n), \, (l_1, \dots l_n) \, \in \, \{0,1\}^n$, one can take $(j_1, \dots j_n) = (k_1 \oplus l_1, \dots k_n \oplus l_n)$ and have \[  \bra{\tilde{\psi}} \left( \tilde{A}_1^{k_1} \dots \tilde{A}_n^{k_n} \otimes I \right)^{\dagger} \tilde{A}_1^{l_1} \dots \tilde{A}_n^{l_n} \otimes I  \ket{\tilde{\psi}} = \bra{\tilde{\psi}} \left( \pm \tildeajpsi \right) \]

Now, we prove that $\ket{\tilde{\psi}}$ is orthogonal to $\tildeajpsi$ for each nonzero $\jinzeroonetothen$. There are two cases: one case is if $n$ is odd and $(j_1, \dots j_n) = (1, \dots 1)$ and the second case is all other situations. 

We consider the first case. For $n$ odd, we have
\[ \prod_{i=1}^n \tilde{A}_i = (-\complexi )^n \begin{bmatrix}
I & 0 \\
0 & -I
\end{bmatrix} \]

Therefore, we have 
\begin{align*}
\ket{\tilde{\psi}} = &\frac{1}{\sqrt{2 \cdot \poweroftwo}} \left( \sum_{j=1}^{\poweroftwo} \ket{j} \otimes \ket{j} +  \sum_{j=\poweroftwo + 1}^{2 \cdot \poweroftwo} \ket{j} \otimes \ket{j} \right) \\
\prod_{i=1}^n \tilde{A}_i \otimes I \ket{\tilde{\psi}} = &\frac{(-\complexi )^n}{\sqrt{2 \cdot \poweroftwo}} \left( \sum_{j=1}^{\poweroftwo} \ket{j} \otimes \ket{j} -  \sum_{j=\poweroftwo + 1}^{2 \cdot \poweroftwo} \ket{j} \otimes \ket{j} \right) \\
\end{align*}
and so $\ket{\tilde{\psi}}$ is orthogonal to $\tildeajpsi$ in the first case.

Next, we consider the second case. First, we look at the product $\tilde{A}_1^{j_1} \dots \tilde{A}_n^{j_n}$. We claim that there exists an index $i$ such that 
\begin{equation} \label{eq:conjugating_a_product}
\tilde{A}_i \tilde{A}_1^{j_1} \dots \tilde{A}_n^{j_n} \tilde{A}_i = - \tilde{A}_1^{j_1} \dots \tilde{A}_n^{j_n}
\end{equation} 
This is because when there are an even number of terms in the product $\tilde{A}_1^{j_1} \dots \tilde{A}_n^{j_n}$, we can choose $\tilde{A}_i$ to be one of the observables that appears in the product, and if there are an odd number of terms, we can choose $\tilde{A}_i$ to be one of the observables that does not appear in the product. 
Next, we use the relation \eqref{eq:conjugating_a_product} to write
\begin{multline*}
\bra{\tilde{\psi}} \tildeajpsi = \bra{\tilde{\psi}} \left( \tilde{A}_i \otimes \tilde{A}_i^T \right) \left( \tildeajpsi \right) \left( \tilde{A}_i \otimes \tilde{A}_i^T \right) \ket{\tilde{\psi}} = \\ 
= \bra{\tilde{\psi}} (\tilde{A}_i \tilde{A}_1^{j_1} \dots \tilde{A}_n^{j_n} \tilde{A}_i) \otimes (\tilde{A}_i^T)^2 \ket{\tilde{\psi}} = - \bra{\tilde{\psi}} \tildeajpsi
\end{multline*}
and from here we obtain that $\ket{\tilde{\psi}}$ is orthogonal to $\tildeajpsi$ in the second case as well. This completes the proof that the vectors \[ \left\{ \tildeajpsi \: : \: \jinzeroonetothen \right\} \] coming from the canonical strategy are orthonormal. 

\subsection{The Frobenius norm of $T$}\label{subsec:frobenius_norm_of_t}

Here we aim to prove that $\|T\|_F = 1$. This follows from the expression
\[ \|T\|_F = \sqrt{Tr \: T T^\dagger} \]
for the Frobenius norm, combined with the expression 
\[ T = \intertwiningoperatorexpression \]
for $T$, combined with the fact that the vectors 
\[ \left\{ \tildeajpsi \: : \: \jinzeroonetothen \right\} \]
are orthnormal, and combined with the fact that the vectors
\[ \left\{ \ajpsi \: : \: \jinzeroonetothen \right\} \]
all have unit norm. 

To combine all these facts, we use the following lemma:
\begin{lemma}\label{lemma:calculating_frobenius_norm}
Let \[S = \frac{1}{\sqrt{r}}\sum_{i=1}^r \ket{u_i} \bra{v_i}\] where the vectors $\ket{v_i}, \: i=1, \dots, r$ are orthnormal. Then, \[ \|S\|_F = \sqrt{\frac{\sum_{i=1}^r \|u_i\|^2}{r}} \]
\end{lemma}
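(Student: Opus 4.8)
The plan is to compute $\|S\|_F^2 = Tr(SS^\dagger)$ directly, using the orthonormality of the $\ket{v_i}$ to collapse the resulting double sum down to its diagonal. Note that $S$ is in general a rectangular matrix, so $SS^\dagger$ (rather than $S^\dagger S$) is the natural product to form here, and the Frobenius norm is insensitive to this choice.

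First I would expand the product:
\[ SS^\dagger = \frac{1}{r}\left(\sum_{i=1}^r \ket{u_i}\bra{v_i}\right)\left(\sum_{j=1}^r \ket{v_j}\bra{u_j}\right) = \frac{1}{r}\sum_{i=1}^r\sum_{j=1}^r \ket{u_i}\braket{v_i}{v_j}\bra{u_j}. \]
Since the $\ket{v_i}$ are orthonormal, $\braket{v_i}{v_j} = \delta_{ij}$, so all off-diagonal terms vanish and $SS^\dagger = \frac{1}{r}\sum_{i=1}^r \ket{u_i}\bra{u_i}$.

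Next I would take the trace term by term, using $Tr(\ket{u_i}\bra{u_i}) = \braket{u_i}{u_i} = \|u_i\|^2$, to get $Tr(SS^\dagger) = \frac{1}{r}\sum_{i=1}^r \|u_i\|^2$; taking the square root gives the stated formula. There is no real obstacle: the only point worth emphasizing is that the $\ket{u_i}$ need not be mutually orthogonal, but that is irrelevant here because the cross terms are already killed by the orthonormality of the $\ket{v_i}$ on the other factor.

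Finally, this lemma is applied with $r = 2^n$, the $\ket{v_i}$ taken to be the vectors $\tildeajpsi$ (shown to be orthonormal in Subsection \ref{subsec:orthonormal_vectors}), and the $\ket{u_i}$ taken to be the vectors $\ajpsi$, each of which has unit norm because the $A_i$ are $\pm 1$ observables and $\ketpsi$ is a unit vector. With all $\|u_i\|^2 = 1$ the formula yields $\|T\|_F = \sqrt{2^n / 2^n} = 1$, hence in particular $T \neq 0$, as needed for the next subsection.
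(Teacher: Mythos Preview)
Your proof is correct and follows essentially the same approach as the paper: expand $SS^\dagger$, use orthonormality of the $\ket{v_i}$ to collapse the double sum, and take the trace. The additional remarks about $SS^\dagger$ versus $S^\dagger S$ and about the $\ket{u_i}$ not needing to be orthogonal are accurate commentary, and your final paragraph correctly anticipates how the lemma is applied to conclude $\|T\|_F = 1$.
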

\begin{proof} We know that
\[ S S^\dagger = \frac{1}{r} \sum_{i=1}^r \sum_{j=1}^r \ket{u_i} \bra{v_i}\ket{v_j} \bra{u_j} =  \frac{1}{r} \sum_{i=1}^r \ket{u_i} \bra{u_i} \]
and so 
\[ \|S\|_F  = \sqrt{\frac{\sum_{i=1}^r Tr \: \ket{u_i} \bra{u_i} }{r}} = \sqrt{\frac{\sum_{i=1}^r \|u_i\|^2}{r}} \]
\end{proof}

Applying this lemma to the operator $T$, we conclude that $\|T\|_F = 1$. 

\subsection{The expression for $(\ai) T - T (\tai)$} \label{subsec:the_expression_for_tai}

Here we aim to show the identity 
\begin{multline*}
(\ai) T - T (\tai) = \avsumj \Big( A_i \ajpsi  \\
- \ajpsimodi \Big) \tildepsiajinv
\end{multline*}

Consider $T (\tai)$:
\begin{multline*}
T (\tai) = \avsumj \ajpsi \tildepsiajinv (\tai)  \\
= \avsumj \ajpsi \Big( (\tai) (\tildeajpsi) \Big)^\dagger
\end{multline*}

We now use the anti-commutation relations for \tildeaobservables to insert the $\tilde{A}_i$ into the product $\tilde{A}_1^{j_1} \dots \tilde{A}_n^{j_n}$. This possibly incurs a minus sign, depending on the particular $i$ and the particular $\jinzeroonetothen$. We define $\signij$ to be such that 
\[ (\tilde{A}_i)(\tilde{A}_1^{j_1} \dots \tilde{A}_n^{j_n}) = \signij \tilde{A}_1^{j_1} \dots \tilde{A}_i^{j_i \oplus 1} \dots \tilde{A}_n^{j_n} \]

Using this, we get 
\begin{multline*} T (\tai) \\ = \avsumj \ajpsi 
 \Big(\signij \tilde{A}_1^{j_1} \dots \tilde{A}_i^{j_i \oplus 1} \dots \tilde{A}_n^{j_n} \otimes I \ket{\tilde{\psi}} \Big)^\dagger 
\end{multline*}

Now we change the index of summation, and use \[sign(i, j_1, \dots j_i \dots j_n) = sign(i, j_1, \dots j_i \oplus 1 \dots j_n)\] to get 
\begin{multline*}
T (\tai) \\ = \avsumj \ajpsimodi 
\tildepsiajinv
\end{multline*}

From here, the identity 
\begin{multline*}
(\ai) T - T (\tai) = \avsumj \Big( A_i \ajpsi  \\
- \ajpsimodi \Big) \tildepsiajinv
\end{multline*}
follows. 

\subsection{The expression for $(\bkl) T - T (\tbkl)$}\label{subsec:the_expression_for_tbkl}

Here we aim to prove the identity 
\begin{multline*}
(\bkl) T - T (\tbkl) = \avsumj \Bigg( \ajbklpsi  \\
 - \oneoverroottwo \Big( \pm \ajpsimodk  \\
 + \ajpsimodl \Big) \Bigg) 
 \tildepsiajinv
\end{multline*}

The argument is similar to the previous section. We consider $T(\tbkl)$. 
\begin{multline*}
T (\tbkl) 
= \avsumj \ajpsi \Big( (\tajtensori) (\tbkl) \tketpsi \Big)^\dagger  \\
= \avsumj \ajpsi \Big( (\tajtensori) (\frac{ \pm \tilde{A}_k + \tilde{A}_l }{\sqrt{2}} \otimes I) \tketpsi \Big)^\dagger
\end{multline*}
where $+\tilde{A}_k$ is taken if $k<l$ and $-\tilde{A}_k$ is taken if $k>l$. 

Next, we use the anti-commutation relations to insert $\tilde{A}_k$ and $\tilde{A}_l$ into the product $\taj$. We get 
\begin{multline*}
T (\tbkl) = \avsumj \ajpsi \\
\oneoverroottwo \Bigg( \pm \Big( \signjk \tajmodk \otimes I \tketpsi \Big)^\dagger  \\
+ \Big( \signjl \tajmodl \otimes I \tketpsi \Big)^\dagger \Bigg)
\end{multline*}
We separate into two sums and change the index of summation in each and we get
\begin{multline*}
T (\tbkl) = \avsumj \oneoverroottwo \Bigg( \pm \ajpsimodk  \\
+ \ajpsimodl \Bigg) \tildepsiajinv
\end{multline*}

From here, the identity 
\begin{multline*}
(\bkl) T - T (\tbkl) = \avsumj \Bigg( \ajbklpsi  \\
 - \oneoverroottwo \Big( \pm \ajpsimodk  \\
 + \ajpsimodl \Big) \Bigg) \tildepsiajinv
\end{multline*}
follows.

\subsection{The first error bound}\label{subsec:the_first_error_bound}

Here, we aim to show that for all $i \in \{1, \dots n\}$, for all $\jinzeroonetothen$, 
\begin{multline}\label{eq:first_error_bound}
\Big\| A_i \ajpsi - \ajpsimodi \Big\|   \\
\leq (6 + 4 \sqrt{2}) n^2 \sqrt{\epsilon} < 12 n^2 \sqrt{\epsilon}
\end{multline}

The situation is the following: we would like to insert $A_i$ into the product $\aj$ as if the \aobservables were anti-commuting. However, we don't know that \aobservables are anti-commuting; all we know about the \aobservables is that they are part of an $\epsilon$-optimal \chshn strategy. 

The first step is to recognize that even though \aobservables may not be anti-commuting as operators, they nearly anti-commute in their action on the strategy state $\ketpsi$. We prove the following:

\begin{lemma} \label{lemma:approximate_anti_commutation_on_psi}
Let \abpsi be an $\epsilon$-optimal \chshn strategy. Then, 
\[ \sumij \left\| \frac{A_i A_j + A_j A_i}{2} \otimes I \ketpsi \right\|^2 \leq (1+\sqrt{2})^2 n(n-1) \epsilon \]
\end{lemma}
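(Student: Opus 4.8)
The plan is to exploit the third set of relations in Theorem~\ref{thm:chshn_equality_conditions}, which says that an $\epsilon$-optimal \chshn strategy approximately satisfies $A_i \otimes I \ketpsi \approx I \otimes \tfrac{B_{ij}+B_{ji}}{\sqrt 2}\ketpsi$ and $A_j \otimes I \ketpsi \approx I \otimes \tfrac{B_{ij}-B_{ji}}{\sqrt 2}\ketpsi$ for each pair $1 \leq i < j \leq n$, with the sum of the squared errors over all pairs at most $2n(n-1)\epsilon$. Concretely, I would set $\delta^{ij}_i = A_i \otimes I \ketpsi - I \otimes \tfrac{B_{ij}+B_{ji}}{\sqrt 2}\ketpsi$ and $\delta^{ij}_j = A_j \otimes I \ketpsi - I \otimes \tfrac{B_{ij}-B_{ji}}{\sqrt 2}\ketpsi$, so that $\sum_{1 \leq i < j \leq n}\bigl(\|\delta^{ij}_i\|^2 + \|\delta^{ij}_j\|^2\bigr) \leq 2n(n-1)\epsilon$.

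The key observation is that the Bob-side operators $P_{ij} = I \otimes \tfrac{B_{ij}+B_{ji}}{\sqrt 2}$ and $Q_{ij} = I \otimes \tfrac{B_{ij}-B_{ji}}{\sqrt 2}$ satisfy $P_{ij}Q_{ij} + Q_{ij}P_{ij} = I \otimes (B_{ij}^2 - B_{ji}^2) = 0$, since $B_{ij}^2 = B_{ji}^2 = I$; moreover $P_{ij}$ and $Q_{ij}$ commute with $A_i \otimes I$ and $A_j \otimes I$ because they live on the opposite tensor factor. So I would expand
\[ (A_iA_j + A_jA_i) \otimes I \ketpsi = (A_i \otimes I)(A_j \otimes I)\ketpsi + (A_j \otimes I)(A_i \otimes I)\ketpsi, \]
substitute $A_i \otimes I \ketpsi = P_{ij}\ketpsi + \delta^{ij}_i$ and $A_j \otimes I \ketpsi = Q_{ij}\ketpsi + \delta^{ij}_j$, and push each $A$ past the $P$'s and $Q$'s. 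The leading term $(P_{ij}Q_{ij}+Q_{ij}P_{ij})\ketpsi$ vanishes, and what remains is exactly $Q_{ij}\delta^{ij}_i + (A_j \otimes I)\delta^{ij}_i + P_{ij}\delta^{ij}_j + (A_i \otimes I)\delta^{ij}_j$.

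It then remains to estimate norms. Using $\|P_{ij}\|, \|Q_{ij}\| \leq \tfrac{\|B_{ij}\| + \|B_{ji}\|}{\sqrt 2} = \sqrt 2$ and $\|A_i \otimes I\| = \|A_j \otimes I\| = 1$, the triangle inequality gives $\bigl\|\tfrac{A_iA_j + A_jA_i}{2} \otimes I \ketpsi\bigr\| \leq \tfrac{1+\sqrt 2}{2}\bigl(\|\delta^{ij}_i\| + \|\delta^{ij}_j\|\bigr)$. Squaring and applying $(a+b)^2 \leq 2(a^2+b^2)$ yields $\bigl\|\tfrac{A_iA_j + A_jA_i}{2} \otimes I \ketpsi\bigr\|^2 \leq \tfrac{(1+\sqrt 2)^2}{2}\bigl(\|\delta^{ij}_i\|^2 + \|\delta^{ij}_j\|^2\bigr)$; summing over $1 \leq i < j \leq n$ and invoking the bound above gives the claimed $(1+\sqrt 2)^2 n(n-1)\epsilon$.

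I do not expect a real obstacle here: once the cancellation $P_{ij}Q_{ij} + Q_{ij}P_{ij} = 0$ is noticed, the rest is bookkeeping. The two places that need care are tracking which operators commute (Alice's factor versus Bob's factor) while moving the $A$'s through the $P$'s and $Q$'s, and using the sharp inequality $(a+b)^2 \leq 2(a^2+b^2)$ rather than a looser one so that the final constant comes out to exactly $(1+\sqrt 2)^2$.
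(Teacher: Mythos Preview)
Your argument is correct and yields exactly the right constant. It does, however, take a different route from the paper. The paper uses the \emph{second} set of relations in Theorem~\ref{thm:chshn_equality_conditions} (the one with $\tfrac{A_i\pm A_j}{\sqrt 2}$ and $B_{ij},B_{ji}$) together with the algebraic factorization
\[
\frac{A_iA_j+A_jA_i}{2}\otimes I
=\left(\frac{A_i+A_j}{\sqrt 2}\otimes I+I\otimes B_{ij}\right)\left(\frac{A_i+A_j}{\sqrt 2}\otimes I-I\otimes B_{ij}\right),
\]
then bounds the first factor in operator norm by $1+\sqrt 2$ and the second factor on $\ketpsi$ by the Theorem~\ref{thm:chshn_equality_conditions} estimate; the companion identity with $A_i-A_j$ and $B_{ji}$ supplies a second inequality, and averaging the two squared bounds gives the result. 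Your approach is in a sense dual: you use the \emph{third} set of relations, substitute $A_i\otimes I\ketpsi=P_{ij}\ketpsi+\delta^{ij}_i$ and $A_j\otimes I\ketpsi=Q_{ij}\ketpsi+\delta^{ij}_j$, and exploit the exact anticommutation $P_{ij}Q_{ij}+Q_{ij}P_{ij}=0$ on Bob's side. The paper's factorization is a one-line identity and avoids the $(a+b)^2\le 2(a^2+b^2)$ step; your version is a direct expansion that makes the role of the Bob-side anticommutation explicit. Both reach the same constant $(1+\sqrt 2)^2$.
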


\begin{proof}
We recognize that the operators
\[ \frac{A_i + A_j}{\sqrt{2}} \otimes I + I \otimes B_{ij}\]
and
\[ \frac{A_i - A_j}{\sqrt{2}} \otimes I +I \otimes B_{ji}\] 
each have operator norm at most $(1+\sqrt{2})$, by the triangle inequality.  

Next, we see that 
\begin{multline*}
\left\| \frac{A_i A_j + A_j A_i}{2} \otimes I \ketpsi \right\|  \\
= \left\| \left(\frac{A_i + A_j}{\sqrt{2}} \otimes I + I \otimes B_{ij} \right) \left( \frac{A_i + A_j}{\sqrt{2}} \otimes I - I \otimes B_{ij} \right) \ketpsi \right\| \\
\leq (1+\sqrt{2}) \left\| \left( \frac{A_i + A_j}{\sqrt{2}} \otimes I - I \otimes B_{ij} \right) \ketpsi \right\|
\end{multline*}
and similarly, 
\begin{multline*}
\left\| \frac{A_i A_j + A_j A_i}{2} \otimes I \ketpsi \right\|  \\
= \left\| \left(\frac{A_i - A_j}{\sqrt{2}} \otimes I + I \otimes B_{ji} \right) \left( \frac{A_i - A_j}{\sqrt{2}} \otimes I - I \otimes B_{ji} \right) \ketpsi \right\| \\
\leq (1+\sqrt{2}) \left\| \left( \frac{A_i - A_j}{\sqrt{2}} \otimes I - I \otimes B_{ji} \right) \ketpsi \right\|
\end{multline*}

Now we use the relation 
\begin{multline*} 
\sum_{1 \leq i < j \leq n} \Bigg( \left\| \Big( \frac{A_i + A_j}{\sqrt{2}} \otimes I - I \otimes B_{ij} \Big) \ketpsi \right\|^2  
\\ + \left\| \Big( \frac{A_i - A_j}{\sqrt{2}} \otimes I - I \otimes B_{ji} \Big) \ketpsi \right\|^2 \Bigg)  \leq 2 n (n-1) \epsilon 
\end{multline*}
from Theorem \ref{thm:chshn_equality_conditions}. We get 
\begin{multline*}
\sumij 2 \left\| \frac{A_i A_j + A_j A_i}{2} \otimes I \ketpsi \right\|^2 \\
\leq (1+\sqrt{2})^2 \sum_{1 \leq i < j \leq n} \Bigg( \left\| \frac{A_i + A_j}{\sqrt{2}} \otimes I \ketpsi - I \otimes B_{ij} \ketpsi \right\|^2 \\
+ \left\| \frac{A_i - A_j}{\sqrt{2}} \otimes I \ketpsi - I \otimes B_{ji} \ketpsi \right\|^2 \Bigg) \\ \leq  (1+\sqrt{2})^2 2 n (n-1) \epsilon 
\end{multline*}
The lemma is proved. 
\end{proof}

Now we know that \aobservables almost anti-commute in their action on the strategy state $\ketpsi$. This is a step forward, but still not enough for proving the bound \eqref{eq:first_error_bound}. To see why, consider a product like $A_i A_1 A_2 \otimes I \ketpsi$. We want to switch the order of $A_i$ and $A_1$. We know that $A_i$ and $A_1$ nearly anti-commute in their action on $\ketpsi$, but we don't yet know that they nearly anti-commute in their action on $A_2 \otimes I \ketpsi$. 

Fortunately, this difficulty can be circumvented: we know from Theorem \ref{thm:chshn_equality_conditions} that, for example, $A_2 \otimes I \ketpsi \approx I \otimes \oneoverroottwo (B_{12} - B_{21}) \ketpsi$. This helps, because
\begin{multline*}
(A_i A_1 \otimes I) (A_2 \otimes I) \ketpsi \approx  (A_i A_1 \otimes I) (I \otimes \oneoverroottwo (B_{12} - B_{21})) \ketpsi \\
= (I \otimes \oneoverroottwo (B_{12} - B_{21})) (A_i A_1 \otimes I) \ketpsi
\end{multline*}
and now we can switch the order of $A_i$ and $A_1$ in their action on $\ketpsi$. 

The preceding discussion shows that we can use the anti-commutation on $\ketpsi$ (Lemma \ref{lemma:approximate_anti_commutation_on_psi}) to switch the order of a product of the $A_i$'s acting on $\ketpsi$, as long as we can "get some of the $A_i$'s out of the way", by replacing their action with the action of an operator on the $B$ side. 

For reason of keeping the errors of approximation under control, we want the operators on the $B$ side that we use to have operator norm 1. The operators $\oneoverroottwo (B_{ij} \pm B_{ji})$ do not necessarily have operator norm 1, but fortunately this difficulty can also be circumvented. 

The discussion in the previous paragraphs motivates us to prove the following lemma:

\begin{lemma} \label{lemma:best_AB_switch}
Fix $k$. Then, there exists an $l$ such that
\[ \left\| A_k \otimes I \ketpsi - I \otimes \frac{\pm B_{kl} + B_{lk}}{|\pm B_{kl} + B_{lk}|} \ketpsi \right\| \leq (2 \sqrt{2} + 2) \sqrt{n} \sqrt{\epsilon} \] 
where $+B_{kl}$ is taken if $l>k$ and $-B_{kl}$ is taken if $l<k$. The notation 
\[ \frac{\pm B_{kl} + B_{lk}}{|\pm B_{kl} + B_{lk}|} \]
means that we take all eigenvalues of the operator $\pm B_{kl} + B_{lk}$ and normalize the positive ones to $1$, the negative ones to $-1$, and, by convention, the eigenvalue $0$ gets normalized to $1$. 
\end{lemma}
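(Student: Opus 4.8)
The plan is to single out a good index $l$ by averaging, and then compare $A_k \otimes I \ketpsi$ with two intermediate vectors. Write $C = \pm B_{kl} + B_{lk}$, with the sign convention of the statement ($+$ if $l>k$, $-$ if $l<k$). By the triangle inequality, $\bigl\| A_k \otimes I \ketpsi - I \otimes \tfrac{C}{|C|}\ketpsi \bigr\| \le \bigl\| A_k \otimes I \ketpsi - I \otimes \tfrac{C}{\sqrt 2}\ketpsi \bigr\| + \bigl\| I \otimes \bigl(\tfrac{C}{\sqrt 2} - \tfrac{C}{|C|}\bigr)\ketpsi \bigr\|$; I will bound the first term (call it $\delta_l$) using Theorem~\ref{thm:chshn_equality_conditions}, and the second term using a little spectral calculus applied to $C$.

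For $\delta_l$: fixing $k$ and letting $l$ run over $\{1,\dots,n\}\setminus\{k\}$, the numbers $\delta_l^2$ are precisely the summands of the third bullet of Theorem~\ref{thm:chshn_equality_conditions} that involve $A_k$ --- the $(i,j)=(k,l)$ summand when $l>k$ and the $(i,j)=(l,k)$ summand when $l<k$; the sign convention on $C$ is exactly what makes these match. Every summand there is non-negative and the full sum is at most $2n(n-1)\epsilon$, so $\sum_{l \ne k} \delta_l^2 \le 2n(n-1)\epsilon$, and hence some $l$ has $\delta_l^2 \le 2n\epsilon$, i.e. $\delta_l \le \sqrt{2n}\sqrt{\epsilon}$. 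Fix this $l$.

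For the second term, the point is that substituting the sign $C/|C|$ for $C/\sqrt 2$ is cheap precisely because $C^2$ is close to $2I$ in its action on $\ketpsi$. Since $C$ is Hermitian with $\|C\| \le 2$, write $C = \sum_\mu \mu P_\mu$ spectrally; then, as $\sum_\mu P_\mu = I$ and the vectors $(I \otimes P_\mu)\ketpsi$ are pairwise orthogonal, $\bigl\| I \otimes \bigl(\tfrac{C}{\sqrt 2} - \tfrac{C}{|C|}\bigr)\ketpsi \bigr\|^2 = \sum_\mu \bigl(\tfrac{\mu}{\sqrt 2} - \mathrm{sign}(\mu)\bigr)^2 \|(I\otimes P_\mu)\ketpsi\|^2$. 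The elementary identity $\bigl(\tfrac{\mu}{\sqrt 2} - \mathrm{sign}(\mu)\bigr)^2 = \tfrac{(|\mu| - \sqrt 2)^2}{2} = \tfrac{(\mu^2 - 2)^2}{2(|\mu| + \sqrt 2)^2} \le \tfrac{(\mu^2 - 2)^2}{4}$ (valid for all real $\mu$; the convention $0\mapsto 1$ in $C/|C|$ is harmless, as it only affects the $\mu=0$ term, where both sides equal $1$) then gives $\bigl\| I \otimes \bigl(\tfrac{C}{\sqrt 2} - \tfrac{C}{|C|}\bigr)\ketpsi \bigr\| \le \tfrac12 \bigl\| (I \otimes (C^2 - 2I))\ketpsi \bigr\|$. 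To control the right side I rewrite the fact from the previous paragraph as $(I\otimes C)\ketpsi = \sqrt 2\,(A_k \otimes I)\ketpsi + X$ with $\|X\| = \sqrt 2\,\delta_l$, apply $I \otimes C$ to both sides, and use that $A_k \otimes I$ commutes with $I \otimes C$ and that $A_k^2 = I$: this gives $(I \otimes C^2)\ketpsi = 2\ketpsi + \sqrt 2\,(A_k\otimes I)X + (I\otimes C)X$, whence $\bigl\| (I\otimes(C^2-2I))\ketpsi \bigr\| \le \bigl(\sqrt 2 + 2\bigr)\|X\| = (2 + 2\sqrt 2)\delta_l$, using $\|A_k\otimes I\|=1$ and $\|I\otimes C\|\le 2$.

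Assembling everything, $\bigl\| A_k \otimes I \ketpsi - I \otimes \tfrac{C}{|C|}\ketpsi \bigr\| \le \delta_l + (1+\sqrt 2)\delta_l = (2+\sqrt 2)\delta_l \le (2+\sqrt 2)\sqrt{2n}\sqrt{\epsilon} = (2\sqrt 2 + 2)\sqrt{n}\sqrt{\epsilon}$, the claimed bound. I expect the second term to be the main obstacle: one must recognize that passing to the sign of $C$ costs little, and make this quantitative through the pointwise spectral inequality (which reduces the error to $\|(I\otimes(C^2-2I))\ketpsi\|$) together with the trick of inserting the approximate relation $(I\otimes C)\ketpsi \approx \sqrt 2\,(A_k\otimes I)\ketpsi$ twice --- legitimate only because the $A$-side and $B$-side operators commute. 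Notably, along this route Lemma~\ref{lemma:approximate_anti_commutation_on_psi} is not needed for the present lemma; it enters later, in the main error estimates.
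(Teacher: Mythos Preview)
Your proof is correct and follows essentially the same two-step approach as the paper: first pick $l$ minimizing $\delta_l$ via Theorem~\ref{thm:chshn_equality_conditions}, then bound the cost of passing from $C/\sqrt{2}$ to $C/|C|$ by $\tfrac{1}{2}\|(I\otimes(C^2-2I))\ketpsi\|$ (the paper packages this spectral inequality as a separate Lemma~\ref{lemma:the_business_with_normalizing}). The only cosmetic difference is that the paper bounds $\|(I\otimes(C^2-2I))\ketpsi\|$ via the one-line factorization $I\otimes(I - C^2/2) = (A_k\otimes I + I\otimes C/\sqrt{2})(A_k\otimes I - I\otimes C/\sqrt{2})$, whereas you unroll this into an iteration of the approximate relation; both routes yield the identical bound $(1+\sqrt{2})\delta_l$ for the second term.
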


\begin{proof}
The proof proceeds in two steps: first, we approximate $A_k \otimes I \ketpsi$ by $ I \otimes \frac{\pm B_{kl} + B_{lk}}{\sqrt{2}} \ketpsi$ and then we approximate $ I \otimes \frac{\pm B_{kl} + B_{lk}}{\sqrt{2}} \ketpsi$ by $I \otimes \frac{\pm B_{kl} + B_{lk}}{|\pm B_{kl} + B_{lk}|} \ketpsi$. 

We prove the first step. We take the relation 
\begin{multline*}
\sum_{1 \leq i < j \leq n} \Bigg( \left\| A_i \otimes I \ketpsi - I \otimes \frac{B_{ij} + B_{ji}}{\sqrt{2}} \ketpsi \right\|^2 \\
+ \left\| A_j \otimes I \ketpsi - I \otimes \frac{B_{ij} - B_{ji}}{\sqrt{2}} \ketpsi \right\|^2 \Bigg) \leq 2 n (n-1) \epsilon 
\end{multline*}
from Theorem \ref{thm:chshn_equality_conditions}. We focus only on those terms of the sum that contain $A_k$ and we get
\begin{multline*}
\sum_{j=k+1}^n \left\|A_k \otimes I \ketpsi - I \otimes \frac{B_{kj}+B_{jk}}{\sqrt{2}} \ketpsi \right\|^2 \\
+ \sum_{j=1}^{k-1} \left\|A_k \otimes I \ketpsi - I \otimes \frac{- B_{kj}+B_{jk}}{\sqrt{2}} \ketpsi \right\|^2 \leq 2 n (n-1) \epsilon
\end{multline*}
Pick the smallest of these $(n-1)$ terms. It satisfies
\begin{equation} \label{eq:bound_for_first_step_of_AB_switch}
\left\|A_k \otimes I \ketpsi - I \otimes \frac{\pm B_{kl}+B_{lk}}{\sqrt{2}} \ketpsi \right\|^2 \leq 2 n \epsilon
\end{equation}
This is how we approximate $A_k \otimes I \ketpsi$ by $ I \otimes \frac{\pm B_{kl} + B_{lk}}{\sqrt{2}} \ketpsi$. 

Next we focus on the second step. By Lemma \ref{lemma:the_business_with_normalizing} which we will prove below, 
\begin{equation} \label{eq:bound_for_second_step_of_AB_switch}
\left\| I \otimes \frac{\pm B_{kj}+B_{jk}}{\sqrt{2}} \ketpsi - I \otimes \frac{\pm B_{kl} + B_{lk}}{|\pm B_{kl} + B_{lk}|} \ketpsi \right\| \leq \left\| I \otimes \frac{B_{kl} B_{lk} + B_{lk} B_{kl}}{2} \ketpsi \right\|
\end{equation} 
so it suffices to give a bound on \[\left\| I \otimes \frac{B_{kl} B_{lk} + B_{lk} B_{kl}}{2} \ketpsi \right\|\] 

For the bound on  $\left\| I \otimes \frac{B_{kl} B_{lk} + B_{lk} B_{kl}}{2} \ketpsi \right\|$, we observe that the operator 
\[ A_k \otimes I + I \otimes \frac{\pm B_{kl}+B_{lk}}{\sqrt{2}}  \]
has operator norm at most $(1+\sqrt{2})$, and so 
\begin{multline*}
\left\| I \otimes \frac{B_{kl} B_{lk} + B_{lk} B_{kl}}{2} \ketpsi \right\| \\
= \left\| \left( A_k \otimes I + I \otimes \frac{\pm B_{kl}+B_{lk}}{\sqrt{2}} \right) \left( A_k \otimes I - I \otimes \frac{\pm B_{kl}+B_{lk}}{\sqrt{2}} \right) \ketpsi \right\|  \\
\leq (1+ \sqrt{2}) \left\|A_k \otimes I \ketpsi - I \otimes \frac{\pm B_{kl}+B_{lk}}{\sqrt{2}} \ketpsi \right\| \leq (1+ \sqrt{2}) \sqrt{2 n \epsilon}
\end{multline*}
Combining this with inequalities \eqref{eq:bound_for_first_step_of_AB_switch} and \eqref{eq:bound_for_second_step_of_AB_switch}, we get that 
\[ \left\| A_k \otimes I \ketpsi - I \otimes \frac{\pm B_{kl} + B_{lk}}{|\pm B_{kl} + B_{lk}|} \ketpsi \right\| \leq (2 \sqrt{2} + 2) \sqrt{n} \sqrt{\epsilon} \] 
as needed. The lemma is proved. 
\end{proof}

Next, we prove the missing link in the proof of Lemma \ref{lemma:best_AB_switch}, which has to do with operators of the form $\frac{R+S}{\sqrt{2}}$ and $\frac{R+S}{|R+S|}$ when $R$, $S$ are $\pm 1$ observables.  

\begin{lemma}\label{lemma:the_business_with_normalizing}
Let $R, S$ be two $\pm 1$ observables on $\C^d$. Then, 
\begin{enumerate}
\item The following operator identity holds:
\begin{multline*} \left( \frac{R+S}{\sqrt{2}} - \frac{R+S}{|R+S|} \right)^2  \\
= \left( \frac{RS+SR}{2} \right) \left(  2 I + \frac{RS+SR}{2} + 2 \sqrt{I + \frac{RS+SR}{2}}  \right)^{-1} \left( \frac{RS+SR}{2} \right) 
\end{multline*}
\item The operator
\[ \left( \frac{RS+SR}{2} \right)^2 - \left( \frac{R+S}{\sqrt{2}} - \frac{R+S}{|R+S|} \right)^2 \]
is positive semi-definite. 
\item For any vector $\ket{v}$, 
\[ \left\| \frac{R+S}{\sqrt{2}} \ket{v} - \frac{R+S}{|R+S|} \ket{v} \right\| \leq \left\| \frac{RS+SR}{2} \ket{v} \right\| \]
\end{enumerate}
\end{lemma}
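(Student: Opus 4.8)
The plan is to reduce all three statements to the functional calculus of the single Hermitian operator $M = R + S$. Since $R$ and $S$ are $\pm 1$ observables, $M$ is Hermitian and $M^2 = R^2 + RS + SR + S^2 = 2I + 2C$ with $C = \tfrac{RS+SR}{2}$; here $C$ is Hermitian (it is the real part $\tfrac{RS + (RS)^\dagger}{2}$ of the unitary $RS$), and $I + C = \tfrac12 M^2 \succeq 0$, so $\sqrt{I+C}$ is well defined and $|R+S| = |M| = \sqrt{M^2} = \sqrt 2\,\sqrt{I+C}$. Both $\tfrac{R+S}{\sqrt 2} = \tfrac{1}{\sqrt 2} M$ and $\tfrac{R+S}{|R+S|} = \operatorname{sign}(M)$ (with the convention $\operatorname{sign}(0)=1$ from the statement) are functions of $M$, hence so is the Hermitian operator $X := \tfrac{R+S}{\sqrt 2} - \tfrac{R+S}{|R+S|} = f(M)$, where $f(\mu) = \tfrac{\mu}{\sqrt 2} - \operatorname{sign}(\mu)$.

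For part 1, I would first note that $X^2 = f(M)^2 = g(M)$ with $g(\mu) = f(\mu)^2 = \tfrac{\mu^2}{2} - \sqrt 2\,|\mu| + 1$, which depends on $\mu$ only through $\mu^2$: putting $t = \mu^2/2 \ge 0$ gives $g = t - 2\sqrt t + 1 = (\sqrt t - 1)^2$. Since $M^2/2 = I + C$, functional calculus turns this into the operator identity $X^2 = (\sqrt{I+C} - I)^2$. It then suffices to verify, as a scalar identity in the eigenvalues $c \ge -1$ of $C$, that $(\sqrt{1+c} - 1)^2 = \tfrac{c^2}{2 + c + 2\sqrt{1+c}}$: this follows from $(\sqrt{1+c}-1)^2(\sqrt{1+c}+1)^2 = \big((1+c)-1\big)^2 = c^2$ together with $(\sqrt{1+c}+1)^2 = 2 + c + 2\sqrt{1+c}$, the denominator being $\ge 1$ so that $2I + C + 2\sqrt{I+C} = (\sqrt{I+C}+I)^2$ is invertible. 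Hence $X^2 = C\,(2I + C + 2\sqrt{I+C})^{-1}\, C$, which is the asserted identity (after substituting $C = \tfrac{RS+SR}{2}$).

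For part 2, observe that $(\sqrt{I+C}+I)^2 = 2I + C + 2\sqrt{I+C} \succeq I$, so $0 \preceq (2I+C+2\sqrt{I+C})^{-1} \preceq I$ and $Y := I - (2I+C+2\sqrt{I+C})^{-1} \succeq 0$. Using part 1, $\left(\tfrac{RS+SR}{2}\right)^2 - X^2 = C^2 - C\,(2I+C+2\sqrt{I+C})^{-1}\,C = C\,Y\,C$, and for any $\ket v$ we have $\bra v C Y C \ket v = \bra{Cv} Y \ket{Cv} \ge 0$ since $C = C^\dagger$ and $Y \succeq 0$; thus $C Y C \succeq 0$, proving part 2. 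Part 3 then follows at once: $X$ is Hermitian, so $\left\| \tfrac{R+S}{\sqrt 2}\ket v - \tfrac{R+S}{|R+S|}\ket v \right\|^2 = \bra v X^2 \ket v \le \bra v \left(\tfrac{RS+SR}{2}\right)^2 \ket v = \left\| \tfrac{RS+SR}{2}\ket v \right\|^2$ by part 2, and taking square roots gives the claim.

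The only delicate point is the bookkeeping around $\ker(R+S) = \ker M$, where the convention $\operatorname{sign}(0)=1$ enters. On that subspace $C$ has eigenvalue $c = -1$, so $\sqrt{1+c} = 0$; one should check that the scalar identity of part 1 still holds there (both sides equal $1$) and that $2 + c + 2\sqrt{1+c} = 1 \ne 0$, so no division by zero occurs and $2I + C + 2\sqrt{I+C}$ is genuinely invertible. Apart from this, each operator statement reduces to a one-variable inequality or identity in the eigenvalues of $C$, so there is no real obstacle beyond careful use of the functional calculus.
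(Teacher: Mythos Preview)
Your proof is correct and follows essentially the same route as the paper: both arguments diagonalize $R+S$ (equivalently, use functional calculus for the commuting family generated by $R+S$) and reduce each operator identity to a scalar computation on eigenvalues, then deduce part~2 from the observation that $(2I+C+2\sqrt{I+C})^{-1}\preceq I$ and part~3 immediately from part~2. The only cosmetic difference is that you parameterize by the eigenvalue $c$ of $C=\tfrac{RS+SR}{2}$ and pass through the intermediate identity $X^2=(\sqrt{I+C}-I)^2$, whereas the paper parameterizes directly by the eigenvalue $\lambda$ of $R+S$; the algebra is the same either way.
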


\begin{proof}
We first prove the operator identity. 

We break up $\C^d$ into eigenspaces for the self-adjoint operator $R+S$. Since $RS+SR = (R+S)^2 - 2I$, these are also eigenspaces for the operator $RS+SR$, and so also eigenspaces for the operator
\[ \left( \frac{RS+SR}{2} \right) \left(  2 I + \frac{RS+SR}{2} + 2 \sqrt{I + \frac{RS+SR}{2}}  \right)^{-1} \left( \frac{RS+SR}{2} \right) \]
We will prove that the operator identity holds on each of the aforementioned eigenspaces. 

Consider an eigenspace where $R+S$ has eigenvalue $\lambda$. 

On this eigenspace, the operator 
\[\left( \frac{R+S}{\sqrt{2}} - \frac{R+S}{|R+S|} \right)^2\]
has eigenvalue $\left( \frac{(sign \lambda) \lambda}{\sqrt{2}} - 1 \right)^2$; this holds in all the three cases $\lambda>0, \: \lambda < 0, \: \lambda=0$. 

The eigenvalue of $\left( \frac{RS+SR}{2} \right)$ on this eigenspace is $\frac{\lambda^2 - 2}{2}$.

The eigenvalue of 
\[ \left( \frac{RS+SR}{2} \right) \left(  2 I + \frac{RS+SR}{2} + 2 \sqrt{I + \frac{RS+SR}{2}}  \right)^{-1} \left( \frac{RS+SR}{2} \right) \]
is therfore
\[ \left( \frac{\lambda^2 - 2}{2} \right)^2 \frac{1}{2+\frac{\lambda^2 - 2}{2} + 2 \sqrt{1+\frac{\lambda^2 - 2}{2}}} \]

Next, we observe that 
\begin{multline*}
2+\frac{\lambda^2 - 2}{2} + 2 \sqrt{1+\frac{\lambda^2 - 2}{2}} = 1+ \frac{\lambda^2}{2} + 2 \sqrt{\frac{\lambda^2}{2}} = \left( \frac{(sign \lambda) \lambda}{\sqrt{2}} + 1 \right)^2
\end{multline*}
and that 
\[ \left( \frac{\lambda^2 - 2}{2} \right)^2 = \left( \frac{(sign \lambda) \lambda}{\sqrt{2}} - 1 \right)^2 \left( \frac{(sign \lambda) \lambda}{\sqrt{2}} + 1 \right)^2 \]
and therefore,
\[ \left( \frac{\lambda^2 - 2}{2} \right)^2 \frac{1}{2+\frac{\lambda^2 - 2}{2} + 2 \sqrt{1+\frac{\lambda^2 - 2}{2}}} = \left( \frac{(sign \lambda) \lambda}{\sqrt{2}} - 1 \right)^2 \]

Next we use the above to conclude that the operators
\[\left( \frac{R+S}{\sqrt{2}} - \frac{R+S}{|R+S|} \right)^2\]
and 
\[ \left( \frac{RS+SR}{2} \right) \left(  2 I + \frac{RS+SR}{2} + 2 \sqrt{I + \frac{RS+SR}{2}}  \right)^{-1} \left( \frac{RS+SR}{2} \right) \]
have the same eigenvalue on this eigenspace. 

This argument holds for any eigenspace, and so the operator identity 
\begin{multline*} \left( \frac{R+S}{\sqrt{2}} - \frac{R+S}{|R+S|} \right)^2  \\
= \left( \frac{RS+SR}{2} \right) \left(  2 I + \frac{RS+SR}{2} + 2 \sqrt{I + \frac{RS+SR}{2}}  \right)^{-1} \left( \frac{RS+SR}{2} \right) 
\end{multline*}
holds. 

Next we prove the second part. We can see from the argument above that the operator 
\[ \left(  2 I + \frac{RS+SR}{2} + 2 \sqrt{I + \frac{RS+SR}{2}}  \right)^{-1} \] 
has eigenvalues of the form
\[ \frac{1}{ \left( \frac{(sign \lambda) \lambda}{\sqrt{2}} + 1 \right)^2} \]
and they are all in $(0,1]$. Therefore, 
\begin{multline*} \left( \frac{R+S}{\sqrt{2}} - \frac{R+S}{|R+S|} \right)^2  \\
= \left( \frac{RS+SR}{2} \right) \left(  2 I + \frac{RS+SR}{2} + 2 \sqrt{I + \frac{RS+SR}{2}}  \right)^{-1} \left( \frac{RS+SR}{2} \right)  \\
\preceq \left( \frac{RS+SR}{2} \right)^2
\end{multline*}

Finally, we observe that the third part follows directly from the second. The lemma is proved. 
\end{proof}

Recall that the goal of this section is to prove
\begin{multline*}
\Big\| A_i \ajpsi - \ajpsimodi \Big\|  \\
\leq (6 + 4 \sqrt{2}) n^2 \sqrt{\epsilon} < 12 n^2 \sqrt{\epsilon}
\end{multline*}
and the overall strategy is to insert $A_i$ into the product $\aj$ as if the \aobservables were anti-commuting.

The results of the lemmas above have prepared the tools necessary for this goal. Lemma \ref{lemma:approximate_anti_commutation_on_psi} tells us that 
\[ A_k A_l \otimes I \ketpsi \approx - A_l A_k \otimes \ketpsi \]
with the error of approximation being at most $(2 \sqrt{2} + 2) n \sqrt{\epsilon}$. We call this apporoximation step an anticommutation switch.  Lemma \ref{lemma:best_AB_switch} tells us that 
\[ A_k \otimes I \ketpsi \approx I \otimes \frac{\pm B_{kl} + B_{lk}}{|\pm B_{kl} + B_{lk}|} \ketpsi \]
where $\frac{\pm B_{kl} + B_{lk}}{|\pm B_{kl} + B_{lk}|}$ is a suitable $\pm 1$ observable acting on the $B$ side, and the error of approximation is at most $(2 \sqrt{2} + 2) \sqrt{n} \sqrt{\epsilon}$. We call this approximation step an AB-switch. 

The idea is to concatenate a number of these approximation steps to get the bound \eqref{eq:first_error_bound}. We present a procedure that goes from \[A_i \ajpsi\] to \[\ajpsimodi\] using at most $n$ anti-commutator switches and $2n$ AB-switches. The procedure is the following:
\begin{enumerate}
\item Start with $A_i \ajpsi$.
\item Switch all elements of the product $\aj$ to the $B$ side using the AB-switches.
\item Repeat
\begin{enumerate}
\item Switch the last observable on the $B$ side back to the $A$ side
\item Anti-commute $A_i$ and the newly switched observable
\end{enumerate}
until $A_i$ comes to its proper position. 
\item Switch the observables still remaining on the $B$ side back to the $A$ side. 
\end{enumerate}

The total approximation error of this procedure is at most
\[ n (2 \sqrt{2} + 2) n \sqrt{\epsilon} + (2n) (2 \sqrt{2} + 2) \sqrt{n} \sqrt{\epsilon} \leq (6 + 4 \sqrt{2}) n^2 \sqrt{\epsilon} < 12 n^2 \sqrt{\epsilon} \]
The bound 
\begin{multline*}
\Big\| A_i \ajpsi - \ajpsimodi \Big\|  \\
\leq (6 + 4 \sqrt{2}) n^2 \sqrt{\epsilon} < 12 n^2 \sqrt{\epsilon}
\end{multline*}
is proved.

\subsection{The second error bound} \label{subsec:the_second_error_bound}
The goal of this subsection is to prove that 
\begin{multline*}
\Bigg\| \ajbklpsi 
 - \oneoverroottwo \Big( \pm \ajpsimodk  \\
 + \ajpsimodl \Big) \Bigg\|  \\
 \leq \left( \frac{17}{2} + 6 \sqrt{2} \right) n^2 \sqrt{\epsilon} < 17 n^2 \sqrt{\epsilon}
\end{multline*}

The argument is similar to the previous subsection. 

By the triangle inequality, we have 
\begin{multline*}
\Bigg\| \ajbklpsi  - \oneoverroottwo \Big( \pm \ajpsimodk  \\
+ \ajpsimodl \Big) \Bigg\|  \\
\leq \left\| \ajbklpsi - \aj \frac{\pm A_k +A_l}{\sqrt{2}} \otimes I \ketpsi \right\|  \\
+ \oneoverroottwo \left\| \aj A_k \otimes I \ketpsi - \ajpsimodk \right\|  \\
+ \oneoverroottwo \left\| \aj A_l \otimes I \ketpsi - \ajpsimodl \right\|
\end{multline*}

For the first term we have the following:
\begin{multline*}
\left\| \ajbklpsi - \aj \frac{\pm A_k +A_l}{\sqrt{2}} \otimes I \ketpsi \right\|  \\
= \left\| I \otimes B_{kl} \ketpsi - \frac{\pm A_k +A_l}{\sqrt{2}} \otimes I \ketpsi \right\| \leq \sqrt{2 n (n-1) \epsilon}
\end{multline*} 
where we have used the inequalities in Theorem \ref{thm:chshn_equality_conditions}. 

For the second term, we claim that 
\begin{multline*} \left\| \aj A_k \otimes I \ketpsi - \ajpsimodk \right\| \\
\leq (6 + 4 \sqrt{2}) n^2 \sqrt{\epsilon} 
\end{multline*}
The argument is similar to the argument in the previous section: we present a procedure that goes from \[\aj A_k \otimes I \ketpsi\] to \[\ajpsimodk\] using at most $n$ anti-commutator switches and $2n$ AB-switches. The procedure is the following:
\begin{enumerate}
\item Start with $\aj A_k \otimes I \ketpsi$.
\item Repeat
\begin{enumerate}
\item Anti-commute $A_k$ and the next to last observable on the $A$ side
\item Move the newly switched observable to the $B$ side
\end{enumerate}
until $A_k$ comes to its proper position. 
\item Switch the observables still remaining on the $B$ side back to the $A$ side. 
\end{enumerate}

The third term is analyzed in the same manner and we get 
\begin{multline*} \left\| \aj A_l \otimes I \ketpsi - \ajpsimodl \right\| \\
\leq (6 + 4 \sqrt{2}) n^2 \sqrt{\epsilon} 
\end{multline*}

Combining all the preceding bounds, we get 
\begin{multline*}
\Bigg\| \ajbklpsi - \oneoverroottwo \Big( \pm \ajpsimodk  \\
+ \ajpsimodl \Big) \Bigg\|  \\
\leq \sqrt{2 n (n-1) \epsilon} + \oneoverroottwo (6 + 4 \sqrt{2}) n^2 \sqrt{\epsilon} 
+ \oneoverroottwo (6 + 4 \sqrt{2}) n^2 \sqrt{\epsilon} \\
\leq \left( \frac{17}{2} + 6 \sqrt{2} \right) n^2 \sqrt{\epsilon} < 17 n^2 \sqrt{\epsilon}
\end{multline*}
as needed. 

\subsection{Putting everything together}
The aim of this subsection is to put all the previous steps together and show that 
\begin{align*}
\forall i \quad \|(A_i \otimes I) T - T (\tilde{A}_i \otimes I) \|_F &< 12 n^2 \sqrt{\epsilon} \|T\|_F \\
\forall j \neq k \quad  \| (I \otimes B_{jk}) T - T (\tbjk) \|_F &< 17 n^2 \sqrt{\epsilon} \|T\|_F
\end{align*}
thereby completing the proof of Theorem \ref{thm:chshn_near_optimal_strategies}.

We start with the first inequality. We know from subsection \ref{subsec:the_expression_for_tai} that 
\begin{multline*}
(\ai) T - T (\tai) = \avsumj \Big( A_i \ajpsi \\
- \ajpsimodi \Big) \tildepsiajinv
\end{multline*}

We know from subsection \ref{subsec:orthonormal_vectors} that the vectors 
\[ \left\{ \tildeajpsi \: : \: \jinzeroonetothen \right\} \]
are orthonormal. 

We also know from subsection \ref{subsec:the_first_error_bound} that for all $i$, for all $\jinzeroonetothen$
\begin{multline*}
\Big\| A_i \ajpsi - \ajpsimodi \Big\|   \\
\leq (6 + 4 \sqrt{2}) n^2 \sqrt{\epsilon} < 12 n^2 \sqrt{\epsilon}
\end{multline*}

We combine these facts using Lemma \ref{lemma:calculating_frobenius_norm} and we get that for all $i$, 
\[ \|(A_i \otimes I) T - T (\tilde{A}_i \otimes I) \|_F < 12 n^2 \sqrt{\epsilon} = 12 n^2 \sqrt{\epsilon} \|T\|_F \]
where in the last step we have used the fact that $T$ was chosen so that $\|T\|_F = 1$ (subsection \ref{subsec:frobenius_norm_of_t}). 

In a similar manner, we take the results of subsections \ref{subsec:orthonormal_vectors}, \ref{subsec:frobenius_norm_of_t}, \ref{subsec:the_expression_for_tbkl}, and \ref{subsec:the_second_error_bound} and apply Lemma \ref{lemma:calculating_frobenius_norm} and get that for all $j \neq k \in \{1, \dots n\}$ 
\[ \| (I \otimes B_{jk}) T - T (\tbjk) \|_F < 17 n^2 \sqrt{\epsilon} \|T\|_F \]
The proof of Theorem \ref{thm:chshn_near_optimal_strategies} is complete. 

\section{Conclusion and open problems}\label{ch:open_problems}

In this paper, we first derived a general result about non-local XOR games: for every non-local XOR game, there exists a set of relations such that a quantum strategy is optimal if and only if it satisfies the relations and a quantum strategy is nearly-optimal if and only if it approximately satisfies the relations. Then, we focused on the \chshn XOR games, and derived the structure of their optimal and nearly-optimal quantum strategies. 

One possible direction for future work is whether structure results like the one for \chshn near-optimal quantum strategies can be proved for other non-local games. The \chshn games have a very regular structure, and the arguments above make heavy use of this structure; however, it may be possible to construct an argument of this form, or another form altogether, for other XOR games with less regular structure. 

Another possible direction for future work is whether the \chshn games can be used in the context of quantum information processing with untrusted black-box devices. The CHSH game, the first member of the \chshn family, has already been used in protocols for doing information processing with untrusted devices. Whether all the \chshn games can be used, and which of the \chshn games gives protocols with the best parameters, are two questions that are still open.

\section*{Acknowledgements}

The results of this paper first appear in my PhD thesis submitted to the Department of Mathematics at Massachusetts Institute of Technology. The material is  used here with permission from MIT. 

I would like to thank my thesis advisor Prof. Peter Shor for his unconditional support through the years. Prof. Shor gave me the freedom I needed to explore, and to find my own way. He was also generous with his time, and patiently listened to my mathematical arguments and ideas. 
 
I would like to thank Thomas Vidick for bringing to my attention the problem of self-testing and entanglement rigidity. Thomas has always been friendly, enthusiastic, and open to discussion. The conversations with him have been a source of many great ideas.

\end{document}